\documentclass[peerreview]{IEEEtran}
\usepackage{graphicx, amssymb, latexsym, amsfonts, amsmath, lscape,  amscd, multicol, multirow, diagbox, amsthm, color, epsfig, mathrsfs, tikz, enumerate, url, hyperref}

\usepackage[noadjust]{cite}

\usepackage{realboxes}
\usepackage{diagbox}

\usepackage{todonotes}

\newtheorem{theorem}{Theorem}

\newtheorem{corollary}[theorem]{Corollary}

\newtheorem{lemma}[theorem]{Lemma}
\newtheorem{proposition}[theorem]{Proposition}

\theoremstyle{definition}
\newtheorem{defn}[theorem]{Definition}
\newtheorem{example}[theorem]{Example}
\newtheorem{observation}[theorem]{Observation}

\newtheorem{remark}[theorem]{Remark}

\renewcommand\vec{\mathbf}

\begin{document}

\title{Unique Insertion Error
Patterns in Levenshtein’s Reconstruction Problem\thanks{The authors were funded in part by the Research Council of Finland grants 338797 and 358718. A conference version of this paper appeared in 2025 IEEE Information Theory Workshop (ITW2025) \cite{pavanITWInsertion}.}
}

\author{\textbf{Ville Junnila}, \textbf{Tero Laihonen}, \textbf{Tuomo Lehtil{\"a}} and \textbf{Pavan Padavu Devaraj}\\
Department of Mathematics and Statistics\\
University of Turku, FI-20014 Turku, Finland\\
 Email: \{viljun, terolai, tualeh\}@utu.fi, pavanpdevaraj@gmail.com
}

\date{}

\maketitle

\begin{abstract}
    
		Levenshtein's sequence reconstruction model plays an essential role in information retrieval of advanced memory systems, such as the DNA-based storage systems. In the Levenshtein's model, a word $\vec{x}\in \mathbb{Z}_q^n$ is transmitted through $N$ noisy channels, and the goal is to recover, using the output words produced by these channels, the original word $\vec{x}$ unambiguously, or with small uncertainty $\mathcal{L}$. Errors occurring in the channels usually involve substitutions, insertions and deletions. In this work, we focus on insertion errors.   
        One of the main questions in this context is determining the minimum number of channels $N$ required to recover the transmitted word $\vec{x}$. The original formulation of Levenshtein's sequence reconstruction problem requires that all the output words from the channels are distinct. However, channels may produce the same output word even if different insertion errors occur in them. In this paper, we investigate two reconstruction models  where the channels are allowed to produce identical output words even though different insertion errors occur in the channels. These two models, called \emph{the multiset model} and \emph{non-multiset model}, generalize the Levenshtein's model. Let us denote the minimum number of channels  required to \emph{unambiguously} recover the transmitted word $\vec{x}\in \mathbb{Z}_q^n$ by $N_q^m(n,t)+1$ in the multiset model and  $N_q^{nm}(n,t)+1$ in the non-multiset model, where $t$ denotes the exact number of insertions occurring in a channel. We determine $N_q^{m}(n,1)$ and $N_q^{nm}(n,1)$ for all $n$ and $q$, and show the somewhat surprising fact that $N_q^m(n,1)=N_q^{nm}(n,1)$. Moreover, we provide a full characterization of the words 
        that attain this value. 
        We also give a general lower bound on $N_q^m(n,t)$ for $t\ge 1$ and a recursive upper bound.  
        For $t=1$, we consider a construction from codes $C\subseteq \mathbb{Z}_q^n$ to codes $C'\subseteq \mathbb{Z}_q^{n+2}$ such that the number of channels required to determine the transmitted word $\vec{x}\in C'$ is small. This construction is shown to be optimal for certain parameters.

        \smallskip
        \textbf{Keywords: }Levenshtein's Sequence Reconstruction, Information Retrieval, Insertion Errors, Different Error Patterns, DNA Storage.
	\end{abstract}

	\section{Introduction}
    
	\emph{Levenshtein's sequence reconstruction problem}, introduced in \cite{Levenshtein}, has gained renewed attention due to its relevance in information retrieval for advanced storage technologies, such as DNA based ones~\cite{horovitz2018reconstruction, yaakobi2016constructions}. In the information retrieval process of DNA data storage (see \cite{bornholt2016dna, church2012next, grass2015robust, yazdi2015dna}), numerous copies of the stored information are obtained, each typically affected by substitution, deletion, and insertion errors. 
    The goal is to recover the original information using these erroneous copies. For  results on this problem, see, for example, \cite{Levenshtein, horovitz2018reconstruction, levenshtein2005reconstruction, gabrys2018sequence,  Maria_Abu-Sini, Uusi_Maria_Abu-Sini, Abu-Sini2024, PHAM2025105980, sabary2020, srinivasavaradhan2018, srinivasavaradhan2019, barlev2021}.  
	
	 Let us first introduce some notation. We represent the set $\{a, a + 1, \dots, b\}$ by $[a, b]$ for integers $a \le b$. Let $\mathbb{Z}_q= \{0, 1, \dots, q - 1\}$ denote the ring of $q \ge 2$ elements, and $\mathbb{Z}_q^n = \mathbb{Z}_q \times \dots \times \mathbb{Z}_q\ (n \text{ times})$. For a word $\vec{x} = x_1\dots x_n\in \mathbb{Z}_q^n$, we denote by $\vec{x}_{[a,b]}$ the shortened subword $x_a x_{a + 1}\dots x_b \in \mathbb{Z}_q^{b - a + 1}$ of $\vec{x}$. The all-zero word $00 \dots 0 \in \mathbb{Z}_q^n$ is denoted by $\vec{0}$ and the empty word by $\varepsilon$. The \emph{Hamming weight} $w(\vec{x})$ of $\vec{x} \in \mathbb{Z}_q^n$ is the number of non-zero coordinates of $\vec{x}$. A non-empty subset of $\mathbb{Z}_q^n$ is called a \emph{code} and its elements are called \emph{codewords}. For a  set $A$, the notation $|A|$ is the usual cardinality of the set.   
     Given a \emph{multiset} $A$, that is, a collection of elements in which elements can be repeated multiple times, let $\mathrm{set}(A)$ denote the set of distinct words in $A$ and let $m(\vec{a}, A)$ denote the multiplicity of $\vec{a}$ in the multiset $A$.
     If $A$ is a multiset, then by $|A|$ we denote the total cardinality of the multiset, that is, the sum of multiplicities in $A$ of the different elements of $\mathrm{set}(A)$. 
     
	Let $C \subseteq \mathbb{Z}_q^n$ be a code. In Levenshtein's sequence reconstruction problem, one transmits a word $\vec{x}\in C$ through $N$ channels. The channels introduce some errors which can be substitutions, deletions, and insertions to the word $\vec{x}$ and one obtains the set of output words $Y = \{\vec{y}_1, \vec{y}_2, \dots, \vec{y}_N\}$ from the channel (see Fig.~\ref{LevenshteinFig}). The output of each channel is assumed to be \emph{different} from each other, and the number of errors that may occur in a channel is bounded by a parameter $t$. Let $T(Y) \subseteq C$ be the set of codewords that can give the set of output words $Y$ when transmitted through the $N$ channels. Clearly, $\vec{x}\in T(Y)$. Given $C$, the maximum possible size of $T(Y)$ over all possible transmitted words $\vec{x}\in C$ is denoted by $\mathcal{L}$. If $\mathcal{L}=1$, the transmitted word can be uniquely determined based on $Y$. This formulation of the problem is called the \emph{Levenshtein's (traditional) model}.

    \begin{figure}
    \centering
    \resizebox {.5\textwidth} {!} {
    \begin{tikzpicture}[>=latex, node distance=2cm]

\node[draw, rectangle] (ch1) {\Large{channel 1}};
\node[draw, rectangle, below=3mm of ch1] (ch2) {\Large{channel 2}};
\node[draw=none, below=1mm of ch2] (dots) {\Large{$\vdots$}};
\node[draw=none, below=-0.5mm of ch2] (origin) {};
\node[draw, rectangle, below=1mm of dots] (chN) {\Large{channel $N$}};

\node[left=2.2cm of ch2] (x) {\huge{$\mathbf{x}$}};

\node[draw, rectangle, minimum height=4cm, minimum width=14mm, 
      right=4cm of origin] (decoder) {};
\node[rotate=270] at (4.9,-1.2) {\Large{Decoder}};

\node (merge) at (decoder.west) {};

\node[right=2cm of decoder] (out) {};

\draw[->] (x.east) -- (ch1.west);
\draw[->] (x.east) -- (ch2.west);
\draw[->] (x.east) -- (chN.west);

\draw[->] (ch1.east) -- node[above] {\Large{$\vec{y}_1$}} (merge);
\draw[->] (ch2.east) -- node[above] {\Large{$\vec{y}_2$}} (merge);
\draw[->] (chN.east) -- node[above] {\Large{$\vec{y}_N$}} (merge);

\draw[->] (decoder.east) -- node[above] {} (out);

\end{tikzpicture}}
    \caption{The Levenshtein's sequence reconstruction model.}
    \label{LevenshteinFig}
\end{figure}

    When at most $t$ \emph{substitution} errors occur in each channel, Levenshtein~\cite{Levenshtein} determined the minimum number of channels $N$ giving $\mathcal{L}=1$ over any $q$-ary code $C$. For the binary case ($q = 2$) and $\mathcal{L}\le 2$, see Yaakobi and Bruck~\cite{yaakobi2018uncertainty}. Further results for substitutions and larger values of $\mathcal{L}$ and $N$ can be found, for example, in~\cite{junnila2024levenshtein, junnila2021levenshtein,pavanITWSubstitution,pavanISITsubstitution,junnila2026sizeintersectionqaryhamming}.
     For \emph{insertion errors} in the traditional model, the following result for $\mathcal{L} = 1$ is due to Levenshtein, which is based on Equation~(51) and Theorem~3 of \cite{levenshtein2001efficient}.
    
    \begin{theorem}[\cite{levenshtein2001efficient}]\label{TheLevInsBound}
        Let exactly $t$ insertion errors occur in the traditional Levenshtein's channel model and $C=\mathbb{Z}_q^n$. Then $\mathcal{L}=1$ if and only if the number of different output words satisfies $N\ge N_q^+(n,t)+1$, where
    \begin{equation}\label{OrigLevIns}
    N_q^+(n,t)= \sum_{i=0}^{t-1}\binom{n+t}{i}(q-1)^i(1-(-1)^{t-i}).
    \end{equation}
\end{theorem}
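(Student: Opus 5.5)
The theorem reduces to a purely combinatorial statement about supersequence balls. For $\vec{x}\in\mathbb{Z}_q^n$, let $I_t(\vec{x})\subseteq\mathbb{Z}_q^{n+t}$ be the set of words obtainable from $\vec{x}$ by exactly $t$ insertions. In the traditional model the outputs are $N$ distinct elements of $I_t(\vec{x})$, and a second codeword $\vec{x}'$ is consistent with them exactly when all outputs lie in $I_t(\vec{x}')$. Hence $\mathcal{L}=1$ for every transmitted word if and only if
\[
N>\max_{\vec{x}\neq\vec{x}'\in\mathbb{Z}_q^n}|I_t(\vec{x})\cap I_t(\vec{x}')|.
\]
The task is therefore to show that this maximum intersection equals $N_q^+(n,t)$ as given in \eqref{OrigLevIns}.

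\textbf{Step 1 (sizes of the balls).} Recall the classical fact that $|I_t(\vec{x})|=\sum_{i=0}^{t}\binom{n+t}{i}(q-1)^i$, independently of $\vec{x}$. I would prove it by counting words $\vec{y}\in\mathbb{Z}_q^{n+t}$ according to the greedy (leftmost) embedding of $\vec{x}$ into $\vec{y}$. Between consecutive embedded positions, each inserted symbol must avoid the next symbol of $\vec{x}$, which leaves $q-1$ choices. After the last embedded symbol the symbols are free. Summing over the number $i$ of constrained inserted positions gives the formula.

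\textbf{Step 2 (upper bound on the intersection).} I would induct on $n$ and $t$ using the first-symbol recursion. Write $\vec{x}=a\vec{u}$ and $\vec{x}'=b\vec{v}$, and split $\vec{y}$ by its first symbol $c$. Then $\vec{y}=c\vec{z}$ lies in $I_t(a\vec{u})$ if and only if $\vec{z}\in I_t(\vec{u})$ when $c=a$, and $\vec{z}\in I_{t-1}(a\vec{u})$ when $c\neq a$.

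\begin{itemize}
\item If $a=b$, the intersection reduces to $I_t(\vec{u})\cap I_t(\vec{v})$ plus $(q-1)$ copies of $I_{t-1}(\vec{x})\cap I_{t-1}(\vec{x}')$. Induction then applies.
\item If $a\neq b$, the intersection decomposes as
\[
\bigl(I_t(\vec{u})\cap I_{t-1}(\vec{x}')\bigr)\;\cup\;\bigl(I_{t-1}(\vec{x})\cap I_t(\vec{v})\bigr)\;\cup\;(q-2)\cdot\bigl(I_{t-1}(\vec{x})\cap I_{t-1}(\vec{x}')\bigr),
\]
where the first two pieces have mixed lengths.
\end{itemize}

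To keep the induction closed under these mixed-length pieces, I would strengthen the hypothesis to a bound on $|I_{t_1}(\vec{u})\cap I_{t_2}(\vec{w})|$ for words of different lengths. Such intersections are at most $|I_{t-1}(\cdot)|$-type quantities, and I would bound them using Step 1. Finally I would verify that the resulting recursion is solved by $\sum_{i<t}\binom{n+t}{i}(q-1)^i(1-(-1)^{t-i})$. This verification is a Pascal-type identity, checked by substituting $\binom{n+t}{i}=\binom{n+t-1}{i}+\binom{n+t-1}{i-1}$; the alternating factor $1-(-1)^{t-i}$ absorbs the two-step structure.

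\textbf{Step 3 (tightness).} I would take $\vec{x}$ and $\vec{x}'$ at Hamming distance $1$ that differ in the first coordinate only, e.g.\ $\vec{x}=0\vec{u}$ and $\vec{x}'=1\vec{u}$; alternatively, let them differ by an adjacent transposition, following Levenshtein. I would then compute the intersection exactly by the same recursion and show that every inequality in Step 2 holds with equality.

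\textbf{Main obstacle.} The hardest part is Step 2: choosing the right strengthened induction hypothesis so that the mixed-length intersections are controlled, and showing that the maximum is attained at the same extremal configuration in every branch. I would first try the hypothesis
\[
|I_{t}(\vec{x})\cap I_{t}(\vec{x}')|\le N_q^+(n,t)
\quad\text{together with}\quad
|I_{t-1}(\vec{x})\cap I_t(\vec{v})|\le |I_{t-1}(\cdot)|-\text{(correction)},
\]
with the correction term computed explicitly for words of lengths $n$ and $n-1$. Since the statement is Levenshtein's theorem, I could also simply cite Equation~(51) and Theorem~3 of \cite{levenshtein2001efficient}, which give the maximum intersection size directly, and then conclude via the reduction above.
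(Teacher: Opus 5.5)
The paper gives no proof of this theorem. It is quoted from \cite{levenshtein2001efficient}, and the paper also records from the same source that $N_q^+(n,t)=\max_{\vec{x}\neq\vec{x'}}|I_t(\vec{x})\cap I_t(\vec{x'})|$. Your reduction to the maximum pairwise intersection of the balls $I_t(\cdot)$, followed by the citation you mention at the end, is therefore exactly the paper's treatment, and it is correct.

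Your self-contained Steps 1--3 go beyond the paper, and they work. The ``main obstacle'' you anticipate does not arise. Your first instinct in Step 2 (bound the mixed-length pieces by full ball sizes) already closes the induction, with no strengthened hypothesis and no correction term.

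Put $S(n,s)=\sum_{i=0}^{s}\binom{n+s}{i}(q-1)^i$. In the branch $a\neq b$, bound the two mixed pieces by $|I_{t-1}(\vec{x'})|=|I_{t-1}(\vec{x})|=S(n,t-1)$, and the third piece by induction on $t$. In the branch $a=b$, use induction on $n$ and on $t$. Both recursions give exactly the target, because
\[
N_q^+(n,t)=N_q^+(n-1,t)+(q-1)N_q^+(n,t-1),\qquad N_q^+(n,t)=2S(n,t-1)+(q-2)N_q^+(n,t-1).
\]
To check these, write $N_q^+(n,t)=2\sum_{j\ge 0}\binom{n+t}{t-1-2j}(q-1)^{t-1-2j}$ and apply Pascal's rule. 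For the second identity, also split $q-2=(q-1)-1$. The base cases are $t=0$, where both sides are $0$, and $n=1$, where only the branch $a\neq b$ occurs.

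Your Step 3 pair is extremal. Take $\vec{x}=a\vec{w}$ and $\vec{x'}=b\vec{w}$ with $a\neq b$. Then $\vec{w}$ is a subsequence of both words, so $I_{t-1}(b\vec{w})\subseteq I_t(\vec{w})$ and $I_{t-1}(a\vec{w})\subseteq I_t(\vec{w})$. Hence the mixed pieces are the full balls. The third piece is again a pair of the same form, so equality propagates by induction on $t$.

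One small fix in Step 1. Summing over the leftmost embedding gives $\sum_{i=0}^{t}\binom{n-1+i}{i}(q-1)^iq^{t-i}$, not the stated form directly. This does show the count is independent of $\vec{x}$. Then evaluate at $\vec{x}=0^n$, which counts words of length $n+t$ with at most $t$ nonzero symbols, to get $\sum_{i=0}^{t}\binom{n+t}{i}(q-1)^i$.

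What your route buys is a complete, short proof of the cited result together with an explicit extremal family. The paper simply relies on the citation.
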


    For further relationships between the parameters $N$ and $\mathcal{L}$ under insertion or deletion errors, see~\cite{Levenshtein, Uusi_Maria_Abu-Sini, Abu-Sini2024, PHAM2025105980, abbasian2024size}. For further work on insertion errors when $C \neq \mathbb{Z}_q^n$, see~\cite{sala2017exact} and \cite{ye2023reconstruction}.

The traditional Levenshtein's model requires that all the output words are \emph{distinct}. This, however, means that the information contained in the multiplicities of the output words is lost. Consider, for example, 
$\vec{x}=00\in \mathbb{Z}_2^2$ with exactly $t = 1$ insertion. Since in the traditional model the output words must be distinct, the received set of output words is a subset of $Y_1=\{000,100,010,001\}$. However, if we assume that in each channel the output word is obtained by a different insertion error, then the output words are from the \emph{multiset} $Y_2=\{\underline{0}00,0\underline{0}0,00\underline{0},\underline{1}00,0\underline{1}0,00\underline{1}\}$ where the different insertion errors are indicated by underlining. This means that the probability of receiving the word $000$ increases from 25\% in $Y_1$ to 50\% in $Y_2$. Moreover, it seems natural that $000$ is more likely to be the output word than the other words, since there are 3 different ways to obtain it while the other words only have 1. These observations motivate the generalized models introduced in Definitions~\ref{DefMultiModel} and \ref{DefNonMultiModel}, which take into account the possibility that the same output word can arise from different channels (as in the example above, where the word $000$ occurs with multiplicity $m(000,Y_2)=3$ although $\mathrm{set}(Y_2)=Y_1$).  
These models were introduced in \cite{junnila2023levenshtein,junnila2025levenshtein}, where the focus was on \emph{deletion} errors. In this paper, we consider these models for \emph{insertion} errors. It is pointed out in \cite{junnila2025levenshtein} that for the third type of error mentioned above, namely the \emph{substitution} errors, the new models coincide with the traditional Levenshtein's model, since different substitution errors in channels necessarily lead to different output words -- unlike in the deletion or insertion error case.

Let us then give the formal definitions of the models. 
An \emph{insertion vector} $w$ is a list of $n + 1$ ($q$-ary) words $(\vec{w_0}, \vec{w_1}, \dots, \vec{w_n})$ of total length at most $t$, also called the \emph{weight} of the vector, where $\vec{w_i}$ can be the empty word $\varepsilon$ (which is of zero length and weight). When an insertion error occurs on a word $\vec{x} \in \mathbb{Z}_q^n$, the word $\vec{w_{i}}$ is inserted after the $i$th symbol of $\vec{x}$ for each $1 \leq i \leq n$, and $\vec{w_0}$ is inserted before the first symbol of $\vec{x}$. For example, if $t = 2$, then the word $10001$ can be obtained from $\vec{x} = 100$ using, for instance, different insertion vectors $(\varepsilon,0,\varepsilon,1)$ or $(\varepsilon, \varepsilon,\varepsilon,01)$ of total length two (or weight two). Let us now define the \emph{multiset model} and \emph{non-multiset model} for insertion errors. Throughout the paper, we assume that $n\ge 1$, $q\ge 2$ and $t\ge 1$ unless otherwise stated.

\begin{defn}\label{DefMultiModel} Let  $C \subseteq \mathbb{Z}_q^n$ be such that $|C|\ge 2$. In the \emph{multiset model for insertion errors}, a codeword $\vec{x}\in C$ is transmitted through $N$ channels, each applying a unique insertion vector of weight at most $t$ to it. This gives us a multiset $Y^m$ of $N$ output words $\vec{y_i}$, where $1\leq i\leq N$.
We denote by $T(Y^m)$ the set of codewords such that if $\vec{x'}\in T(Y^m)$, then the output multiset $Y^m$ can be obtained when $\vec{x'}$ is transmitted through the $N$ channels. Given $C$, the maximum possible size of $T(Y^m)$, over all possible transmitted words $\vec{x}\in C$, is denoted by $\mathcal{L}$. As $\vec{x}\in T(Y^m)$, we have $\mathcal{L}\ge 1.$
\end{defn}

\begin{defn}\label{DefNonMultiModel} Let  $C \subseteq \mathbb{Z}_q^n$ be such that $|C|\ge 2$.
In the \emph{non-multiset  model for insertion errors}, a codeword $\vec{x}\in C$ is transmitted through $N$ channels, each applying a unique insertion vector of weight at most $t$ to it, giving a multiset $Y^m$.
We receive the pruned set $Y=\mathrm{set}(Y^m)$ and we know $N$.
Let $T(Y)$ be the set of codewords such that if $\vec{x'}\in T(Y)$, then the output set $Y$ can be obtained when $\vec{x'}$ is transmitted through the $N$ channels. Given $C$, the maximum possible size of $T(Y)$, over all possible transmitted words $\vec{x}\in C$, is denoted by $\mathcal{L}$. 
As $\vec{x}\in T(Y)$, we have $\mathcal{L}\ge 1.$
\end{defn}

The multiset model assumes that each channel introduces different insertion errors (but the output words are allowed to be identical). This is an idealized mathematical model, but the assumption seems to be reasonable for a wide range of parameters $n$, $t$, $q$ and $N$ of the underlying code $C$ (see Section II-F in \cite{junnila2025levenshtein}) when the different insertion vectors are assumed to be equally likely. When we compare the multiset model to the non-multiset model, note that the latter can cope with some channels having identical insertion errors (so not just output words being equal), provided that at least a certain number (the parameter $N$ in Definition~\ref{DefNonMultiModel})  of channels with distinct insertion errors exist. In other words, this allows the non-multiset model to cope with some common systematic errors across different channels. Notice that the probability of obtaining the required number $N$ of channels with different error patterns in the non-multiset case can be increased by increasing the total number of channels  (see \cite[Section II-F]{junnila2025levenshtein}). An important observation is that we \emph{do not} need to know in the non-multiset model which channels have different error patterns and which have the same patterns.  Regarding the information obtained from the channels, the difference between the two models can be understood in the following way: in the multiset model we receive $Y^m$, that is, $\mathrm{set}(Y^m)$ together with the multiplicity of every element in the multiset, and in the non-multiset model we receive $\mathrm{set}(Y^m)$ knowing also that there are (at least) $N=|Y^m|$ channels with different insertion errors. 

Notice that the models are different from the trace reconstruction problem~\cite{viswanathan2008improved} (see also \cite{bhardwaj2020trace} for a survey on trace reconstruction from the perspective of computational biology). Indeed, the trace reconstruction problem is probabilistic in nature, whereas our models are combinatorial problems usually called \emph{adversarial problems}. 
The number of insertions in a word is bounded by $t$ in our models, whereas typically in the trace reconstruction problem each insertion depends on a given probability $p$ and is independent from other insertions.

In this paper, we want to recover the transmitted word unambiguously, that is, to have $\mathcal{L}=1$. The main question regarding the models above is to find (similarly to \eqref{OrigLevIns} in the traditional model) the smallest possible $N$ satisfying $\mathcal{L}=1$. Such parameters $N$ for the models of Definition~\ref{DefMultiModel} and~\ref{DefNonMultiModel} actually do exist (as will be pointed out in Remark~\ref{Nexists}). We denote the smallest possible $N$ in the multiset model by $N_q^m(C;n,t)+1$ and in the non-multiset model by $N_q^{nm}(C;n,t)+1$. Furthermore, we begin our paper by considering (as in Theorem~\ref{TheLevInsBound}, see also~\cite{Maria_Abu-Sini,Abu-Sini2024}) the \emph{uncoded} case, that is, $C=\mathbb{Z}_q^n$. In that case, we drop $C$ from the notation, i.e., $N_q^m(\mathbb{Z}_q^n;n,t)=N_q^m(n,t)$ and $N_q^{nm}(\mathbb{Z}_q^n;n,t)=N_q^{nm}(n,t).$

Because we aim for $\mathcal{L}=1$, we wish to find the smallest number of channels $N$ such that if we transmit two different words, say $\vec{x}$ and $\vec{x}'$, then the obtained multisets of output words (resp. the related sets) are never the same in the multiset model (resp. in the non-multiset case).
To this end, in Section~\ref{sec:preliminaries}, we present the basic properties of the above-described models which will help us in obtaining and understanding the results in this paper. 
Further in Section~\ref{BoundAndExtremalPairs}, we study the number of channels required by both models 
when one insertion takes place, that is, $t=1$. In this process, we develop methods that allow us also to determine all extremal word pairs, that is, word pairs that require the maximum possible number of channels to distinguish. The methods obtained in this analysis are also useful in Section~\ref{sec:N_q^m(C;n,1)} where we provide a construction from a code $C\subseteq \mathbb{Z}_q^n$ with small $N_q^m(C;n,1)$ to a new code $C'\subseteq \mathbb{Z}_q^{n+2}$ with small $N_q^m(C';n+2,1)$. This construction is optimal for certain parameters. 
In Section~\ref{sec:generalbds}, we derive some general bounds for $t\ge 1$ 
regarding the number of channels required 
to distinguish words in the multiset model.
Finally, we conclude our work in Section~\ref{sec:conc}, by summarizing and also presenting some directions for possible future work.

\section{Basics}
\label{sec:preliminaries}
We begin by introducing some notation. For two multisets $A_1$ and $A_2$, we define the following simple operations:
\begin{enumerate}
    \item[M1:] $A_1 \oplus A_2$ is a multiset whose distinct elements are $\mathrm{set}(A_1) \cup \mathrm{set}(A_2)$ and for any element $\vec{a} \in A_1 \oplus A_2$, $m(\vec{a}, A_1 \oplus A_2) = m(\vec{a}, A_1) + m(\vec{a}, A_2)$. If $A_2 = \{a\}$, then we represent $A_1 \oplus A_2$ as $A_1 \oplus a$, (that is, omitting the brackets) for convenience (also  $A_2 \oplus A_1$ can be written as $a \oplus A_1$).
    \item[M2:] $A_1 \ominus A_2$ is a multiset whose elements belong to $\mathrm{set}(A_1)$ and for any element $\vec{a} \in A_1 \ominus A_2$, $m(\vec{a}, A_1 \ominus A_2) = \max\{m(\vec{a}, A_1) - m(\vec{a}, A_2), 0\}$.
    \item[M3:] $A_1 \cap A_2$ is a multiset whose elements are in $\mathrm{set}(A_1) \cap \mathrm{set}(A_2)$ and for any element $\vec{a} \in \mathrm{set}(A_1) \cap \mathrm{set}(A_2)$, $m(\vec{a}, A_1 \cap A_2) = \min(m(\vec{a}, A_1), m(\vec{a}, A_2))$.
\end{enumerate}

\smallskip
In this work, we follow the convention that when the objects under discussion are sets, then $\cap$ refers to the usual set intersection, and if the objects under discussion are multisets, then $\cap$ refers to the multiset intersection of M3.

Recall that our focus is on determining the transmitted word uniquely when $C = \mathbb{Z}_q^n$. In particular, our goal is to determine the minimum number $N$ of channels required to always distinguish two words $\vec{x}, \vec{x'} \in \mathbb{Z}_q^n$.

Observe that the multiset of words obtained by performing \emph{at most} $t$ insertions on a word $\vec{x}$ is equivalent to the disjoint union of the multisets of words obtained by performing exactly $i$ insertions for $0 \leq i \leq t$. Hence, it is customary (see Theorem~\ref{TheLevInsBound}) to focus on the case with \emph{exactly} $t$ insertions on $\vec{x}$ and our work also follows this convention.

Let $M_t^{n, q}(\vec{x})$ denote the \emph{$t$-insertion multiset sphere} centered at $\vec{x}$, that is, the multiset of words obtainable by performing exactly $t$ insertions on $\vec{x}$. This notation is abbreviated as $M_t(\vec{x})$ when the context is clear. It is shown in~\cite{junnila2025levenshtein} that the size of $M_t^{n, q}(\vec{x})$, denoted by $M_q(n, t)$, is independent of the word $\vec{x}$ and is given by the formula
    \begin{equation}
        \label{eq:multisetspheresize}
        |M_t^{n, q}(\vec{x})| = M_q(n, t) = q^t \binom{n + t}{t}.
    \end{equation}
Observe that $M_q(n, t)$ also gives the number of insertion vectors made up of $n + 1$ $q$-ary words of total length $t$. If we  denote, as in \cite{levenshtein2001efficient}, by $I_t(\vec{x})$  the set of all distinct words obtained from $\vec{x}$ by $t$ insertions, then  
$\mathrm{set}(M_t(\vec{x}))=I_t(\vec{x}).$ Moreover, by \cite{levenshtein2001efficient}, we have $N_q^+(n,t)=\max_{\vec{x},\vec{x'}\in\mathbb{Z}_q^n,\vec{x}\neq \vec{x'}}|I_t(\vec{x})\cap I_t(\vec{x'})|.$

Next we consider the two models in more detail and illustrate some concepts with an example.

\subsection{On the Multiset Model}
Consider first the multiset model. Let us denote the intersection 
of the multiset spheres centered at $\vec{x}$ and $\vec{x'}$ by $A = M_t(\vec{x}) \cap M_t(\vec{x'})$. The different words in $A$ form  $\mathrm{set}(A)$.
Let us denote by $N^m_{\vec{x},\vec{x'}}$ the size of the  intersection.
Hence,
\begin{equation}\label{Nmxx}
N^m_{\vec{x},\vec{x'}} =|M_t(\vec{x}) \cap M_t(\vec{x'})|= \sum_{i = 1}^{|\mathrm{set}(A)|}\min (m(\vec{a_i}, M_t(\vec{x})), m(\vec{a_i}, M_t(\vec{x'}))).
\end{equation}
Thus, the number of channels required (in the worst case)  to distinguish between $\vec{x}$ and $\vec{x'}$ is $N^m_{\vec{x},\vec{x'}} + 1$ (for an illustration, see Example~\ref{NxxExample}). Of course, we need to distinguish every pair of words. Recalling the definition of $N^m_q(n, t)$ from the introduction, it can also be interpreted as the maximum size of the intersection of $t$-insertion multiset spheres centered at two $q$-ary words of length $n$. By \eqref{Nmxx}, we get
\begin{equation}\label{Nm}
N^m_q(n, t) = \max_{\vec{x},\vec{x'} \in \mathbb{Z}_q^n, \vec{x}\neq \vec{x'}} N^m_{\vec{x},\vec{x'}}.
\end{equation}
Hence, the number of channels required to distinguish all pairs of words in the multiset model is given by $N^m_q(n, t) + 1.$ This number of different error patterns in the multiset model ensures that there exists an instance of an output word (with distinguishing multiplicity) such that it is obtained either from $\vec{x}$, or from $\vec{x'}$, but not from both.

\subsection{On the Non-Multiset Model}
Let us now consider the non-multiset case. We introduce the following notation for convenience where, again, $A = M_t(\vec{x}) \cap M_t(\vec{x'})$. For two distinct words $\vec{x}$ and $\vec{x'}$, we denote $$S_{t, \vec{x'}}(\vec{x}) = \sum_{\vec{a} \in \mathrm{set}(A)} m(\vec{a}, M_t(\vec{x})),$$ that is, $S_{t, \vec{x'}}(\vec{x})$ counts the number of insertion vectors of weight $t$ which, when applied to $\vec{x}$, give an output word which can also be obtained from the word $\vec{x'}$ by some insertion vector (see Example~\ref{NxxExample} for an illustration) of weight $t$. Moreover, if $\vec{x}$ is transmitted and more than $S_{t, \vec{x'}}(\vec{x})$ outputs are received with distinct insertion vectors, then we necessarily receive a word, which does not belong to $M_t(\vec{x'})$.

To guarantee that we can always distinguish $\vec{x}$ and $\vec{x'}$ in the non-multiset model, we need to receive from at least one channel an output word that does not belong to the set of common words $\mathrm{set}(A)$. Let us now consider how many channels with different insertion vectors we need to be sure of that.
The number of distinct error patterns that generate the output words in $\mathrm{set}(A)$ from $\vec{x}$ and from $\vec{x'}$ is given by $S_{t, \vec{x'}}(\vec{x})$ and by $S_{t, \vec{x}}(\vec{x'})$, respectively. Let us denote
\begin{equation}\label{Nnmxx}
    N^{nm}_{\vec{x},\vec{x'}} = \min\left(S_{t, \vec{x'}}(\vec{x}), S_{t, \vec{x}}(\vec{x'})\right).
\end{equation}
Thus, if we have at most $N^{nm}_{\vec{x},\vec{x'}}$ channels with distinct insertion errors, then we cannot always distinguish between transmitted words $\vec{x}$ and $\vec{x'}$, but 
$N_{\vec{x}, \vec{x'}}^{nm} + 1$ such channels is enough to know the transmitted word (see Example~\ref{NxxExample} for an illustration). Recalling the definition of $N^{nm}_q(n, t)$ from the introduction, it can also be interpreted as
\begin{equation}\label{Nnm}
    N^{nm}_q(n, t) = \max_{\vec{x},\vec{x'} \in \mathbb{Z}_q^n, \vec{x}\neq \vec{x'}}N^{nm}_{\vec{x},\vec{x'}}
\end{equation}
by comparing all possible pairs of distinct words. Hence, the number of channels required to distinguish every pair of words in the non-multiset model is given by $N^{nm}_q(n, t) + 1.$ In the following example, we illustrate the previous definitions.

\begin{example}\label{NxxExample}
Consider $\vec{x} = 01$ and $\vec{x'} = 11$, and suppose that exactly $t = 2$ insertion errors occur in a channel. When $q = 2$, the output words which can be obtained from $\vec{x}$ and from $\vec{x'}$, by all possible insertion vectors, are listed in Table~\ref{tab:multnonex}. 
\begin{table}[h]
    \caption{The multiset spheres $M_2(01)$ and $M_2(11)$. The underlined symbols are the inserted ones.}
    \label{tab:multnonex}
    \centering
    \begin{tabular}{c|c|c|c|c|c|c|c|}
         \hline
         \multirow{6}{*}{$M_2(01)$} & \underline{00}01 & \underline{0}01\underline{0} & \underline{0}0\underline{1}1 & 01\underline{00} & \underline{01}01 & 0\underline{1}1\underline{0} & 0\underline{11}1 \\
                                    & \underline{0}0\underline{0}1 & 0\underline{0}1\underline{0} & \underline{0}01\underline{1} &      & 0\underline{10}1 & 01\underline{10} & 0\underline{1}1\underline{1} \\
                                    & 0\underline{00}1 &      & 0\underline{01}1 &      & 01\underline{01} &      & 01\underline{11} \\
                                    &      &      & 0\underline{0}1\underline{1} &      &      &      &      \\
         \cline{2-8}
          & \underline{10}01 & \underline{1}01\underline{0} & \underline{1}01\underline{1} & \underline{11}01 \\
          & \underline{1}0\underline{0}1 &      & \underline{1}0\underline{1}1 &      \\
          \hline
          \hline
          \multirow{9}{*}{$M_2(11)$} & \underline{00}11 & \underline{0}1\underline{0}1 & \underline{0}11\underline{0} & \underline{01}11 & 1\underline{00}1 & 1\underline{0}1\underline{0} & \underline{10}11 \\
                                    &      &      &      & \underline{0}1\underline{1}1 &      &      & 1\underline{01}1 \\
                                    &      &      &      & \underline{0}11\underline{1} &      &      & 1\underline{0}1\underline{1} \\
         \cline{2-8}
          & 11\underline{00} & \underline{1}1\underline{0}1 & \underline{1}11\underline{0} & \underline{11}11\\
          &      & 1\underline{10}1 & 1\underline{1}1\underline{0} & \underline{1}1\underline{1}1\\
          &      & 11\underline{01} & 11\underline{10} & \underline{1}11\underline{1}\\
          &      &      &      & 1\underline{11}1 \\
          &      &      &      & 1\underline{1}1\underline{1} \\
          &      &      &      & 11\underline{11} \\
          
          \hline
          \hline
    \end{tabular}
\end{table}

Observe that in this case, $\mathrm{set}(A)=\mathrm{set}(M_2(01) \cap M_2(11)) = \{0011, 0101, 0110, 0111, 1001, 1010, 1011, 1101\}$. 
By counting the multiplicities of each of the words in $\mathrm{set}(A)$ (in the given order) in $M_2(01)$ and $M_2(11)$, we get, using \eqref{Nmxx}, that $N^m_{01, 11} =\min(4, 1) + \min(3, 1) + \min(2, 1) + \min(3, 3) + \min(2, 1) + \min(1, 1) + \min(2, 3) + \min(1, 3) = 11$ and, using \eqref{Nnmxx}, that $N^{nm}_{01, 11} =  \min\left(S_{2, 11}(01), S_{2, 01}(11)\right) = \min(4 + 3 + 2 + 3 + 2 + 1 + 2 + 1, 1 + 1 + 1 + 3 + 1 + 1 + 3 + 3) = \min(18, 14) = 14$.
Consequently, it follows that $11 + 1 = 12$ (resp. $14 + 1 = 15$) channels with different insertion errors guarantee in the multiset model (resp. in the non-multiset model) that we can always distinguish between the transmitted words $\vec{x}=01$ and $\vec{x'}=11$, and for fewer channels this does not hold.
\end{example}

In the following remark, we show that the values $N_q^m(C;n,t)$ and $N_q^{nm}(C;n,t)$ always exist for every choice of parameters $n\ge 1$, $q\ge 2$ and $t\ge 1$.
\begin{remark}\label{Nexists}
    In what follows, we point out that the parameter $N$ giving $\mathcal{L}=1$ in Definition~\ref{DefMultiModel} always exists (that is, it is a finite integer), and the same is true for the corresponding parameters in Definition~\ref{DefNonMultiModel}. In other words, there are parameters $N_q^m(C;n,t)+1$ and $N_q^{nm}(C;n,t)+1$ such that we can always determine the transmitted word unambiguously. Let us next verify this with the aid of the following three steps. 
    
    1) First, we show that from the existence of $N_q^{nm}(C;n,t)$, the existence of $N_q^m(C;n,t)$ follows.
    Indeed, if $N_q^{nm}(C;n,t)$ exists, then
\begin{equation}\label{mLessnm}
    N_q^m(C;n,t)\le N_q^{nm}(C;n,t).
\end{equation}
 We show next that \eqref{mLessnm} holds. If for two multisets $Y_1^m$ and $Y_2^m$ it is true that the corresponding pruned sets (as in Definition~\ref{DefNonMultiModel}) are different, that is, $\mathrm{set}(Y_1^m)\neq \mathrm{set}(Y_2^m)$, then clearly the two multisets (as in Definition~\ref{DefMultiModel}) are also different, i.e., $Y_1^m\neq Y_2^m$. Therefore, if we can determine the transmitted word in the non-multiset model by $\mathrm{set}(Y^m)$ obtained from the $N$ channels with different insertion errors, then the same number of channels is enough to determine the transmitted word using $Y^m$ in the multiset model. Thus, if $N_q^{nm}(C;n,t)$ exists, then also $N_q^m(C;n,t)$ exists and \eqref{mLessnm} is true. 
  
 2) Second, we show that from the existence of $N_q^{nm}(n,t)$ the existence of $N_q^{nm}(C;n,t)$ follows.
 By Definition~\ref{DefNonMultiModel}, it is clear that if $N_q^{nm}(n,t)=N_q^{nm}(\mathbb{Z}_q^n;n,t)$ exists, then $N_q^{nm}(C;n,t)$ also exists and $N_q^{nm}(C;n,t)\le N_q^{nm}(n,t)$. Indeed,  if we can determine the transmitted word in $C=\mathbb{Z}_q^n$ with $N$ channels, we can trivially determine the transmitted word in a \emph{subset} $C\subseteq \mathbb{Z}_q^n$ with $N$ channels. 
   
 3) Finally, we show that $N_q^{nm}(n,t)$ exists. Let $N=M_q(n,t)$ (that is, $N$ is equal to the maximum number of different insertion errors given in \eqref{eq:multisetspheresize}) and recall that $\mathrm{set}(M_t(\vec{x}))=I_t(\vec{x})$. By \cite{levenshtein2001efficient}, we know that $N_q^+(n,t)< |I_t(\vec{x})|$ for all words $\vec{x}$. Thus, Theorem~\ref{TheLevInsBound} implies that $I_t(\vec{x})\neq I_t(\vec{x'})$ giving $\mathrm{set}(M_t(\vec{x}))=I_t(\vec{x})\neq I_t(\vec{x'})=\mathrm{set}(M_t(\vec{x'}))$ for all $\vec{x}\neq \vec{x'}$. Consequently, $N_q^{nm}(n,t) < M_q(n,t)$ and, thus, $N_q^{nm}(n,t)$ exists and the existence of $N_q^{nm}(C;n,t)$ and $N_q^m(C;n,t)$ comes from 1) and 2). 

 \smallskip
 
In addition, it is worth noticing that since the parameter $N$ in Definition~\ref{DefMultiModel}  exists for $\mathcal{L}=1$, the corresponding parameter $N$  also trivially exists for every $\mathcal{L}$ where  $\mathcal{L}=\max_{\vec{x}\in C} |T(Y^m)|\ge 1$. The same holds for Definition~\ref{DefNonMultiModel} where $\mathcal{L}=\max_{\vec{x}\in C} |T(Y)|\ge 1$.
 \end{remark}

\subsection{Analysis of \texorpdfstring{$N_q^m(1, t)$}{Nq{m}(1, t)} and \texorpdfstring{$N_q^{nm}(1, t)$}{Nq{nm}(1, t)}}
In order to familiarize ourselves with the concepts, we begin by determining the values $N_q^m(1, t)$ and $N_q^{nm}(1, t)$ (that is, $n = 1$) for all $t$ and $q$. 
\begin{theorem}
\label{thm:nonmultn=1}
    We have $N_q^{nm}(1, t) = (t + 1)(q^t - (q - 1)^t)$.
\end{theorem}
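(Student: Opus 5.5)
The plan is a direct computation: for $n=1$ the multiset spheres have a simple explicit description, and the resulting quantity turns out to be the same for every pair of distinct words.

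\emph{Step 1: multiplicities.} Let $\vec{x}=a$ and $\vec{x'}=b$ with $a\neq b$ in $\mathbb{Z}_q$. Every word of $M_t(a)$ has length $t+1$. An insertion vector $(\vec{w_0},\vec{w_1})$ of total length $t$ produces $\vec{y}=\vec{w_0}\,a\,\vec{w_1}$. It is therefore determined by the position in $\vec{y}$ occupied by the original symbol $a$, and conversely every position $i$ with $y_i=a$ gives exactly one such vector. Hence
\[
m(\vec{y},M_t(a))=\#_a(\vec{y}),
\]
the number of occurrences of $a$ in $\vec{y}$. As a consistency check, summing over all $\vec{y}\in\mathbb{Z}_q^{t+1}$ gives $(t+1)q^t=M_q(1,t)$, in agreement with \eqref{eq:multisetspheresize}.

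\emph{Step 2: the common words.} By Step 1, $\mathrm{set}(A)=\mathrm{set}(M_t(a)\cap M_t(b))$ consists exactly of the words of length $t+1$ that contain both $a$ and $b$. So
\[
S_{t,b}(a)=\sum_{\vec{y}\,\ni\, a,b}\#_a(\vec{y}),
\]
where the sum runs over words of length $t+1$ containing both symbols. The words of length $t+1$ that contain no $a$ contribute $0$ to $\sum\#_a(\vec{y})$. Therefore $S_{t,b}(a)$ equals the sum of $\#_a(\vec{y})$ over all words of length $t+1$ minus the same sum over words avoiding $b$, which is
\[
(t+1)q^t-(t+1)(q-1)^t.
\]
Here the second term is obtained by counting, for each of the $t+1$ positions, the $(q-1)^t$ words over $\mathbb{Z}_q\setminus\{b\}$ that have $a$ in that position.

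\emph{Step 3: conclusion.} The value in Step 2 is symmetric in $a$ and $b$, so $S_{t,a}(b)=S_{t,b}(a)$. By \eqref{Nnmxx}, $N^{nm}_{a,b}=(t+1)(q^t-(q-1)^t)$. This value does not depend on the choice of the pair $a\neq b$, so the maximum in \eqref{Nnm} equals it, which proves the theorem.

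The only real content is the multiplicity identification in Step 1, and it is immediate; everything after it is an elementary count.
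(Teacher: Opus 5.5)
Your proposal is correct and is essentially the paper's argument. The paper counts the weight-$t$ insertion vectors applied to $a$ whose output contains $b$ as $|I\setminus I_b| = (t+1)q^t - M_{q-1}(1,t)$; your multiplicity-weighted sum over the words containing $b$ is exactly this count, and in both versions the symmetry in $a$ and $b$ handles the minimum and the maximum over pairs.
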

\begin{proof}
      Let $\vec{x}\in\mathbb{Z}_q^1$ and $\vec{x'}\in\mathbb{Z}_q^1$ where $\vec{x}\neq \vec{x'}$. We can assume without loss of generality that $\vec{x} = a$ and $\vec{x'} = b$ for some distinct $a,b\in \mathbb{Z}_q$. Denote by $I$ the set of all insertion vectors of weight $t$ and by $I_b\subseteq I$ (resp. $I_a\subseteq I$) the set of insertion vectors of weight $t$ which do \emph{not} contain the symbol $b$ (resp. the symbol $a$). Suppose that we obtain the word $\vec{x}_b$ from $\vec{x}$ with an insertion vector in $I_b$, then $\vec{x}_b$ does not contain any symbols $b$ and, thus, cannot be obtained from $\vec{x}'$. On the other hand, if we obtain $\vec{w}_b$ from $\vec{x}$ with an insertion vector in $I\setminus I_b$, then $\vec{w}_b$ contains at least one $b$ and can also be obtained from $\vec{x}'$. Similar reasoning holds for $\vec{x}'$ and $I_a$. Hence, $N_q^{nm}(1,t)=|I\setminus I_a|=|I\setminus I_b|$.

    Since $|I_a|=|I_b|=M_{q-1}(1,t)=(q-1)^t\binom{1+t}{t}=(t+1)(q-1)^t$ and $|I|=M_q(1,t)=(t+1)q^t$, we have $N_q^{nm}(1,t)=|I\setminus I_b|=(t+1)(q^t-(q-1)^t)$. 
\end{proof}

\begin{theorem}
\label{thm:multn=1}
For $q \geq 3$, we have $$N_q^m(1,t) = q^t(t + 1) - \sum_{k = 0}^{t}\sum_{m = 0}^{\min(t - k, k)} (q - 2)^{k - m} \binom{t + 1}{k}\binom{k}{m}(t + 1 - k - m),$$
and for the binary case, we have $N_2^m(1,t) = 2^t(t + 1) - \sum_{k = 0}^{\lfloor \frac{t}{2} \rfloor}\binom{t + 1}{k}(t + 1 - 2k)$.
\end{theorem}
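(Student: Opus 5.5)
The plan is to compute $N^m_{\vec{x},\vec{x'}}$ exactly for $\vec{x}=a$, $\vec{x'}=b$ with $a\neq b$. Every such pair is equivalent to any other under a permutation of $\mathbb{Z}_q$, and such a permutation preserves multiset spheres. Hence $N^m_{a,b}$ does not depend on the pair, and by \eqref{Nm} it equals $N_q^m(1,t)$.

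\textbf{Step 1: multiplicities in $M_t(a)$.} An output word is $\vec{y}\in\mathbb{Z}_q^{t+1}$. An insertion vector $(\vec{w_0},\vec{w_1})$ of weight $t$ applied to $a$ gives $\vec{w_0}\,a\,\vec{w_1}$. Conversely, an insertion vector producing $\vec{y}$ is determined uniquely by which position $i$ of $\vec{y}$ plays the role of the original symbol, and any position $i$ with $y_i=a$ works. Therefore
\[ m(\vec{y},M_t(a))=\#_a(\vec{y}), \]
the number of occurrences of $a$ in $\vec{y}$, and similarly $m(\vec{y},M_t(b))=\#_b(\vec{y})$.

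\textbf{Step 2: rewrite the intersection size.} By \eqref{Nmxx},
\[ N^m_{a,b}=\sum_{\vec{y}\in\mathbb{Z}_q^{t+1}}\min\bigl(\#_a(\vec{y}),\#_b(\vec{y})\bigr). \]
Since $\sum_{\vec{y}}\#_a(\vec{y})=|M_t(a)|=q^t(t+1)$ by \eqref{eq:multisetspheresize}, we get
\[ N^m_{a,b}=q^t(t+1)-\sum_{\vec{y}}\max\bigl(\#_a(\vec{y})-\#_b(\vec{y}),0\bigr). \]

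\textbf{Step 3: count the subtracted term.} Group the words $\vec{y}$ by two parameters:
\begin{itemize}
\item $k=t+1-\#_a(\vec{y})$, the number of non-$a$ positions;
\item $m=\#_b(\vec{y})$, the number of $b$'s among those positions.
\end{itemize}
The number of words with given $k$ and $m$ is $\binom{t+1}{k}\binom{k}{m}(q-2)^{k-m}$: choose the non-$a$ positions, choose which of them are $b$, and fill the rest with the $q-2$ other symbols. For such a word the summand is $t+1-k-m$. It is positive exactly when $m\le t-k$, and we always have $m\le k$. So the range is $0\le k\le t$ and $0\le m\le\min(t-k,k)$, which gives the stated formula for $q\ge 3$.

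\textbf{Binary case.} For $q=2$ we have $(q-2)^{k-m}=0$ unless $m=k$ (with the convention $0^0=1$). The conditions then force $k\le t-k$, i.e.\ $k\le\lfloor t/2\rfloor$, and the summand becomes $\binom{t+1}{k}(t+1-2k)$, which is the binary formula.

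The only genuinely non-routine point is Step 1: one has to check that distinct insertion vectors producing the same $\vec{y}$ correspond bijectively to the occurrences of $a$ in $\vec{y}$. The remainder is bookkeeping and the index substitution, where the main risk is the $0^0$ convention in the binary specialization.
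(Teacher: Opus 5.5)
Your proposal is correct and follows the paper's proof essentially step for step. The paper likewise writes $|M_t(a)\cap M_t(b)| = |M_t(a)| - |M_t(a)\ominus M_t(b)|$ (your Step 2), observes that multiplicities are symbol counts, and counts words by their numbers of $a$'s and $b$'s with the same index range. Your derivation of the binary case from the general count via the convention $0^0=1$ is a valid way to fill in what the paper only calls ``analogous''.
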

\begin{proof}
    Let $\vec{x} = a$ and $\vec{x'} = b$, where $a, b \in \mathbb{Z}_q$, $a \neq b$ and $q \geq 3$. Since $a$ and $b$ are arbitrary, we have $N_q^m(1,t) = |M_t(\vec{x}) \cap M_t(\vec{x'})|$. We know from basic set theory that 
    \begin{equation}
    \label{eq:n=1mults}
    \begin{split}
        M_t(\vec{x}) \cap M_t(\vec{x'}) & = M_t(\vec{x}) \ominus (M_t(\vec{x}) \ominus M_t(\vec{x'}))\\
        \implies |M_t(\vec{x}) \cap M_t(\vec{x'})| & = |M_t(\vec{x})| - |M_t(\vec{x}) \ominus M_t(\vec{x'})|. 
    \end{split}
    \end{equation}
    Let us determine the number of words in $M_t(\vec{x}) \ominus M_t(\vec{x'})$. First, observe that the multiplicity of any word in $M_t(\vec{x})$ (resp. $M_t(\vec{x'})$) depends only on the number of $a$'s (resp. $b$'s) in the word. For instance, the multiplicity of any word with $\ell$ symbols $a$ and $m$ symbols $b$ in $M_t(\vec{x})$ and $M_t(\vec{x'})$ is $\ell$ and $m$, respectively. 

    Let $\vec{y} \in M_t(\vec{x})$ be a word with $t + 1 - k$ symbols $a$, where $0 \leq k \leq t$, and $m$ symbols $b$, where $0 \leq m \leq k$. There are $\binom{t + 1}{t + 1 - k} \binom{k}{m}$ ways to choose the positions of the $a$'s and $b$'s. The remaining positions can be filled by any symbol of $\mathbb{Z}_q \setminus \{a, b\}$. Hence, the number of different words of this type is $\binom{t + 1}{t + 1 - k} \binom{k}{m} (q - 2)^{k - m}$. Since the multiplicity of each such word in $M_t(\vec{x})$ and $M_t(\vec{x'})$ is $t + 1 - k$ and $m$, respectively, the total number of words in $M_t(\vec{x}) \ominus M_t(\vec{x'})$ taking into account the multiplicities is given by $$(q - 2)^{k - m} \binom{t + 1}{t + 1 - k}\binom{k}{m}(t + 1 - k - m).$$ Observe that if $m \geq t + 1 - k$, that is, if the multiplicity of the word in $M_t(\vec{x'})$ is greater than or equal to its multiplicity in $M_t(\vec{x})$, then this word will not exist in $M_t(\vec{x}) \ominus M_t(\vec{x'})$. Hence, we must have $m \leq t - k$.

    To compute the size of $M_t(\vec{x}) \ominus M_t(\vec{x'})$, we sum over all words with $t + 1 - k$ symbols $a$ and $m$ symbols $b$, where $0 \leq m \leq k \leq t$ and $m \leq t - k$. Taking the sum over these limits and substituting into Equation~\eqref{eq:n=1mults}, we obtain 
    \begin{equation*}
    \begin{split}
        |M_t(\vec{x}) \cap M_t(\vec{x'})| & = |M_t(\vec{x})| - \sum_{k = 0}^{t}\sum_{m = 0}^{\min(t - k, k)} (q - 2)^{k - m} \binom{t + 1}{t + 1 - k}\binom{k}{m}(t + 1 - k - m). 
    \end{split}
    \end{equation*}
    Finally, using Equation~\eqref{eq:multisetspheresize}, we get the desired result when $q \geq 3$.

    The proof of the result for the binary case goes analogously.
\end{proof}

\section{Analysis of \texorpdfstring{$N_q^m(n, 1)$}{Nq{m}(n, 1)} and \texorpdfstring{$N_q^{nm}(n, 1)$}{Nq{nm}(n, 1)}}\label{BoundAndExtremalPairs}
In this section, we explore in detail the behavior of word pairs under a single insertion in both the multiset and non-multiset models. To begin with, we determine the exact values of $N_q^m(n, 1)$ and $N_q^{nm}(n, 1)$ and come to the somewhat surprising conclusion that the values are equal for all $n$ and $q$. Next, we show that equality holds not only for the worst-case values, but also for the quantities $N^m_{\vec{x}, \vec{x'}}$ and $N^{nm}_{\vec{x}, \vec{x'}}$ themselves. 
This shows that the two models are interconnected for $t=1$ (see Remark~\ref{samemodel}), but it should be noticed this is not generally the case for $t\ge 2$ (see Example~\ref{NxxExample}). Finally, we give the word pairs for which the number of channels required to distinguish them is the maximum.

\subsection{Exact Values of \texorpdfstring{$N_q^m(n, 1)$}{Nq{m}(n, 1)} and \texorpdfstring{$N_q^{nm}(n, 1)$}{Nq{nm}(n, 1)}}
In this subsection, we analyse in detail the values $N_q^m(n, 1)$ and $N_q^{nm}(n, 1)$ (that is, when $t = 1$) for all $n$ and $q$. Let $\vec{x} = x_1 \dots x_n = \vec{x}_{[1, n]}$ and $\vec{x'} = x_1' \dots x_n' = \vec{x'}_{[1, n]}$ be distinct words in $\mathbb{Z}_q^n$. Let $S \subseteq \mathbb{Z}_q^n$ and $a \in \mathbb{Z}_q$. We define $aS$ as the set of words $\{a\vec{x} \mid \vec{x} \in S\}$.

\begin{table}[h]
\caption{Comparison of the values $N_2^m(n, t)$ (left) and $N_2^{nm}(n, t)$ (right).}
    \label{tab:multvsnonmult}
    \centering
    \begin{tabular}{|c|c|c|c|c|c|}
         \hline
         \backslashbox{$t$}{$n$} & 1 & 2 & 3 & 4 & 5\\
         \hline
        1 & 2 & 2 & 3 & 3 & 4\\
        2 & 6 & 11 & 18 & 25 & 36\\
        3 & 20 & 42 & 86 & 142 & 227\\
        4& 50 & 139 & 324 & 640 & 1118\\
        5& 132 & 414 & 1112 & 2500 & 4850\\
        \hline
    \end{tabular}
    \quad
    \begin{tabular}{|c|c|c|c|c|c|}
        \hline 
        \backslashbox{$t$}{$n$} & 1 & 2 & 3 & 4 & 5\\
         \hline
        1 & 2 & 2 & 3 & 3 & 4\\
        2 & 9 & 14 & 25 & 32 & 48\\
        3 & 28 & 60 & 125 & 196 & 328\\
        4& 75 & 205 & 490 & 910 & 1686\\
        5& 186 & 616 & 1666 & 3570 & 7272\\
        \hline
    \end{tabular}
\end{table}

The main result of this section is that $N^m_q(n, 1)=N^{nm}_q(n, 1)$. However, this is not generally true for $t > 1$
as indicated by Table~\ref{tab:multvsnonmult}, which has been obtained by exhaustive computer searches. Let us first give an upper bound on $N^m_q(n, 1)$ and $N^{nm}_q(n, 1)$.

\begin{theorem}
\label{thm:nonmultt=1upperbd}
    We have $N_q^{m}(n, 1)\le N_q^{nm}(n, 1) \leq \lceil \frac{n + 2}{2} \rceil$.
\end{theorem}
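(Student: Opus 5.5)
The first inequality $N_q^{m}(n,1)\le N_q^{nm}(n,1)$ is just \eqref{mLessnm} from Remark~\ref{Nexists} with $C=\mathbb{Z}_q^n$. For the second one I would bound $N^{nm}_{\vec{x},\vec{x'}}=\min(S_{1,\vec{x'}}(\vec{x}),S_{1,\vec{x}}(\vec{x'}))$ for an arbitrary pair of distinct words, using two facts about a single insertion.

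The first fact is a formula for multiplicities. For $\vec{y}\in I_1(\vec{x})$, inserting a symbol $c$ anywhere inside or at either end of a maximal run of $c$'s produces the same word. Hence $m(\vec{y},M_1(\vec{x}))$ equals the number of indices $j$ with $\vec{y}$ minus $y_j$ equal to $\vec{x}$. This is exactly the length of the run of $\vec{y}$ from which a symbol must be deleted to obtain $\vec{x}$.

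The second fact is Theorem~\ref{TheLevInsBound} with $t=1$, which gives $N_q^+(n,1)=2$. So $\mathrm{set}(A)$ has at most two elements, and $S_{1,\vec{x'}}(\vec{x})$ is a sum of at most two run lengths.

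In the case $|\mathrm{set}(A)|=1$, say $A$ consists of copies of a single word $\vec{y}$, the words $\vec{x}$ and $\vec{x'}$ arise from $\vec{y}$ by deleting from two different runs $R$ and $R'$, since deleting from the same run gives the same word. These runs are disjoint inside a word of length $n+1$, so $|R|+|R'|\le n+1$. Therefore $\min(|R|,|R'|)\le \lfloor (n+1)/2\rfloor\le\lceil (n+2)/2\rceil$.

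In the case $|\mathrm{set}(A)|=2$, write $\mathrm{set}(A)=\{\vec{y_1},\vec{y_2}\}$. Let $r_i$ and $r_i'$ be the lengths of the runs of $\vec{y_i}$ from which one deletes to obtain $\vec{x}$ and $\vec{x'}$, respectively. It then suffices to prove
\[
r_1+r_2+r_1'+r_2'\le 2\lceil (n+2)/2\rceil+1,
\]
and in fact I would aim for the cleaner target $\le n+3$.

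To get this I would first establish the standard structure of such pairs. Write $\vec{x}=\vec{u}\vec{s}\vec{w}$ and $\vec{x'}=\vec{u}\vec{s'}\vec{w}$, where $\vec{u}$ is the longest common prefix and $\vec{w}$ the longest common suffix, so $\vec{s}$ and $\vec{s'}$ differ in their first and last symbols. Having two common supersequences forces $\vec{s}$ and $\vec{s'}$ to be shifts of one another; for instance $\vec{s}=a\vec{v}$ and $\vec{s'}=\vec{v}b$ with $\vec{v}$ suitably periodic, or the alternating case $\vec{s}=abab\cdots$, $\vec{s'}=baba\cdots$. The two supersequences are then obtained by inserting the differing symbol either at the left end or at the right end of the differing block. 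With this description, the two runs in $\vec{y_1}$ lie at the junction of $\vec{u}$ with the block and inside the block. The two runs in $\vec{y_2}$ lie inside the block and at the junction of the block with $\vec{w}$. Because the block is essentially alternating, the runs inside it have length at most $2$ or so. The runs reaching into $\vec{u}$ and into $\vec{w}$ use disjoint portions of $\vec{x}$. Adding everything up should give at most $|\vec{u}|+|\vec{w}|+O(1)\le n+3$.

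The main obstacle is this last counting step. One has to pin down the two-supersequence structure precisely and handle the small edge cases: an empty $\vec{u}$ or $\vec{w}$, a block of length $1$, and blocks whose end symbols coincide with the neighbouring symbols of $\vec{u}$ or $\vec{w}$ so that runs merge. In these cases the additive constant must be checked carefully to reach exactly $\lceil (n+2)/2\rceil$ rather than something slightly weaker. If a uniform count fails, I would split according to whether the differing block has length $1$ (a substitution-like pair, $\vec{x}=\vec{u}a\vec{w}$ and $\vec{x'}=\vec{u}b\vec{w}$) or length at least $2$ (an adjacent-transposition-like or alternating pair), and bound each case directly.
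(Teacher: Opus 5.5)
\textbf{Verdict: the strategy is viable, but the decisive step is missing.}

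Your reduction is correct and takes a different route from the paper. The first inequality is \eqref{mLessnm}. The run-length description of $m(\vec{y},M_1(\vec{x}))$ is right, and $|\mathrm{set}(A)|\le 2$ follows from Theorem~\ref{TheLevInsBound}. The case $|\mathrm{set}(A)|=1$ is correctly closed by $|R|+|R'|\le n+1$. The target $r_1+r_2+r_1'+r_2'\le n+3$ is true, and it suffices because $\lfloor (n+3)/2\rfloor=\lceil (n+2)/2\rceil$; it is even attained by the odd extremal pair of Theorem~\ref{thm:lowerboundt=1odd}.

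The gap is that the case $|\mathrm{set}(A)|=2$, which carries the whole content of the bound, is only announced. Saying it ``should give $|\vec{u}|+|\vec{w}|+O(1)$'' is not enough when the constant must be exactly $4$. Your sketch of where the runs sit is also inaccurate: no relevant run lies ``inside the block''; instead each $\vec{y_i}$ has one relevant run at each junction. As written, the bound is not proved.

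\textbf{How to complete it.} Write $L=|\vec{s}|$, $a=s_1$, $b=s'_1$, $c=s_L$, $d=s'_L$. Suppose deleting the $i$th symbol of $\vec{y}$ gives $\vec{x}$ and deleting the $j$th gives $\vec{x'}$. Comparing coordinates gives:
\begin{itemize}
\item for $i<j$: $\vec{s'}=b\,\vec{s}_{[1,L-1]}$ and $\vec{y}=\vec{u}\,b\,\vec{s}\,\vec{w}$;
\item for $i>j$: $\vec{s}=a\,\vec{s'}_{[1,L-1]}$ and $\vec{y}=\vec{u}\,a\,\vec{s'}\,\vec{w}$.
\end{itemize}
Each type determines $\vec{y}$ uniquely, so two common supersequences require both relations. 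These force $\vec{s}=abab\cdots$ and $\vec{s'}=baba\cdots$, with $\vec{y_1}=\vec{u}\,b\,\vec{s}\,\vec{w}$ and $\vec{y_2}=\vec{u}\,a\,\vec{s'}\,\vec{w}$.

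Because the block is alternating, every relevant run contains exactly one block symbol.
\begin{itemize}
\item In $\vec{y_1}$, the run yielding $\vec{x}$ is the $b$ at position $|\vec{u}|+1$ plus the trailing $b$'s of $\vec{u}$. The run yielding $\vec{x'}$ is the $c$ at position $|\vec{u}|+L+1$ plus the leading $c$'s of $\vec{w}$.
\item In $\vec{y_2}$, the run yielding $\vec{x'}$ is the $a$ plus the trailing $a$'s of $\vec{u}$. The run yielding $\vec{x}$ is the $d$ plus the leading $d$'s of $\vec{w}$.
\end{itemize}
Since $a\neq b$ and $c\neq d$, the extra parts total at most $|\vec{u}|+|\vec{w}|=n-L\le n-1$. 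So the sum is at most $n+3$, with no case split needed.

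\textbf{Comparison with the paper.} The paper uses no structure lemma. After normalising $x_{h+1}\neq x_h$, it directly counts insertion vectors and gets $S_{1,\vec{x}}(\vec{x'})\le k+2$ (or $2$) and $S_{1,\vec{x'}}(\vec{x})\le h+k+2$ (or $h+2$). Your completed argument needs the lemma above, which in passing reproves $N_q^+(n,1)=2$. In return it gives the exact values of both $S$-quantities rather than upper bounds.
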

\begin{proof} 
Since $N_q^{m}(n, 1)\le N_q^{nm}(n, 1)$ by \eqref{mLessnm},  it suffices to prove the latter inequality of the claim.
Consider distinct words $\vec{x}, \vec{x'} \in \mathbb{Z}_q^n$. Let $i = h + 1$ be the first coordinate where $\vec{x}$ and $\vec{x'}$ differ, that is, $x_i \neq x_i'$. If they differ already in the first coordinate, then $h = 0$. Similarly, let $n - k$ be the last coordinate where $\vec{x}$ and $\vec{x'}$ differ. Again, if they differ in the $n$th coordinate, then $k = 0$. Notice that the first and the last coordinate where $\vec{x}$ and $\vec{x'}$ differ can also be the same if they differ in exactly one coordinate. Clearly, we have $h + k \leq n - 1$ and $h, k \geq 0$. Let $x_{h + 1} = a$, $x_{h + 1}' = b$, $x_{n - k} = c$ and $x_{n - k}' = d$, where $a \neq b$ and $c \neq d$. When $h + k = n - 1$, $a$ and $c$ (and $b$ and $d$) coincide. For illustrative purposes, we can represent $\vec{x}$ and $\vec{x'}$ as follows:  

{\centering
    \begin{tabular}{c c c c c c c c c c c}
        $\vec{x}$ & $ =$ & $x_1$ & $\dots$ & $x_h$ & $a$ & $\dots$ & $c$ & $x_{n - k + 1}$ & \dots & $x_n$ \\
        $\vec{x'}$ & $ =$ & $x_1$ & $\dots$ & $x_h$ & $b$ & $\dots$ & $d$ & $x_{n - k + 1}$ & $\dots$ & $x_n$\\
    \end{tabular}
    \par
    }
    \noindent since $x_i = x_i'$ for $i \leq h$ and $i \geq n - k + 1$ (when such indices exist, that is $h > 0$ or $k > 0$). Furthermore, we will represent the number of insertions that have occurred with a subscript. That is, $\vec{x} = \vec{x}_{(0)}$ and $\vec{x}_{(1)}$ represent the word obtained by making no, and one insertion(s), respectively, in $\vec{x}$. Notice that in this proof, with $a$, we refer to the specific letter in $\vec{x}_{(0)}$ where $\vec{x}$ and $\vec{x'}$ differ for the first time counting from the left, although the same letter $a$ can appear in other places also (we use similarly the reference to the specific letters $b$, $c$ and $d$).
    Without loss of generality, let us assume that $a \neq x_h$ (if $h > 0$), which is possible since at least one of $a$ or $b$ is not equal to $x_h$.

    Let $A = \mathrm{set}(M_1(\vec{x}) \cap M_1(\vec{x'}))$. 
    Our first goal is to determine an upper bound on the value of $S_{t, \vec{x}}(\vec{x'})$.
    First, observe that if an insertion $f$ occurs before $b$ in $\vec{x'}$, then the resulting word is in $A$ if only if the insertion occurs right before $b$ (see the illustration below):

    {\centering
    \begin{tabular}{c c c c c c c c c c c c}
        $\vec{x}_{(0)}$ & $ =$ & $x_1$ & $\dots$ & $x_h$ & $a$ & $\dots$ & $c$ & $x_{n - k + 1}$ & $\dots$ & $x_n$ \\
        $\vec{x'}_{(1)}$ & $ =$ & $x_1$ & $\dots$ & $x_h$ & $f$ & $b$ & $\dots$ & $d$ & $x_{n - k + 1}$ & $\dots$ & $x_n$\\
    \end{tabular}
    \par
    }
    
    \noindent Indeed, if the insertion occurs elsewhere before $b$ (in the case $h > 0$), then the word cannot be obtained from $\vec{x}$ by a single insertion, because $a \neq x_h$ and $a \neq b$. Moreover, if $f \neq a$, the word $\vec{x'}_{(1)}$ cannot be obtained from $\vec{x}$ by a single insertion. Hence, $f = a$, and the number of words of this form in $A$ is at most 1.

    If no insertion is made to $\vec{x'}$ before $b$, then since $a \neq b$, there has to be an insertion in $\vec{x}$ before $a$, for instance the symbol $b$ just before $a$ (see the illustration below):

    {\centering
    \begin{tabular}{c c c c c c c c c c c c}
        $\vec{x}_{(1)}$ & $ =$ & $x_1$ & $\dots$ & $x_h$ & $b$ & $a$ & $\dots$ & $c$ & $x_{n - k + 1}$ & $\dots$ & $x_n$ \\
        $\vec{x'}_{(0)}$ & $ =$ & $x_1$ & $\dots$ & $x_h$ & $b$ & $\dots$ & $d$ & $x_{n - k + 1}$ & $\dots$ & $x_n$\\
    \end{tabular}
    \par
    }
    \noindent If $b$ and $d$ are distinct (that is $h + k \neq n - 1$), we cannot insert any letters between them, since $c \neq d$. Hence, the insertion can only occur after $d$, and the resultant word could potentially be obtained from $\vec{x}$ as well (see the illustration below):

    {\centering
    \begin{tabular}{c c c c c c c c c c c c}
        $\vec{x}_{(1)}$ & $ =$ & $x_1$ & $\dots$ & $x_h$ & $b$ & $a$ & $\dots$ & $c$ & $x_{n - k + 1}$ & $\dots$ & $x_n$ \\
        $\vec{x'}_{(1)}$ & $ =$ & $x_1$ & $\dots$ & $x_h$ & $b$ & $\dots$ & $d$ & $x_{n - k + 1}$ & $\dots$ & $\dots$ & $x_n$\\
    \end{tabular}
    \par
    }
    
    If $k = 0$ or $c \neq x_{n - k + 1}$, then for $\vec{x'}_{(1)}$ to possibly be in $A$, the insertion after $d$ must be the letter $c$ immediately after $d$. If $k > 0$ and $c = x_{n - k + 1}$, then we can possibly insert in any of the available positions after $d$ to get a word in $A$ (but the inserted symbol is always uniquely determined by the corresponding symbol in $\vec{x}$). There are $k + 1$ such positions after $d$ in $\vec{x'}$, and hence the total number of words of this form in $A$ is at most $k + 1$. Therefore, we have 
    \begin{equation}
        \label{t=1:bound}
    S_{t, \vec{x}}(\vec{x'}) \leq \begin{cases}
        k + 2 & \text{if } c = x_{n - k + 1},\\
        2 & \text{if } c \neq x_{n - k + 1}.
    \end{cases}
    \end{equation}
    when $k > 0$. Observe that the upper bound also holds for $k = 0$.

    \medskip
    Next, we determine an upper bound on $S_{t, \vec{x'}}(\vec{x})$.
    In $\vec{x}$, if an insertion occurs before $a$, then this word can also be obtained from $\vec{x'}$ by inserting a suitable symbol at some position only after $b$ since $a \neq b$ (see the illustration below) :  

    {\centering
    \begin{tabular}{c c c c c c c c c c c c}
        $\vec{x'}_{(0)}$ & $ =$ & $x_1$ & $\dots$ & $x_h$ & $b$ & $\dots$ & $d$ & $x_{n - k + 1}$ & $\dots$ & $x_n$ \\
        $\vec{x}_{(1)}$ & $ =$ & $x_1$ & $\dots$ & $\dots$ & $x_h$ & $a$ & $\dots$ & $c$ & $x_{n - k + 1}$ & $\dots$ & $x_n$\\
    \end{tabular}
    \par
    }
    \noindent Hence, in each of the $h+1$ positions before $a$, the symbol inserted to $\vec{x}$ is uniquely determined by the corresponding symbol of $\vec{x'}$. Thus, we have at most $h + 1$ different insertion vectors of this kind giving words in $A$. This holds also for $h=0$.

    If no insertion is made to $\vec{x}$ before $a$, there has to be an insertion in $\vec{x'}$ before $b$ since $a \neq b$. In particular, it must be the symbol $a$ right before $b$, if $h=0$ and also if $h>0$ since     
    $a \neq x_h$. Thus, we know the word $\vec{x'}$ after the insertions exactly (see the illustration below):
    
    {\centering
    \begin{tabular}{c c c c c c c c c c c c}
        $\vec{x'}_{(1)}$ & $ =$ & $x_1$ & $\dots$ & $x_h$ & $a$ & $b$ & $\dots$ & $d$ & $x_{n - k + 1}$ & \dots & $x_n$ \\
        $\vec{x}_{(0)}$ & $ =$ & $x_1$ & $\dots$ & $x_h$ & $a$ & $\dots$ & $c$ & $x_{n - k + 1}$ & $\dots$ & $x_n$\\
    \end{tabular}
    \par
    }
    If $a$ and $c$ are distinct, we cannot insert any letters between them, since $c \neq d$. Hence, the insertion can only occur after $c$, and the resultant word could potentially be obtained from $\vec{x}$ as well (see the illustration below):

    {\centering
    \begin{tabular}{c c c c c c c c c c c c}
        $\vec{x'}_{(1)}$ & $ =$ & $x_1$ & $\dots$ & $x_h$ & $a$ & $b$ & $\dots$ & $d$ & $x_{n - k + 1}$ & \dots & $x_n$ \\
        $\vec{x}_{(1)}$ & $ =$ & $x_1$ & $\dots$ & $x_h$ & $a$ & $\dots$ & $c$ & $x_{n - k + 1}$ & $\dots$ & $\dots$ & $x_n$\\
    \end{tabular}
    \par
    }
    If $k = 0$ or $d \neq x_{n - k + 1}$, then for $\vec{x}_{(1)}$ to possibly be in $A$, the insertion after $c$ must be the letter $d$ immediately after $c$. If $k > 0$ and $d = x_{n - k + 1}$, then we can possibly insert in any of the available positions after $c$ to get a word in $A$ (but the inserted symbol is always uniquely determined by $\vec{x'}$). There are $k + 1$ such positions after $c$ in $\vec{x}$, and hence the total number of words of this form in $A$ is at most $k + 1$. Thus, we have 
    $$S_{t, \vec{x'}}(\vec{x}) \leq \begin{cases}
        h + k + 2 & \text{if } d = x_{n - k + 1},\\
        h + 2 & \text{if } d \neq x_{n - k + 1}.
    \end{cases}$$
    This bound also holds if $h = 0$ or $k = 0$. 

    Since $c \neq d$, we have by \eqref{t=1:bound} and the inequality above that
\begin{equation}
\label{eq:nonmultubfinaleq}
        \begin{split}
        \min \left(S_{t, \vec{x'}}(\vec{x}), S_{t, \vec{x}}(\vec{x'})\right) & \leq \begin{cases}
\min (h + 2, k + 2) & \text{if } c = x_{n - k + 1}, d \neq x_{n - k + 1},\\
\min (h + k + 2, 2) & \text{if } c \neq x_{n - k + 1}, d = x_{n - k + 1},\\
      \min (h + 2, 2) & \text{if } c \neq x_{n - k + 1}, d \neq x_{n - k + 1}
    \end{cases}\\
    & \leq \min (h + 2, k + 2).
        \end{split}
    \end{equation}
    since, $h, k \geq 0$.
       Therefore, the maximum over any distinct $\vec{x}$ and $\vec{x'}$ of the minimum (on the left-hand side) above is bounded from above by $\max_{h,k} \min (h + 2, k + 2)$.

    Let $n - h - k = m$. Hence, we have $m \geq 1$. Observe that $h + 2$ is a function increasing in $h$, while $k + 2 = n - h - m + 2$ is a function decreasing in $h$. Hence, the maximum value of $\min(h + 2, n - h - m + 2)$ is attained at the point where the two lines intersect, that is, when $h = k = \frac{n - m}{2}$. Since, $h$ and $k$ are integers, let us assume without loss of generality that either $h = \lfloor\frac{n - m}{2}\rfloor$ and $k = \lceil\frac{n - m}{2} \rceil$, or $h = \lceil\frac{n - m}{2}\rceil$ and $k = \lfloor\frac{n - m}{2} \rfloor$. First, let $h = \lfloor\frac{n - m}{2}\rfloor$ and $k = \lceil\frac{n - m}{2} \rceil$.
    Since $\min (h + 2, k + 2)$ is a decreasing function in $m$, the maximum value of $\min (h + 2, k + 2)$ is attained when $m$ is minimized. Consider the following cases:

    \begin{enumerate}
        \item When $m = 1$:
        \begin{itemize}
            \item If $n$ is odd, then $h = k = \frac{n - 1}{2}$ and $\min(h + 2, k + 2) = \frac{n + 3}{2}$.
            \item If $n$ is even, then $h = \frac{n - 2}{2}$, $k = \frac{n}{2}$ and $\min(h + 2, k + 2) = \frac{n + 2}{2}$.
        \end{itemize}
        \item When $m = 2$:
        \begin{itemize}
            \item If $n$ is odd, then $h = \frac{n - 3}{2}$, $k = \frac{n - 1}{2}$ and $\min(h + 2, k + 2) = \frac{n + 1}{2}$.
            \item If $n$ is even, then $h = k = \frac{n - 2}{2}$ and $\min(h + 2, k + 2) = \frac{n + 2}{2}$.
        \end{itemize}
    \end{enumerate}
    For $m > 2$, $\min(h + 2, k + 2)$ will clearly be less than $\frac{n + 2}{2}$. Therefore, $N_q^{nm}(n, 1) \leq \lceil \frac{n + 2}{2} \rceil$ and the maximum value is attained when $n$ is even and $h = \frac{n - 2}{2}$, $k = \frac{n}{2}$ or $h = k = \frac{n - 2}{2}$; or when $n$ is odd and $h = k = \frac{n - 1}{2}$. Similarly, by considering the case when $h = \lceil\frac{n - m}{2}\rceil$ and $k = \lfloor\frac{n - m}{2} \rfloor$, we obtain that $N_q^{nm}(n, 1) \leq \lceil \frac{n + 2}{2} \rceil$ and the maximum value is attained when $n$ is even and $h = \frac{n}{2}$, $k = \frac{n - 2}{2}$ or $h = k = \frac{n - 2}{2}$; or when $n$ is odd and $h = k = \frac{n - 1}{2}$.
\end{proof}

In what follows, we aim to derive an optimal lower bound for $N_q^{m}(n, 1)$ that attains the upper bound of the previous theorem; this will be established in Theorem~\ref{thm:lowerboundt=1odd}, where we present two words for which the intersection has size $\lceil (n+2)/2 \rceil$. By \eqref{mLessnm}, this immediately yields an optimal lower bound for $N_q^{nm}(n,1)$ as well. To begin, we first develop a better understanding of the words in the multiset $M_1(\vec{x})$. To this end, we illustrate the $1$-insertion multiset spheres $M_1(\vec{x})$ and $M_1(\vec{x'})$ in Table~\ref{tab:doublerecursion}. Furthermore, assuming $\vec{x} = x_1 \dots x_n = \vec{x}_{[1, n]}$ and $\vec{x'} = x_1' \dots x_n' = \vec{x'}_{[1, n]}$ for $n\ge 3$, we first give the following recurrence relation, which we obtain by partitioning the words in $M_1(\vec{x})$ into words with the same first and the last letter: 
\begin{equation}
    \label{eq:doubrec}
    \begin{split}
       M_1(\vec{x}) = x_1M_1(\vec{x}_{[2, n - 1]})x_n \underset{a \in \mathbb{Z}_q}{\bigoplus} x_1\vec{x}_{[2, n]}a \underset{b \in \mathbb{Z}_q}{\bigoplus} b\vec{x}_{[1, n - 1]}x_n.
    \end{split}
\end{equation}
Here the part $ x_1M_1(\vec{x}_{[2, n - 1]})x_n$ corresponds to the words where the insertion occurs anywhere between the first letter $x_1$ and the last letter $x_n$ in $\vec{x}$. In the second part $x_1\vec{x}_{[2, n]}a$ (resp. in the last part $b\vec{x}_{[1, n - 1]}x_n$), the insertion occurs at the end (resp. at the beginning) of $\vec{x}$.
We can write an analogous expression for $M_1(\vec{x'})$, and depending on the relation between $x_1$ and $x_1'$, and $x_n$ and $x_n'$, we have different cases for $M_1(\vec{x}) \cap M_1(\vec{x'})$. For instance, if $x_1=x_1'$ and $x_n=x_n'$, then we apply~\eqref{eq:doubrec} and group the words by their first and last letters to obtain:
\begin{equation}
 \begin{split}
 \label{eq:doublerecx1=x1'xn=xn'}
   M_1(\vec{x}) \cap M_1(\vec{x'}) = &\big[(x_1M_1(\vec{x}_{[2, n - 1]})x_n \oplus x_1\vec{x}_{[2, n]}x_n \oplus x_1\vec{x}_{[1, n - 1]}x_n) \\& \cap (x_1M_1(\vec{x'}_{[2, n - 1]})x_n \oplus x_1\vec{x'}_{[2, n]}x_n \oplus x_1\vec{x'}_{[1, n - 1]}x_n)\big] \\& \underset{\substack{a \in \mathbb{Z}_q \\ a \neq x_n}}{\bigoplus} \big[x_1\vec{x}_{[2, n]}a \cap x_1\vec{x'}_{[2, n]}a] \underset{\substack{b \in \mathbb{Z}_q \\ b \neq x_1}}{\bigoplus} [b\vec{x}_{[1, n - 1]}x_n \cap b\vec{x'}_{[1, n - 1]}x_n\big].
 \end{split}   
\end{equation}

\begin{table}
\caption{A representation of the recursion relation for insertion spheres $M_1(\vec{x})$ and $M_1(\vec{x'})$ in the multiset model. Here we separate the words in $M_1(\vec{x})$ first depending on whether the insertion occurs in the middle of the word $\vec{x}$ or before the first letter or after the last letter of $\vec{x}$. Moreover, we separate the cases based on whether the insertion to the end (resp. beginning) is $x_n$ (resp. $x_1$) itself or not. The same is done for $M_1(\vec{x'})$.}
\parbox{.45\linewidth}{
\centering
\begin{tabular}{|c|c|c|}
\hline
$x_1$ &  & $x_n$\\
\vdots & $M_1(\vec{x}_{[2, n - 1]})$ & \vdots\\
$x_1$ & & $x_n$\\
\hline
\hline
$x_1$ & $\vec{x}_{[2, n]}$ & $0$\\
\hline
\vdots & \vdots & \vdots\\
\hline
$x_1$ & $\vec{x}_{[2, n]}$ & $x_n$\\
\hline
\vdots & \vdots & \vdots\\
\hline
$x_1$ & $\vec{x}_{[2, n]}$ & $q-1$\\
\hline
\hline
$0$ & $\vec{x}_{[1, n - 1]}$ & $x_n$\\
\hline
\vdots & \vdots & \vdots\\
\hline
$x_1$ & $\vec{x}_{[1, n - 1]}$ & $x_n$\\
\hline
\vdots & \vdots & \vdots\\
\hline
$q - 1$ & $\vec{x}_{[1, n - 1]}$ & $x_n$\\
\hline
\end{tabular}
}
\hfill
\parbox{.45\linewidth}{
\centering
\begin{tabular}{|c|c|c|}
\hline
$x_1'$ &  & $x_n'$\\
\vdots & $M_1(\vec{x'}_{[2, n - 1]})$ & \vdots\\
$x_1'$ & & $x_n'$\\
\hline
\hline
$x_1'$ & $\vec{x'}_{[2, n]}$ & $0$\\
\hline
\vdots & \vdots & \vdots\\
\hline
$x_1'$ & $\vec{x'}_{[2, n]}$ & $x_n'$\\
\hline
\vdots & \vdots & \vdots\\
\hline
$x_1'$ & $\vec{x'}_{[2, n]}$ & $q - 1$\\
\hline
\hline
$0$ & $\vec{x'}_{[1, n - 1]}$ & $x_n'$\\
\hline
\vdots & \vdots & \vdots\\
\hline
$x_1'$ & $\vec{x'}_{[1, n - 1]}$ & $x_n'$\\
\hline
\vdots & \vdots & \vdots\\
\hline
$q - 1$ & $\vec{x'}_{[1, n - 1]}$ & $x_n'$\\
\hline
\end{tabular}
}
\label{tab:doublerecursion}
\end{table}

Equation~\eqref{eq:doublerecx1=x1'xn=xn'} helps us to obtain recursive relations for the size of the intersection of $M_1(\vec{x})$ and $M_1(\vec{x'})$ as shown by the following results.

\begin{proposition}
\label{prop:x1=x1'xn=xn'}
Let $n\ge 3$ and $\vec{x}, \vec{x'} \in \mathbb{Z}_q^n$ be such that $\vec{x} \neq \vec{x'}$ with $x_1 = x_1'$ and $x_n = x_n'$. We have
\begin{equation*}
    M_1(\vec{x}) \cap M_1(\vec{x'}) = \big[x_1M_1(\vec{x}_{[2, n - 1]})x_n \oplus \vec{x}x_n \oplus x_1\vec{x}\big] \cap \big[x_1M_1(\vec{x'}_{[2, n - 1]})x_n \oplus \vec{x'}x_n \oplus x_1\vec{x'}\big].
\end{equation*}
\end{proposition}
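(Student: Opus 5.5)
We will derive the identity from the recurrence \eqref{eq:doubrec} by grouping words according to their first and last letters. The key fact is that multiset intersection respects this grouping: if two multisets are split into disjoint classes by (first letter, last letter), then their intersection is the $\oplus$-sum of the intersections of corresponding classes. This is immediate from M3, since every element lies in exactly one class. So it suffices to identify, for every pair $(u,v)\in\mathbb{Z}_q^2$, the sub-multiset of $M_1(\vec{x})$ of words starting with $u$ and ending with $v$, do the same for $\vec{x'}$, and intersect.

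\emph{Step 1: the class $u=x_1$, $v=x_n$.} Using $x_1=x_1'$ and $x_n=x_n'$, apply \eqref{eq:doubrec}. This class contains all of $x_1M_1(\vec{x}_{[2,n-1]})x_n$. It also contains the term $a=x_n$ of the second sum, namely $x_1\vec{x}_{[2,n]}x_n=\vec{x}x_n$, and the term $b=x_1$ of the third sum, namely $x_1\vec{x}_{[1,n-1]}x_n=x_1\vec{x}$. The analogous statement holds for $\vec{x'}$. Intersecting these two classes gives the first bracketed term in \eqref{eq:doublerecx1=x1'xn=xn'}, and this is exactly the right-hand side of the proposition.

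\emph{Step 2: all other classes contribute nothing.} For $a\neq x_n$ the remaining classes are $\{x_1\vec{x}_{[2,n]}a\}$ and $\{x_1\vec{x'}_{[2,n]}a\}$, each a single word of multiplicity one. Their intersection is nonempty only if $\vec{x}_{[2,n]}=\vec{x'}_{[2,n]}$. Together with $x_1=x_1'$, that would force $\vec{x}=\vec{x'}$, a contradiction. Symmetrically, for $b\neq x_1$ the intersection of $b\vec{x}_{[1,n-1]}x_n$ and $b\vec{x'}_{[1,n-1]}x_n$ is empty, since $x_n=x_n'$ and $\vec{x}\neq\vec{x'}$. Classes with $u\neq x_1$ and $v\neq x_n$ are empty in both spheres. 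Hence all the extra $\oplus$-terms in \eqref{eq:doublerecx1=x1'xn=xn'} vanish.

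The only step needing care is the bookkeeping in Step 1. We must check that \eqref{eq:doubrec} really is a disjoint partition by endpoint letters, with correct multiplicities. In particular, the middle part always begins with $x_1$ and ends with $x_n$ because $n\ge3$ keeps both endpoints of $\vec{x}$ in place. We must also check that the words $\vec{x}x_n$ and $x_1\vec{x}$ are correctly assigned to the $(x_1,x_n)$ class even when they coincide with words from the middle part; this is harmless, since $\oplus$ adds multiplicities. Once this is verified, the proposition follows directly.
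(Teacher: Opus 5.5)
Your proposal is correct and is essentially the paper's proof. The paper also starts from the endpoint-letter grouping in \eqref{eq:doublerecx1=x1'xn=xn'} and discards the terms with $a\neq x_n$ and $b\neq x_1$, because equality of those single words would force $\vec{x}=\vec{x'}$; the only difference is that you explicitly justify, via M3, that the multiset intersection splits over the disjoint (first letter, last letter) classes, a step the paper uses implicitly when writing \eqref{eq:doublerecx1=x1'xn=xn'}.
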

\begin{proof}
    In Equation~\eqref{eq:doublerecx1=x1'xn=xn'}, observe that if either $x_1\vec{x}_{[2, n]} a = x_1\vec{x'}_{[2, n]} a$ or $b\vec{x}_{[1, n - 1]}x_n = b\vec{x'}_{[1, n - 1]}x_n$ for any $a,b \in \mathbb{Z}_q$, this implies that $\vec{x} = \vec{x'}$, since $x_1\vec{x}_{[2, n]} = \vec{x}_{[1, n - 1]}x_n = \vec{x}$ and $x_1\vec{x'}_{[2, n]} = \vec{x'}_{[1, n - 1]}x_n = \vec{x'}$. Hence, $\{x_1\vec{x}_{[2, n]}a\} \cap \{x_1\vec{x'}_{[2, n]}a\} = \emptyset$ and $\{b\vec{x}_{[1, n - 1]}x_n\} \cap \{b\vec{x'}_{[1, n - 1]}x_n\} = \emptyset$. Therefore, Equation~\eqref{eq:doublerecx1=x1'xn=xn'} can be simplified to get 
\begin{equation}
\label{eq:propx1=x1'xn=xn'proof2}
 \begin{split}
    M_1(\vec{x}) \cap M_1(\vec{x'}) = & (x_1M_1(\vec{x}_{[2, n - 1]})x_n \oplus x_1\vec{x}_{[2, n]}x_n \oplus x_1\vec{x}_{[1, n - 1]}x_n) \\& \cap (x_1M_1(\vec{x'}_{[2, n - 1]})x_n \oplus x_1\vec{x'}_{[2, n]}x_n \oplus x_1\vec{x'}_{[1, n - 1]}x_n).
 \end{split} 
\end{equation}
This completes the proof.
\end{proof}

\begin{corollary}
\label{cor:x1=x1'xn=xn'}
    Let $n\ge 3$  and  $\vec{x}, \vec{x'} \in \mathbb{Z}_q^n$ be such that $\vec{x} \neq \vec{x'}$ with $x_1 = x_1'$ and $x_n = x_n'$. We get $|M_1(\vec{x}) \cap M_1(\vec{x'})| \leq |M_1(\vec{x}_{[2, n - 1]}) \cap M_1(\vec{x'}_{[2, n - 1]})| + 4$.
\end{corollary}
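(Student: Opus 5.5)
We start from Proposition~\ref{prop:x1=x1'xn=xn'}, which writes the intersection as $(P\oplus \vec{x}x_n \oplus x_1\vec{x})\cap(P'\oplus \vec{x'}x_n\oplus x_1\vec{x'})$, where $P=x_1M_1(\vec{x}_{[2,n-1]})x_n$ and $P'=x_1M_1(\vec{x'}_{[2,n-1]})x_n$. The plan is to bound the size of this multiset intersection by the intersection of the two ``main'' parts plus the contribution of the four extra single words.

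The first step is an elementary multiset inequality: for multisets $U,V$ and words $\vec{u},\vec{v},\vec{u'},\vec{v'}$ we have
\[
|(U\oplus \vec{u}\oplus \vec{v})\cap(V\oplus \vec{u'}\oplus \vec{v'})|\le |U\cap V|+4 .
\]
This follows pointwise on multiplicities. For any word $\vec{z}$,
\[
\min(m_U+\alpha,\,m_V+\beta)\le \min(m_U,m_V)+\alpha+\beta,
\]
where $\alpha,\beta\in\{0,1,2\}$ are the multiplicities of $\vec{z}$ among the two added words on each side. Summing over $\vec{z}$ gives the bound, since the total of the $\alpha$'s is $2$ and the total of the $\beta$'s is $2$.

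The second step is to identify $|P\cap P'|$ with $|M_1(\vec{x}_{[2,n-1]})\cap M_1(\vec{x'}_{[2,n-1]})|$. The map $\vec{w}\mapsto x_1\vec{w}x_n$ is injective and preserves multiplicities, so prefixing and suffixing the same letters commutes with the multiset intersection. Since $x_1=x_1'$ and $x_n=x_n'$, the two parts use the same prefix and suffix, and the claim follows. Here $n\ge 3$ ensures that $\vec{x}_{[2,n-1]}$ is nonempty, although the argument would in fact work for the empty word too.

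Combining the two steps gives the corollary. There is no real obstacle; the only care needed is to justify the multiset inequality coordinatewise rather than via set reasoning. If one wanted a sharper constant, one would have to argue that some of the extra words (for instance $\vec{x}x_n$ versus $\vec{x'}x_n$, which differ) cannot be matched, but this is not needed for the stated bound of $+4$.
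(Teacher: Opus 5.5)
Your proposal is correct and follows essentially the same route as the paper. Like the paper, you start from Proposition~\ref{prop:x1=x1'xn=xn'}, drop the common prefix $x_1$ and suffix $x_n$, and charge at most one to each of the four extra words $\vec{x}x_n$, $x_1\vec{x}$, $\vec{x'}x_n$, $x_1\vec{x'}$; your pointwise inequality $\min(m_U+\alpha,m_V+\beta)\le\min(m_U,m_V)+\alpha+\beta$ is a rigorous form of the paper's remark that each of these words ``can contribute at most once'' to the intersection.
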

\begin{proof}
We take the cardinalities of sets on both sides of Equation (\ref{eq:propx1=x1'xn=xn'proof2}). In the process, we can discard the initial $x_1$ and the final $x_n$ since they are common to all the words in the right hand side and we are interested in the number of words rather than the words themselves. Clearly, in this case $|M_1(\vec{x}) \cap M_1(\vec{x'})| \leq |M_1(\vec{x}_{[2, n - 1]}) \cap M_1(\vec{x'}_{[2, n - 1]})| + 4$ since each of the words $\vec{x}x_n$, $x_1\vec{x}$, $\vec{x'}x_n$ and $x_1\vec{x'}$ can contribute at most once to the intersection.
\end{proof}

With more careful analysis, it is possible to improve the previous bound as shown in the next theorem (the proof is somewhat technical and it is deferred to Appendix). The bound in the next theorem is actually tight; indeed, for example, the word pairs given in~Theorem~\ref{thm:lowerboundt=1odd} attain the inequality.

\begin{theorem}
\label{thm:doublerecupperbound}
    Let $n\ge 3$ and $\vec{x}, \vec{x'} \in \mathbb{Z}_q^n$ be such that $\vec{x} \neq \vec{x'}$ with $x_1 = x_1'$ and $x_n = x_n'$. We have
    \begin{equation}\label{+1Bound}
           |M_1(\vec{x}) \cap M_1(\vec{x'})| \leq |M_1(\vec{x}_{[2, n - 1]}) \cap M_1(\vec{x'}_{[2, n - 1]})| + 1.
           \end{equation}
    \end{theorem}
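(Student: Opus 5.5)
The plan is to start from Proposition~\ref{prop:x1=x1'xn=xn'} and strip the common outer letters. Write $\vec{u}=\vec{x}_{[2,n-1]}$ and $\vec{u'}=\vec{x'}_{[2,n-1]}$; these are distinct words of length $n-2$. Removing the leading $x_1$ and trailing $x_n$ from every word in the proposition gives
\[
|M_1(\vec{x})\cap M_1(\vec{x'})| = |P\cap P'|,
\]
where
\[
P=M_1(\vec{u})\oplus \vec{u}x_n\oplus x_1\vec{u}, \qquad P'=M_1(\vec{u'})\oplus \vec{u'}x_n\oplus x_1\vec{u'} .
\]
The difference $|P\cap P'|-|M_1(\vec{u})\cap M_1(\vec{u'})|$ is a sum of local gains. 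For each word $\vec{w}$ the gain is
\[
g(\vec{w})=\min(m+e,m'+e')-\min(m,m'),
\]
where $m,m'$ are the multiplicities of $\vec{w}$ in $M_1(\vec{u})$ and $M_1(\vec{u'})$, and $e,e'\in\{0,1,2\}$ are the extra copies added. The goal is to show that the total gain is at most $1$.

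\textbf{Step 1: locate where gains can occur.} Only the four words $\vec{u}x_n$, $x_1\vec{u}$, $\vec{u'}x_n$, $x_1\vec{u'}$ can have $g>0$. Since $\vec{u}\ne\vec{u'}$, we have $\vec{u}x_n\neq\vec{u'}x_n$ and $x_1\vec{u}\ne x_1\vec{u'}$. So an extra copy on both sides at the same word happens only through a cross coincidence, $\vec{u}x_n=x_1\vec{u'}$ or $x_1\vec{u}=\vec{u'}x_n$. At a word with $e'=0$, we get $g>0$ only if $m<m'$, and then $g\le \min(e,m'-m)$.

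\textbf{Step 2: express multiplicities by runs.} The multiplicity of a word $\vec{w}$ in $M_1(\vec{v})$ is the length of the run of $\vec{w}$ from which a deletion produces $\vec{v}$. Hence $m(\vec{u}x_n,M_1(\vec{u}))=r+1$, where $r$ is the number of trailing $x_n$'s of $\vec{u}$; equivalently, it is the length of the last run of $\vec{u}x_n$. If $\vec{u}x_n\in I_1(\vec{u'})$, then $\vec{u'}$ arises by deleting from some other run of $\vec{u}x_n$, because $\vec{u'}\ne\vec{u}$. Getting a gain therefore needs that other run to be strictly longer than the final run. The symmetric statements hold for $x_1\vec{u}$, which involves the first run, and for the primed words.

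\textbf{Step 3: case analysis.} Go through which of the gain events occur, namely a gain at $\vec{u}x_n$, at $x_1\vec{u}$, at $\vec{u'}x_n$, at $x_1\vec{u'}$, or at a coincidence word. Show that any two of them together force incompatible run structures on $\vec{u}$ and $\vec{u'}$. For example, suppose there are gains at both $\vec{u}x_n$ and $\vec{u'}x_n$. Then $\vec{u}x_n\in I_1(\vec{u'})$ and $\vec{u'}x_n\in I_1(\vec{u})$. Comparing lengths of final runs, each must be strictly shorter than some interior run of the other word, and the two words differ only by a shift inside a single segment. Reading this off the last runs should give the contradiction $r<r'<r$. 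Mixed cases, such as gains at $\vec{u}x_n$ and $x_1\vec{u'}$, should be handled by the same comparison applied at opposite ends. In a coincidence case, where $\vec{u}x_n=x_1\vec{u'}$ and the word carries extras on both sides, the word is forced to be fully determined by a shift. Then I would check $g\le 1$ there and that no other gain can appear.

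I expect Step~3 to be the main obstacle. The mixed-end cases, and the coincidence with $e=2$ when $\vec{u}x_n=x_1\vec{u}$ (that is, $\vec{u}$ is constant and $x_1=x_n$), need careful bookkeeping. I would first try to dispose of those degenerate coincidence cases directly. After that, I would prove a single lemma: if $\vec{u}x_n\in I_1(\vec{u'})$ with $m<m'$, then $\vec{u'}$ is obtained from $\vec{u}$ by deleting its last symbol and inserting a symbol into a strictly longer interior run. This lemma rules out the remaining combinations uniformly.
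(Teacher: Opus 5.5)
Your proposal is correct and takes essentially the paper's route. Stripping the common end letters via Proposition~\ref{prop:x1=x1'xn=xn'}, your run criterion for a gain at a boundary word is the content of Lemmas~\ref{lem:lem1a2a3a4a1}--\ref{lem:lem1a2a3a4a4}, and your pairwise exclusions (with the cross coincidences $\vec{u}x_n=x_1\vec{u'}$ and $x_1\vec{u}=\vec{u'}x_n$ split off, and your $g(\vec{w})$ bookkeeping making the paper's ``each word contributes at most once'' precise) correspond to Lemmas~\ref{lem:onlyoneoffirstconditions} and~\ref{lem:onlytwoofsecondconditions} plus the paper's final ``gain plus coincidence'' check.

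One slip to fix when writing it out: your unifying lemma is stated backwards, since a gain at $\vec{u}x_n$ means $\vec{u'}$ is obtained from $\vec{u}$ by deleting one symbol of a non-final (possibly initial) run of $\vec{u}x_n$ of length greater than $r+1$ and then appending $x_n$; equivalently, $\vec{u}$ is obtained from $\vec{u'}$ by deleting the last symbol of $\vec{u'}$ and inserting into that run.
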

\begin{proof}
    See Appendix.
\end{proof}

From Proposition~\ref{prop:x1=x1'xn=xn'} it follows directly for distinct $\vec{x}$ and $\vec{x'}$ with $x_1 = x_1'$ and $x_n = x_n'$ that
\begin{equation}\label{+0Bound}
    |M_1(\vec{x}) \cap M_1(\vec{x'})|\ge |M_1(\vec{x}_{[2, n - 1]}) \cap M_1(\vec{x'}_{[2, n - 1]})|,
\end{equation}
 so there is at most the difference of one between the sizes of the two intersections. In Remark~\ref{RmkCharacterization} in Appendix, we explain a way to characterize word pairs $\vec{x}$ and $\vec{x'}$ for which \eqref{+0Bound} (resp. the upper bound \eqref{+1Bound}) is attained.
Having examined the case where $x_1 = x_1'$ and $x_n = x_n'$, we now consider in the following results the situations where $x_1 \neq x_1'$ or $x_n \neq x_n'$.

\begin{proposition}
\label{prop:x1=x1xn!=xn}
Let 
$\vec{x}, \vec{x'} \in \mathbb{Z}_q^n$ be such that 
$x_n \neq x_n'$. We have $M_1(\vec{x}) \cap M_1(\vec{x'}) \subseteq \{\vec{x'}_{[1, n]}x_n, \vec{x}_{[1, n]}x_n'\}$. In particular, $N_{\vec{x}, \vec{x'}}^m = |M_1(\vec{x}) \cap M_1(\vec{x'})| \leq 2$.
\end{proposition}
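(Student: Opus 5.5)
The plan is to look only at the last letter of a word $\vec{y}$ in the intersection. Such a $\vec{y}$ has length $n+1$, and we write $y_{n+1}$ for its last symbol. Since $\vec{y}$ arises from $\vec{x}$ by one insertion, either the insertion happened after $x_n$ (so $\vec{y}=\vec{x}\,y_{n+1}$) or it happened earlier (so $y_{n+1}=x_n$ and $\vec{y}$ is obtained from $\vec{x}_{[1,n-1]}$ by one insertion, followed by $x_n$). The same dichotomy holds with respect to $\vec{x'}$.

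We then split into cases according to which alternative holds for each of the two words.
\begin{enumerate}
\item If the insertion is not at the end for either word, then $y_{n+1}=x_n$ and $y_{n+1}=x_n'$, contradicting $x_n\neq x_n'$.
\item If the insertion is at the end for $\vec{x}$ but not for $\vec{x'}$, then $\vec{y}=\vec{x}\,y_{n+1}$ with $y_{n+1}=x_n'$, so $\vec{y}=\vec{x}x_n'$.
\item Symmetrically, if the insertion is at the end for $\vec{x'}$ but not for $\vec{x}$, then $\vec{y}=\vec{x'}x_n$.
\item If the insertion is at the end for both, then $\vec{y}_{[1,n]}=\vec{x}=\vec{x'}$, which is impossible because $x_n\neq x_n'$ forces $\vec{x}\neq\vec{x'}$.
\end{enumerate}
Hence $\mathrm{set}(M_1(\vec{x})\cap M_1(\vec{x'}))\subseteq\{\vec{x'}x_n,\vec{x}x_n'\}$, which is the first claim.

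For the bound $N^m_{\vec{x},\vec{x'}}\le 2$, we need each of these two words to contribute multiplicity at most one to the multiset intersection; this is the only point needing care. Take $\vec{y}=\vec{x}x_n'$ and count its multiplicity in $M_1(\vec{x})$. Any insertion vector producing it either inserts at the end, and then the inserted symbol must be $x_n'$, giving exactly one vector; or it inserts earlier, and then the last symbol would be $x_n\neq x_n'$, which is impossible. So $m(\vec{x}x_n',M_1(\vec{x}))=1$, and the minimum in \eqref{Nmxx} is at most $1$. The same argument with the roles swapped gives $m(\vec{x'}x_n,M_1(\vec{x'}))=1$. Summing the two contributions via \eqref{Nmxx} yields $N^m_{\vec{x},\vec{x'}}\le 2$.

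The argument uses no earlier result beyond the definition \eqref{Nmxx}. Note also that the two candidate words coincide only if $\vec{x}x_n'=\vec{x'}x_n$, which would force $x_n=x_n'$; so they are distinct, though this is not needed for the bound.
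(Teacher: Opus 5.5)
Your proof is correct and takes essentially the same approach as the paper. Both arguments observe that, since $x_n \neq x_n'$, exactly one of the two insertions must occur after the last symbol, which forces any common word to be $\vec{x}x_n'$ or $\vec{x'}x_n$. Both then show that each of these words has multiplicity one in the corresponding sphere, because any earlier insertion would leave the wrong last symbol, and this gives the bound $2$; your four-case split on the last letter is simply a more detailed write-up of the paper's one-line observation.
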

\begin{proof}
Let $\vec{x}, \vec{x'} \in \mathbb{Z}_q^n$ be such that 
$x_n \neq x_n'$. Since $x_n \neq x_n'$, the insertions in $\vec{x}$ and $\vec{x'}$ cannot both be before or after the $n$th coordinate. The word obtained by inserting to the right of $x_n$ in $\vec{x}$ can possibly be obtained from $\vec{x'}$ only if the insertion in $\vec{x}$ is the symbol $x_n'$, and in this case the resulting word is of the form $\vec{x}_{[1, n]}x_n'$. Since $x_n \neq x_n'$, $m(\vec{x}_{[1, n]}x_n', M_1(\vec{x})) = 1$. Similarly, the only word obtained from $\vec{x'}$ by inserting to the right of $x_n'$ that is possibly in $M_1(\vec{x})$ is of the form $\vec{x'}_{[1, n]}x_n$ and $m(\vec{x'}_{[1, n]}x_n, M_1(\vec{x'})) = 1$. Therefore, $M_1(\vec{x}) \cap M_1(\vec{x'}) \subseteq \{\vec{x'}_{[1, n]}x_n, \vec{x}_{[1, n]}x_n'\}$, and $N_{\vec{x}, \vec{x'}}^m \leq 2$.
\end{proof}

The case when $x_1 \neq x_1'$ behaves similarly to the case when $x_n \neq x_n'$ in Proposition~\ref{prop:x1=x1xn!=xn}. This is due to the fact that if we reverse a word with the property $x_n \neq x_n'$, we obtain a word with the property $x_1 \neq x_1'$. The following results formally state this simple observation, which turns out to be very useful in the sequel.
\begin{proposition}
\label{lem:reverse}
    The function $f: \mathbb{Z}_q^n \rightarrow \mathbb{Z}_q^n$ which maps a word $\vec{x} = x_1 \dots x_n$ to its reverse $\overleftarrow{\vec{x}} = x_n \dots x_1$ is bijective.
\end{proposition}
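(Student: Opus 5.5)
The plan is to show that the reversal map $f$ is its own inverse, which gives bijectivity at once. First I will check that $f$ is well defined as a map $\mathbb{Z}_q^n \to \mathbb{Z}_q^n$. For $\vec{x} = x_1 \dots x_n$, the reverse $\overleftarrow{\vec{x}} = x_n x_{n-1}\dots x_1$ is again a word of length $n$ over $\mathbb{Z}_q$. Its $i$th coordinate is $x_{n+1-i}$ for every $i \in [1,n]$.

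Next I will compute $f(f(\vec{x}))$ coordinatewise. Write $\vec{y} = f(\vec{x})$, so that $y_i = x_{n+1-i}$. Then the $i$th coordinate of $f(\vec{y})$ is $y_{n+1-i} = x_{n+1-(n+1-i)} = x_i$. Hence $f\circ f$ is the identity map on $\mathbb{Z}_q^n$.

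From this, injectivity follows: if $f(\vec{x}) = f(\vec{x'})$, applying $f$ to both sides gives $\vec{x} = \vec{x'}$. Surjectivity also follows: any $\vec{y} \in \mathbb{Z}_q^n$ equals $f(f(\vec{y}))$, so it has the preimage $f(\vec{y})$. Therefore $f$ is a bijection, and it is its own inverse.

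There is no real obstacle; the only care needed is the index bookkeeping $n+1-(n+1-i) = i$. For later use, I would also record that $f$ satisfies $\overleftarrow{\vec{x}_{[1,n]}a} = a\,\overleftarrow{\vec{x}}$ and, more generally, that it commutes with insertions in the following sense. Inserting a symbol into $\vec{x}$ at the position after $x_i$ corresponds to inserting the same symbol into $\overleftarrow{\vec{x}}$ at the mirrored position. This gives a multiplicity-preserving bijection $M_1(\vec{x}) \to M_1(\overleftarrow{\vec{x}})$, which is what makes the case $x_1 \neq x_1'$ reduce to Proposition~\ref{prop:x1=x1xn!=xn}.
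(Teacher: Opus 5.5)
Your involution argument is correct and complete: showing $f\circ f$ is the identity via $n+1-(n+1-i)=i$ is the natural justification, and the paper states this proposition without proof as self-evident. Your closing remark about insertions is what the paper uses in Corollary~\ref{cor:reverse}. There, for general $t$, the insertion vector $(\vec{w_0},\dots,\vec{w_n})$ is sent to $(\overleftarrow{\vec{w_n}},\dots,\overleftarrow{\vec{w_0}})$, so the inserted words are reversed as well.
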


\begin{corollary}
\label{cor:reverse}
    If $\vec{a} \in \mathbb{Z}_q^n$, then $m(\vec{a}, M_t(\vec{x})) = m(\overleftarrow{\vec{a}}, M_t(\overleftarrow{\vec{x}}))$.
\end{corollary}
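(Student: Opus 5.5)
The plan is to exhibit a multiplicity-preserving bijection between the insertion vectors of weight $t$ applied to $\vec{x}$ and those applied to $\overleftarrow{\vec{x}}$. Once such a bijection exists, the multiplicity of $\vec{a}$ in $M_t(\vec{x})$, which is the number of insertion vectors producing $\vec{a}$ from $\vec{x}$, equals the multiplicity of $\overleftarrow{\vec{a}}$ in $M_t(\overleftarrow{\vec{x}})$.

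Given an insertion vector $w=(\vec{w_0},\vec{w_1},\dots,\vec{w_n})$ of weight $t$, I define
\[
\overleftarrow{w}=(\overleftarrow{\vec{w_n}},\overleftarrow{\vec{w_{n-1}}},\dots,\overleftarrow{\vec{w_0}}).
\]
That is, the list of gaps is reversed and each inserted word is itself reversed. Reversal preserves length, so $\overleftarrow{w}$ again has weight $t$.

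Applying $w$ to $\vec{x}$ gives the concatenation $\vec{w_0}x_1\vec{w_1}x_2\cdots x_n\vec{w_n}$. Reversing this concatenation gives $\overleftarrow{\vec{w_n}}x_n\overleftarrow{\vec{w_{n-1}}}\cdots x_1\overleftarrow{\vec{w_0}}$. This is exactly the word obtained by applying $\overleftarrow{w}$ to $\overleftarrow{\vec{x}}=x_n\cdots x_1$. Hence $w$ produces $\vec{a}$ from $\vec{x}$ if and only if $\overleftarrow{w}$ produces $\overleftarrow{\vec{a}}$ from $\overleftarrow{\vec{x}}$.

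The map $w\mapsto\overleftarrow{w}$ is an involution, so it is a bijection on the set of weight-$t$ insertion vectors; this is the same reasoning as in Proposition~\ref{lem:reverse}, applied to lists of words. It therefore restricts to a bijection between the insertion vectors yielding $\vec{a}$ from $\vec{x}$ and those yielding $\overleftarrow{\vec{a}}$ from $\overleftarrow{\vec{x}}$. Counting both sides gives $m(\vec{a},M_t(\vec{x}))=m(\overleftarrow{\vec{a}},M_t(\overleftarrow{\vec{x}}))$.

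The only delicate step is the index bookkeeping in the concatenation identity. One has to check that gap $i$ of $\vec{x}$ (after $x_i$) becomes gap $n-i$ of $\overleftarrow{\vec{x}}$, and that the inserted words must be reversed, not merely moved. Both checks are immediate from writing the word out.
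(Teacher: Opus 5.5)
Your proof is correct and is essentially the paper's argument: the paper uses the same correspondence $w=(\vec{w_0},\dots,\vec{w_n})\mapsto(\overleftarrow{\vec{w_n}},\dots,\overleftarrow{\vec{w_0}})$ and then invokes bijectivity of reversal (Proposition~\ref{lem:reverse}). Your explicit remark that this map is an involution on weight-$t$ insertion vectors makes the equality of multiplicities slightly more transparent than the paper's one-line appeal.
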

\begin{proof}
    If we can obtain $\vec{a}$ from $\vec{x}$ with insertion vector $w = (\vec{w_0}, \dots, \vec{w_{n}})$, then we can obtain $\overleftarrow{\vec{a}}$ from $\overleftarrow{\vec{x}}$ with insertion vector $w' = (\overleftarrow{\vec{w_{n}}}, \dots, \overleftarrow{\vec{w_0}})$. Now the result follows by Proposition~\ref{lem:reverse}.
\end{proof}

Now we can obtain an upper bound on the size of $M_1(\vec{x}) \cap M_1(\vec{x'})$ when  $x_1 \neq x_1'$. 
\begin{corollary}
\label{cor:x1!=x1xn=xn}
    If $\vec{x}, \vec{x'} \in \mathbb{Z}_q^n$ are such that $x_1 \neq x_1'$, then $N_{\vec{x}, \vec{x'}}^m \leq 2$.
\end{corollary}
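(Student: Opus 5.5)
The plan is to reduce the statement to Proposition~\ref{prop:x1=x1xn!=xn} by reversing both words, using Corollary~\ref{cor:reverse} to transport multiplicities.

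Let $\vec{x}, \vec{x'} \in \mathbb{Z}_q^n$ with $x_1 \neq x_1'$. Consider the reversed words $\overleftarrow{\vec{x}} = x_n \dots x_1$ and $\overleftarrow{\vec{x'}} = x_n' \dots x_1'$. Their last letters are $x_1$ and $x_1'$. These differ, so Proposition~\ref{prop:x1=x1xn!=xn} applies to the pair $\overleftarrow{\vec{x}}, \overleftarrow{\vec{x'}}$ and gives
\[
|M_1(\overleftarrow{\vec{x}}) \cap M_1(\overleftarrow{\vec{x'}})| \le 2 .
\]

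It remains to show that reversal preserves the size of the multiset intersection, namely
\[
|M_1(\vec{x}) \cap M_1(\vec{x'})| = |M_1(\overleftarrow{\vec{x}}) \cap M_1(\overleftarrow{\vec{x'}})| .
\]
By \eqref{Nmxx}, the left-hand side equals $\sum_{\vec{a}} \min\bigl(m(\vec{a}, M_1(\vec{x})), m(\vec{a}, M_1(\vec{x'}))\bigr)$. Here $\vec{a}$ ranges over words of length $n+1$; words outside the intersection contribute zero. Corollary~\ref{cor:reverse} gives $m(\vec{a}, M_1(\vec{x})) = m(\overleftarrow{\vec{a}}, M_1(\overleftarrow{\vec{x}}))$, and likewise for $\vec{x'}$. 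So each summand equals $\min\bigl(m(\overleftarrow{\vec{a}}, M_1(\overleftarrow{\vec{x}})), m(\overleftarrow{\vec{a}}, M_1(\overleftarrow{\vec{x'}}))\bigr)$. Reversal is a bijection on $\mathbb{Z}_q^{n+1}$ by Proposition~\ref{lem:reverse}, applied with length $n+1$. Reindexing the sum by $\overleftarrow{\vec{a}}$ therefore gives exactly the right-hand side.

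Combining the two displays yields $N^m_{\vec{x},\vec{x'}} \le 2$. There is no real obstacle here. The only care needed is that the bijection is applied on words of length $n+1$, not $n$, which Proposition~\ref{lem:reverse} covers since it holds for every length.
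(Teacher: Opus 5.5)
Your proposal is correct and follows the paper's own argument: the paper proves this corollary in one line by combining Proposition~\ref{prop:x1=x1xn!=xn} with Corollary~\ref{cor:reverse}. You additionally spell out the reindexing of the sum over words of length $n+1$ via the reversal bijection, which the paper leaves implicit.
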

\begin{proof}
    The result is obtained directly from Proposition~\ref{prop:x1=x1xn!=xn} and Corollary~\ref{cor:reverse}.
\end{proof}

Next, we give a result concerning words of short lengths, which we need in the proof of Theorem~\ref{thm:lowerboundt=1odd} (and also later in Theorem~\ref{thm:extremalmulteven}).

\begin{observation}
\label{obs:t=1upperbound}
    If the lengths of both $\vec{x}$ and $\vec{x'}$ are $1$ or $2$, and $\vec{x}\neq \vec{x'}$, then $|M_1(\vec{x}) \cap M_1(\vec{x'})| \leq 2$ (by Proposition~\ref{prop:x1=x1xn!=xn} and Corollary~\ref{cor:x1!=x1xn=xn}). Moreover, any word pair $\vec{x}$ and $\vec{x'}$ of length $2$ for which $|M_1(\vec{x}) \cap M_1(\vec{x'})| = 2$ is of the form $\{aa,ab\}$, $\{aa,ba\}$, $\{ab,ba\}$, $\{ab,ac\}$ or $\{ab,cb\}$, for some distinct $a,b,c\in \mathbb{Z}_q$ if $q\ge 3$ and is of the form $\{aa,ab\}$, $\{aa,ba\}$ or $\{ab,ba\}$, for distinct $a, b \in \mathbb{Z}_q$ if $q=2$. 
\end{observation}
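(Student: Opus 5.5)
For the first claim I would rely on the fact that distinct words of equal length $1$ or $2$ must differ in their first or their last coordinate. For length $1$ this is immediate. For length $2$, $x_1=x_1'$ and $x_2=x_2'$ would force $\vec{x}=\vec{x'}$. If $x_n\neq x_n'$, Proposition~\ref{prop:x1=x1xn!=xn} gives $|M_1(\vec{x})\cap M_1(\vec{x'})|\le 2$. If $x_1\neq x_1'$, Corollary~\ref{cor:x1!=x1xn=xn} gives the same bound. This settles the upper bound.

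For the characterization I would first treat the case $x_2\neq x_2'$. By Proposition~\ref{prop:x1=x1xn!=xn}, the intersection is contained in $\{\vec{x'}x_2,\ \vec{x}x_2'\}$. These two words are distinct, since equality would force $\vec{x}=\vec{x'}$. The word $\vec{x'}x_2$ automatically lies in $M_1(\vec{x'})$, and $\vec{x}x_2'$ automatically lies in $M_1(\vec{x})$. Hence the intersection has size exactly $2$ if and only if $\vec{x}$ is a subsequence of $x_1'x_2'x_2$ and $\vec{x'}$ is a subsequence of $x_1x_2x_2'$.

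I then split on the first letters. If $x_1=x_1'=a$, write $\vec{x}=au$ and $\vec{x'}=av$. Then $au$ embeds in $avu$ by taking the first and last letters, and symmetrically $av$ embeds in $auv$. So every such pair works, which gives the forms $\{ab,ac\}$ and, when one of $u,v$ equals $a$, $\{aa,ab\}$ (the latter with $\vec{x}=aa$ or $\vec{x'}=aa$). If $x_1=p\neq r=x_1'$, write $\vec{x}=ps_1$ and $\vec{x'}=rs_2$ with $s_1\neq s_2$. For $ps_1$ to embed in $rs_2s_1$, the letter $p$ cannot be matched to $r$. It also cannot be matched to the final $s_1$, since a letter must still follow it. Hence $p=s_2$, and symmetrically $r=s_1$. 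This yields exactly $\{ab,ba\}$.

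The case $x_1\neq x_1'$ with $x_2=x_2'$ then follows by reversal via Corollary~\ref{cor:reverse}. Reversing the forms found above produces $\{ab,cb\}$ and $\{aa,ba\}$. For $q=2$, I would discard every form that needs three distinct symbols, which leaves $\{aa,ab\}$, $\{aa,ba\}$ and $\{ab,ba\}$.

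The only delicate step is the subsequence bookkeeping in the case $x_1\neq x_1'$. The point to get right there is that the upper bound $2$ from Proposition~\ref{prop:x1=x1xn!=xn}, together with the automatic memberships, means that only the two membership conditions need checking, and that the multiplicities cause no issue.
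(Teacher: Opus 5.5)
Your proof is correct, but it takes a different route from the paper's.

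The paper argues by letter content. Up to a permutation of symbols it takes $\vec{x}\in\{aa,ab\}$. It then uses the fact that one insertion adds at most one copy of any symbol, so $\vec{x'}$ must share a letter with $\vec{x}$. That leaves a short list of candidates: $ab$ and $ba$ for $\vec{x}=aa$, and $ba$, $ac$, $ca$, $bc$, $cb$ for $\vec{x}=ab$. The paper computes each intersection explicitly, for instance $\{aab,aba\}$, $\{aba,bab\}$, $\{abc,acb\}$, $\{cab,acb\}$, and singletons for $ca$ and $bc$.

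You instead use Proposition~\ref{prop:x1=x1xn!=xn} structurally. When $x_2\neq x_2'$, the intersection is supported on the two candidates $\vec{x'}x_2$ and $\vec{x}x_2'$, so everything reduces to two subsequence-embedding checks. You resolve these by splitting on whether $x_1=x_1'$, and the case $x_2=x_2'$ follows from Corollary~\ref{cor:reverse}.

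Your route reuses the earlier results and avoids the ad hoc enumeration. It also proves the converse for free, namely that every listed pair does attain $2$. The paper's route is more hands-on, and it has the side benefit of exhibiting the actual intersections, including the pairs that only reach $1$.

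One step should be stated explicitly: the ``only if'' half of your claim that the size is $2$ exactly when both memberships hold. It needs $m(\vec{x}x_2',M_1(\vec{x}))=m(\vec{x'}x_2,M_1(\vec{x'}))=1$, so that a single common word cannot contribute $2$ on its own. This holds because $x_2\neq x_2'$, and it is shown in the proof of Proposition~\ref{prop:x1=x1xn!=xn}. So your remark that multiplicities cause no issue is justified, but you should cite this fact rather than assert it.
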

\begin{proof}
    Let $a$ and $b$  be arbitrary distinct letters in $\mathbb{Z}_q$. If $n = 1$, then $M_1(a) \cap M_1(b) = \{ab, ba\}$. Next, let $n = 2$. Without loss of generality, we can assume that $\vec{x}=aa$ or $\vec{x}=ab.$
    
    {\bf Case 1:} Suppose first that $\vec{x}=aa$. If $M_1(\vec{x}) \cap M_1(\vec{x'})\neq \emptyset$, then clearly $\vec{x'}$ must contain at least one symbol $a$ (as a single insertion can add at most one $a$). Moreover, $\vec{x'}$ contains at most one $a$ since $\vec{x}\neq \vec{x'}$. Therefore, without loss of generality, $\vec{x'}=ab$ or $\vec{x'}=ba$. Suppose first that $\vec{x'}=ab$. Then $M_1(\vec{x}) \cap M_1(\vec{x'}) = \{aab, aba\}$.  Suppose then that $\vec{x'}=ba$. Since $\overleftarrow{\vec{x}}=aa$ and $\overleftarrow{\vec{x'}}=ab$, it follows from  Corollary~\ref{cor:reverse} that $|M_1(\vec{x}) \cap M_1(\vec{x'})|= |M_1(\overleftarrow{\vec{x}}) \cap M_1(\overleftarrow{\vec{x'}})| =2$.

    {\bf Case 2:} Assume next that $\vec{x}=ab$. As $t=1$, the word $\vec{x'}$ must have at least one $a$ or one $b$ (and due to Case 1, we can assume that they do not appear more than once). If it has both of them, then $\vec{x'}=ba$ as $\vec{x}\neq \vec{x'}$. Now $M_1(\vec{x}) \cap M_1(\vec{x'}) =\{aba, bab\}.$ Suppose next that $\vec{x'}$ has either $a$ or $b$ but not both; hence, we must have $q\ge 3$ from now on. Assume first that $\vec{x'}$ has a symbol $a$. Then $\vec{x'}$ is $ac$ or $ca$. If $\vec{x'}=ac$, then $M_1(\vec{x}) \cap M_1(\vec{x'}) =\{abc, acb\}.$ On the other hand, if $\vec{x'}=ca$, then $M_1(\vec{x}) \cap M_1(\vec{x'}) =\{cab\}.$ Assume next that $\vec{x'}$ has a symbol $b$. Consequently, $\vec{x'}$ is $bc$ or $cb$. If $\vec{x'}=bc$, then $M_1(\vec{x}) \cap M_1(\vec{x'}) =\{abc\}$. On the other hand, if $\vec{x'}=cb$, then $M_1(\vec{x}) \cap M_1(\vec{x'}) =\{cab,acb\}$.
\end{proof}

For brevity, we write a repeated letter using exponent notation: for a symbol $x$ and an integer $m\ge 0$, $x^{m}$ denotes the word consisting of $m$ consecutive copies of $x$ (with $x^{0}$ being the empty word). Thus, $a^{k}b^{\ell}$ denotes the word consisting of $k$ consecutive $a$’s followed by $\ell$ consecutive $b$’s, and $b^{k}a^{0}b^{\ell}$ simply indicates that no $a$ occurs between the two blocks of $b$’s.
Finally, in the following theorem, we are ready to prove that $N_q^{m}(n, 1) \geq \lceil (n+2)/2 \rceil$, that is, there exist two words $\vec{x}, \vec{x'} \in \mathbb{Z}_q^n$ such that $|M_1(\vec{x}) \cap M_1(\vec{x'})| = \lceil (n+2)/2 \rceil$.
\begin{theorem}
\label{thm:lowerboundt=1odd}
    We have $\biggl|M_1\left(a^{\left\lfloor\frac{n + 2}{2} \right\rfloor}b^{\left\lceil\frac{n - 2}{2}\right\rceil}\right) \cap M_1\left(a^{\left\lfloor\frac{n}{2} \right\rfloor}b^{\left\lceil\frac{n}{2}\right\rceil}\right)\biggr| =  \left\lceil\frac{n + 2}{2}\right\rceil$, where $a, b \in \mathbb{Z}_q$ and $a \neq b$.
\end{theorem}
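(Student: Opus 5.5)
We prove the statement by induction on $n$ in steps of two, using the recursion for words with equal first and last letters. Write $\vec{x}=a^{\lfloor (n+2)/2\rfloor}b^{\lceil (n-2)/2\rceil}$ and $\vec{x'}=a^{\lfloor n/2\rfloor}b^{\lceil n/2\rceil}$. The upper bound $\lceil (n+2)/2\rceil$ is already given by Theorem~\ref{thm:nonmultt=1upperbd} (via \eqref{mLessnm}), so it suffices to show that the intersection has at least this size.

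\textbf{Base cases.} For $n=1$ the pair is $a$ and $b$, and $M_1(a)\cap M_1(b)=\{ab,ba\}$, of size $2=\lceil 3/2\rceil$. For $n=2$ the pair is $aa$ and $ab$, and Observation~\ref{obs:t=1upperbound} gives the intersection $\{aab,aba\}$, of size $2$.

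\textbf{Inductive step.} Let $n\ge 3$. Both words begin with $a$. Both end with $b$ except when $n=3$, where $\vec{x}=aab$ and $\vec{x'}=abb$, which also end with the same letter $b$. Stripping the first and last letters leaves $a^{\lfloor (n+2)/2\rfloor-1}b^{\lceil (n-2)/2\rceil-1}=a^{\lfloor n/2\rfloor}b^{\lceil (n-4)/2\rceil}$ and $a^{\lfloor (n-2)/2\rfloor}b^{\lceil (n-2)/2\rceil}$. This is exactly the pair for length $n-2$. By \eqref{+0Bound} and the induction hypothesis, the intersection for length $n$ has size at least $\lceil n/2\rceil$.

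\textbf{Gaining the extra $+1$ (main obstacle).} The bound \eqref{+0Bound} only gives $\lceil n/2\rceil$, so one more element must be found. We use Proposition~\ref{prop:x1=x1'xn=xn'}, which expresses the intersection as
$\big[aM_1(\vec{x}_{[2,n-1]})b\oplus \vec{x}b\oplus a\vec{x}\big]\cap\big[aM_1(\vec{x'}_{[2,n-1]})b\oplus \vec{x'}b\oplus a\vec{x'}\big]$,
and we exhibit an extra common copy. Take the word $\vec{z}=a^{\lfloor (n+2)/2\rfloor}b^{\lceil n/2\rceil}$. It lies in $M_1(\vec{x})$, since it is obtained by inserting one $b$ into the $b$-run. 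Its multiplicity there is the run length plus one, namely $\lceil (n-2)/2\rceil+1=\lceil n/2\rceil$. It lies in $M_1(\vec{x'})$ by inserting an $a$, with multiplicity $\lfloor n/2\rfloor+1$. It also lies in both spheres with other letters inserted.

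Rather than running the recursion delicately, the cleanest route is a direct count. Every common word has $n+1$ letters drawn from $\{a,b\}$. A word in $M_1(\vec{x})\cap M_1(\vec{x'})$ must contain both $\vec{x}$ and $\vec{x'}$ as subsequences with a single extra letter. One checks that the only such words are:
\begin{itemize}
\item $\vec{z}$, with $m(\vec{z},M_1(\vec{x}))=\lceil n/2\rceil$ and $m(\vec{z},M_1(\vec{x'}))=\lfloor n/2\rfloor+1$;
\item $a^{\lfloor n/2\rfloor}ba\,b^{\lceil (n-2)/2\rceil}$, arising when $a$ is inserted into $\vec{x'}$ at the boundary and $b$ into $\vec{x}$ at the boundary, with multiplicity $1$ in each sphere.
\end{itemize}
For $\vec{z}$ the minimum of the two multiplicities is $\lceil n/2\rceil$, since $\lceil n/2\rceil\le \lfloor n/2\rfloor+1$. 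Summing via \eqref{Nmxx} gives $\lceil n/2\rceil+1=\lceil (n+2)/2\rceil$.

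The delicate step is checking that these are the only common words. The argument is a subsequence argument: $\vec{x}$ has more $a$'s than $\vec{x'}$ and $\vec{x'}$ has more $b$'s than $\vec{x}$, so the extra letter inserted into $\vec{x}$ must be a $b$ and the one inserted into $\vec{x'}$ must be an $a$. Such insertions only produce $\vec{z}$ or the boundary word above. I would present this direct count as the proof, with the induction serving as a consistency check.
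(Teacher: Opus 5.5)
Your direct count is correct, and it is a genuinely different proof from the paper's.

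\textbf{The paper's route.} The paper argues by induction on $n$ in steps of two. It applies Proposition~\ref{prop:x1=x1'xn=xn'} to strip the common first letter $a$ and last letter $b$. It then notes that the word $\vec{x}b=a\vec{x'}=a^{\lfloor (n+2)/2\rfloor}b^{\lceil n/2\rceil}$ supplies one extra common copy. Next it checks that the remaining boundary words $\vec{x}_{[1,n-1]}$ and $\vec{x'}_{[2,n]}$ do not lie in the inner sphere of the other word. It concludes that the intersection is exactly the length-$(n-2)$ intersection plus one.

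\textbf{Your route.} You compute the whole intersection at once. A common word has $n+1$ letters, at least $\lfloor (n+2)/2\rfloor$ of them $a$'s and at least $\lceil n/2\rceil$ of them $b$'s. So the inserted letter must be $b$ in $\vec{x}$ and $a$ in $\vec{x'}$. The intersection then consists of exactly two distinct words, with multiplicity pairs $(\lceil n/2\rceil,\lfloor n/2\rfloor+1)$ and $(1,1)$.

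\textbf{What each buys.} Your argument is shorter and uniform in $n$; it also covers $n=1,2$. It yields equality directly, so you can drop both the induction scaffolding and the appeal to Theorem~\ref{thm:nonmultt=1upperbd}. It also describes the intersection explicitly. The paper's argument instead shows that this pair attains the bound \eqref{+1Bound} of Theorem~\ref{thm:doublerecupperbound} at every step of the recursion. That same recursive mechanism is what the later extremal-pair classification relies on.

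\textbf{Two fixes for the write-up.} First, replace ``one checks'' by the short enumeration. Set $K=\lfloor (n+2)/2\rfloor$ and $L=\lceil (n-2)/2\rceil$. Inserting $b$ into $a^Kb^L$ gives $a^iba^{K-i}b^L$ for $0\le i\le K-1$ (once each) or $a^Kb^{L+1}$ ($L+1$ times). Inserting $a$ into $a^{K-1}b^{L+1}$ gives $a^Kb^{L+1}$ ($K$ times) or $a^{K-1}b^jab^{L+1-j}$ for $1\le j\le L+1$ (once each). The only coincidences are $a^Kb^{L+1}$ and the case $i=K-1$, $j=1$.

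Second, delete the sentence saying that $\vec{z}$ ``also lies in both spheres with other letters inserted''. Your own letter count shows it does not. If it did, the multiplicities you then use would be wrong.
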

\begin{proof}
    The proof is based on induction on $n$. When $n = 1$, we know from Observation~\ref{obs:t=1upperbound} that $|M_1(a) \cap M_1(b)| = 2 = \frac{1 + 1}{2} + 1$. When $n = 2$, we know from Observation~\ref{obs:t=1upperbound} that $|M_1(aa) \cap M_1(ab)| = 2 = \frac{2}{2} + 1$. This proves the base case.

    Let us assume that the result holds for any integer less than $n$. We wish to prove that the result is true for $n\ge 3$ as well. Furthermore, for notational simplicity, we choose $a = 0$ and $b = 1$ as the other cases go exactly the same way due to the permutations of symbols. Applying Proposition~\ref{prop:x1=x1'xn=xn'} to $\vec{x} = 0^{\left\lfloor\frac{n + 2}{2} \right\rfloor}1^{\left\lceil\frac{n - 2}{2}\right\rceil}$ and $\vec{x'} = 0^{\left\lfloor\frac{n}{2} \right\rfloor}1^{\left\lceil\frac{n}{2}\right\rceil}$, we get
    \begin{align*}
            M_1\left(0^{\left\lfloor\frac{n + 2}{2} \right\rfloor}1^{\left\lceil\frac{n - 2}{2}\right\rceil}\right) \cap M_1\left(0^{\left\lfloor\frac{n}{2} \right\rfloor}1^{\left\lceil\frac{n}{2}\right\rceil}\right) = &\left(0 M_1\left(0^{\left\lfloor\frac{n}{2} \right\rfloor}1^{\left\lceil\frac{n - 4}{2}\right\rceil}\right) 1 \oplus 0 0^{\left\lfloor\frac{n}{2} \right\rfloor}1^{\left\lceil\frac{n - 2}{2}\right\rceil} 1\oplus 0 0^{\left\lfloor\frac{n + 2}{2} \right\rfloor}1^{\left\lceil\frac{n - 4}{2}\right\rceil} 1\right) \\& \cap \left(0 M_1\left(0^{\left\lfloor\frac{n - 2}{2} \right\rfloor}1^{\left\lceil\frac{n - 2}{2}\right\rceil}\right)1 \oplus 0 0^{\left\lfloor\frac{n - 2}{2} \right\rfloor}1^{\left\lceil\frac{n}{2}\right\rceil} 1 \oplus 0 0^{\left\lfloor\frac{n}{2} \right\rfloor}1^{\left\lceil\frac{n - 2}{2}\right\rceil} 1\right).
    \end{align*}  
    Observe that $0 0^{\left\lfloor\frac{n}{2} \right\rfloor}1^{\left\lceil\frac{n - 2}{2}\right\rceil} 1$ exists in the intersection and hence, by taking the cardinality on both sides, we get,
    \begin{align*}
    \left|M_1\left(0^{\left\lfloor\frac{n + 2}{2} \right\rfloor}1^{\left\lceil\frac{n - 2}{2}\right\rceil}\right) \cap M_1\left(0^{\left\lfloor\frac{n}{2} \right\rfloor}1^{\left\lceil\frac{n}{2}\right\rceil}\right)\right| = & \biggl|\left(M_1\left(0^{\left\lfloor\frac{n}{2} \right\rfloor}1^{\left\lceil\frac{n - 4}{2}\right\rceil}\right) \oplus 0^{\left\lfloor\frac{n + 2}{2} \right\rfloor}1^{\left\lceil\frac{n - 4}{2}\right\rceil}\right) \\ & \cap \left(M_1\left(0^{\left\lfloor\frac{n - 2}{2} \right\rfloor}1^{\left\lceil\frac{n - 2}{2}\right\rceil}\right) \oplus 0^{\left\lfloor\frac{n - 2}{2} \right\rfloor}1^{\left\lceil\frac{n}{2}\right\rceil}\right)\biggr| + 1.
    \end{align*}
    Furthermore, we see that $0^{\left\lfloor\frac{n - 2}{2} \right\rfloor}1^{\left\lceil\frac{n}{2}\right\rceil} \notin M_1\left(0^{\left\lfloor\frac{n}{2} \right\rfloor}1^{\left\lceil\frac{n - 4}{2}\right\rceil}\right)$ and $0^{\left\lfloor\frac{n + 2}{2} \right\rfloor}1^{\left\lceil\frac{n - 4}{2}\right\rceil} \notin M_1\left(0^{\left\lfloor\frac{n - 2}{2} \right\rfloor}1^{\left\lceil\frac{n - 2}{2}\right\rceil}\right)$. Hence, we can simplify the above equation to get 
    \begin{equation*}
    \begin{split}
        \left|M_1\left(0^{\left\lfloor\frac{n + 2}{2} \right\rfloor}1^{\left\lceil\frac{n - 2}{2}\right\rceil}\right) \cap M_1\left(0^{\left\lfloor\frac{n}{2} \right\rfloor}1^{\left\lceil\frac{n}{2}\right\rceil}\right)\right| & = \left|M_1\left(0^{\left\lfloor\frac{n}{2} \right\rfloor}1^{\left\lceil\frac{n - 4}{2}\right\rceil}\right) \cap M_1\left(0^{\left\lfloor\frac{n - 2}{2} \right\rfloor}1^{\left\lceil\frac{n - 2}{2}\right\rceil}\right)\right| + 1 \\
        & = \left|M_1\left(0^{\left\lfloor\frac{(n - 2) + 2}{2} \right\rfloor}1^{\left\lceil\frac{(n - 2) - 2}{2}\right\rceil}\right) \cap M_1\left(0^{\left\lfloor\frac{n - 2}{2} \right\rfloor}1^{\left\lceil\frac{n - 2}{2}\right\rceil}\right)\right| + 1.
    \end{split}
    \end{equation*}
    The result now follows from the induction hypothesis.
\end{proof}

Now we are ready to state the main result in this section, which reveals the somewhat surprising fact that $N^m_q(n, 1)=N^{nm}_q(n, 1)$.

\begin{corollary}
    \label{cor:exactboundt=1}
    We have $N_q^m(n, 1) = N_q^{nm}(n, 1) = \lceil \frac{n + 2}{2} \rceil$.
\end{corollary}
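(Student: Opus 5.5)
This corollary is a sandwich between the upper bound of Theorem~\ref{thm:nonmultt=1upperbd} and the construction of Theorem~\ref{thm:lowerboundt=1odd}, glued together by inequality~\eqref{mLessnm}. I will establish the chain
\[
\left\lceil \tfrac{n+2}{2}\right\rceil \le N_q^m(n,1) \le N_q^{nm}(n,1) \le \left\lceil \tfrac{n+2}{2}\right\rceil .
\]
Equality then holds throughout.

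For the upper end I cite Theorem~\ref{thm:nonmultt=1upperbd} directly. It already gives $N_q^{m}(n,1)\le N_q^{nm}(n,1)\le \lceil (n+2)/2\rceil$ for all $n\ge 1$ and $q\ge 2$. The middle inequality is just~\eqref{mLessnm} with $C=\mathbb{Z}_q^n$ and $t=1$.

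For the lower end I use the characterization~\eqref{Nm}, namely $N_q^m(n,1)=\max_{\vec{x}\neq\vec{x'}} N^m_{\vec{x},\vec{x'}}$ with $N^m_{\vec{x},\vec{x'}}=|M_1(\vec{x})\cap M_1(\vec{x'})|$. It then suffices to exhibit one distinct pair attaining $\lceil (n+2)/2\rceil$. Since $q\ge 2$, I fix two distinct symbols $a,b\in\mathbb{Z}_q$ and take
\[
\vec{x}=a^{\lfloor (n+2)/2\rfloor}b^{\lceil (n-2)/2\rceil},\qquad \vec{x'}=a^{\lfloor n/2\rfloor}b^{\lceil n/2\rceil}.
\]
Theorem~\ref{thm:lowerboundt=1odd} gives exactly $|M_1(\vec{x})\cap M_1(\vec{x'})|=\lceil (n+2)/2\rceil$ for this pair.

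Only bookkeeping remains. First, both words have length $n$: $\lfloor (n+2)/2\rfloor+\lceil (n-2)/2\rceil = n$ and $\lfloor n/2\rfloor+\lceil n/2\rceil=n$, which I check by parity of $n$. Second, the words are distinct, because their numbers of $a$'s differ by one, as $\lfloor (n+2)/2\rfloor=\lfloor n/2\rfloor+1$. This also covers $n=1$, where the pair is $a$ and $b$ (note $\lceil -1/2\rceil=0$), and $n=2$, where it is $aa$ and $ab$. So the pair is admissible in~\eqref{Nm}, and $N_q^m(n,1)\ge\lceil (n+2)/2\rceil$.

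There is no real obstacle here, since the substantive work is in the two cited theorems. The one point to watch is that the lower bound of Theorem~\ref{thm:lowerboundt=1odd} is stated for the multiset model. It therefore transfers to the non-multiset model only through~\eqref{mLessnm}, and that is why the chain must be written in the order above.
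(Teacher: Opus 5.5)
Your proposal is correct and matches the paper's proof. The paper likewise sandwiches $N_q^m(n,1)\le N_q^{nm}(n,1)$, which holds by \eqref{mLessnm}, between the upper bound of Theorem~\ref{thm:nonmultt=1upperbd} and the explicit pair of Theorem~\ref{thm:lowerboundt=1odd}. Your checks that the two words have length $n$ and are distinct, including the cases $n=1,2$, are a harmless addition that the paper leaves implicit.
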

\begin{proof}
    By Theorem~\ref{thm:nonmultt=1upperbd}, $N^{nm}_q(n, 1) \leq \lceil \frac{n + 2}{2} \rceil$. By Theorem~\ref{thm:lowerboundt=1odd},  we get $N_q^m(n, 1) \geq \lceil \frac{n + 2}{2} \rceil$. Consequently, $\lceil \frac{n + 2}{2} \rceil \leq N_q^m(n, 1) \leq N_q^{nm}(n, 1) \leq \lceil \frac{n + 2}{2} \rceil$, and this gives us the required result.
\end{proof}

\subsection{Extremal Word Pairs}

We call a pair of words $\vec{x}$ and $\vec{x}'$ an \emph{extremal word pair} (sometimes also called a \emph{worst word pair}) for the multiset model if they require the maximum number of channels to distinguish them, that is, $N^m_{\vec{x},\vec{x'}} =|M_t(\vec{x}) \cap M_t(\vec{x'})| = N_q^m(n, t)$. Note that there may be multiple such extremal word pairs, as indicated by Table~\ref{tab:extremalwords}. Similarly, a word pair is an extremal word pair for the non-multiset model if $N^{nm}_{\vec{x},\vec{x'}} = N_q^{nm}(n, t)$.
\begin{table}[htbp]
    \centering
    \caption{All the extremal word pairs in the multiset model (up to interchanging 0's and 1's) when $q = 2$.}
    \label{tab:extremalwords}
    \begin{tabular}{|c|c|c|c|c|c|c|c|c|}
    \hline
        \backslashbox{$t$}{$n$} & 1 & 2 & 3 & 4 & 5 & 6 & 7 \\
         \hline
       1 & 0, 1 & 00, 01 & 001, 011 & 0001, 0011 & 00011, 00111 & 000011, 000111 & 0000111, 0001111\\
        &  & 01, 10 &  & 0010, 0100 & & 000100, 001000 & \\
        &  &  &  & 0010, 0110 & & 000110, 001110 & \\
        \hline
       2 & 0, 1 & 00, 01 & 001, 011 & 0001, 0011 & 00011, 00111 & 000100, 001000 & 0000111, 0001111\\
        &  &  &  & 0010, 0100 & & &\\
        \hline
       3 & 0, 1 & 00, 01 & 001, 011 & 0010, 0100 & 00011, 00111 & 000100, 001000 & 0000111, 0001111\\
       \hline
       4 & 0, 1 & 00, 01 & 001, 011 & 0010, 0100 & 00011, 00111 & 000100, 001000 & 0000100, 0001000\\
       \hline
       5 & 0, 1 & 00, 01 & 001, 011 & 0010, 0100 & 00010, 00100 & 000100, 001000 & 0000100, 0001000\\
        \hline
    \end{tabular}
\end{table}

Next, we completely classify the extremal word pairs for $t=1$, for all $n$ and $q$, in both the multiset and non-multiset models. The proof is based on the surprising fact stating that the extremal word pairs are the same for both models when $t=1$! However, note that for general $t$ it is not necessarily true that extremal word pairs in the multiset model will be extremal word pairs in the non-multiset model and vice versa. For instance, it can be verified by computer that when $n = t = 5$ and $q = 2$, $\vec{x} = 00010$ and $\vec{x'} = 00100$ is an extremal word pair in the multiset model but not in the non-multiset model, while $\vec{x} = 00011$ and $\vec{x'} = 00111$ is an extremal word pair in the non-multiset model but not in the multiset model.

In the following theorem, we first show that the number of channels required to distinguish a given word pair is equal in both models.
\begin{theorem}\label{sameproblem}
    Let 
    $\vec{x}, \vec{x'} \in \mathbb{Z}_q^n$. If $t = 1$, then $N^m_{\vec{x}, \vec{x'}} = N^{nm}_{\vec{x}, \vec{x'}}$ for every  $\vec{x}\neq \vec{x'}$.
\end{theorem}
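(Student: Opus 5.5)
The plan is to show that, for $t=1$, the minimum in \eqref{Nmxx} is attained \emph{on the same side} for every common word. Concretely, I will prove that either $m(\vec{y},M_1(\vec{x}))\le m(\vec{y},M_1(\vec{x'}))$ for all $\vec{y}\in\mathrm{set}(A)$, or the reverse inequality holds for all $\vec{y}\in\mathrm{set}(A)$, where $A=M_1(\vec{x})\cap M_1(\vec{x'})$. In the first case $N^m_{\vec{x},\vec{x'}}=S_{1,\vec{x'}}(\vec{x})$, and in the second $N^m_{\vec{x},\vec{x'}}=S_{1,\vec{x}}(\vec{x'})$. Since always $N^m_{\vec{x},\vec{x'}}\le\min(S_{1,\vec{x'}}(\vec{x}),S_{1,\vec{x}}(\vec{x'}))=N^{nm}_{\vec{x},\vec{x'}}$ (a sum of minima is at most either sum), this gives equality.

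\textbf{Step 1: reformulate in terms of deletions.} A word $\vec{y}\in\mathbb{Z}_q^{n+1}$ lies in $M_1(\vec{x})$ exactly when deleting one symbol of $\vec{y}$ gives $\vec{x}$. Deletions inside the same run of $\vec{y}$ give the same word, and deletions from different runs give different words. Hence $m(\vec{y},M_1(\vec{x}))$ equals the length of the unique run $R_{\vec{x}}(\vec{y})$ of $\vec{y}$ whose deletion yields $\vec{x}$. For a common word $\vec{y}$ we get two runs $R=R_{\vec{x}}(\vec{y})$ and $R'=R_{\vec{x'}}(\vec{y})$, and $R\ne R'$ because $\vec{x}\neq\vec{x'}$. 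Call $\vec{y}$ of \emph{type L} if $R$ lies to the left of $R'$, and of \emph{type R} otherwise.

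\textbf{Step 2: link the type to shift relations.} Use the notation $h,k,a,b,c,d$ of the proof of Theorem~\ref{thm:nonmultt=1upperbd}. If $\vec{y}$ is of type L, then $\vec{x}$ and $\vec{x'}$ coincide outside the stretch of $\vec{y}$ from $R$ to $R'$, and inside it $x_i=x'_{i+1}$. So $x_i=x'_{i+1}$ for all $i\in[h+1,n-k-1]$, and $R$ is the run of $\vec{y}$ containing position $h+1$. Its length is therefore determined by $\vec{x}$, $\vec{x'}$ and the location of the insertion, which is exactly the data controlled in the case analysis of Theorem~\ref{thm:nonmultt=1upperbd}. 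I will use this, together with the analogous description of $R'$ near position $n-k$, to show that on type L words the comparison of $|R|$ and $|R'|$ has a fixed direction determined only by the pair $(\vec{x},\vec{x'})$. By the symmetry of Corollary~\ref{cor:reverse}, type R words then also have a fixed direction.

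\textbf{Step 3 (the main obstacle): both types occur.} If common words of both types exist, then $x_i=x'_{i+1}$ and $x'_i=x_{i+1}$ both hold on $[h+1,n-k-1]$. Hence the differing block is alternating: $\vec{x}_{[h+1,n-k]}$ reads $abab\dots$ and $\vec{x'}_{[h+1,n-k]}$ reads $baba\dots$, as in the pair $ab,ba$ of Observation~\ref{obs:t=1upperbound}. In this case I will try to show that both runs $R$ and $R'$ are forced to have length one for every common word, or at least that the two multiplicities are equal for every common word. Then every minimum in \eqref{Nmxx} equals both multiplicities, and equality with both $S$-values is immediate. I expect this to need care at the boundary runs next to $x_h$ and $x_{n-k+1}$, and there I would first attempt a direct check mirroring the boundary cases of Theorem~\ref{thm:nonmultt=1upperbd}.
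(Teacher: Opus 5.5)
Your overall strategy is the right one: show that for $t=1$ the minimum in \eqref{Nmxx} is attained on the same side for every common word. It is exactly the dichotomy of Cases~2A/2B in the paper, and Steps~1 and~2 are sound. In fact Step~2 yields more than you use. A type~L common word is forced to be $\vec{y}_L=\vec{x}_{[1,h]}\,b\,\vec{x}_{[h+1,n]}$, and a type~R one is forced to be $\vec{y}_R=\vec{x}_{[1,h]}\,a\,\vec{x'}_{[h+1,n]}$. So each type contains at most one word, and ``fixed direction within a type'' is automatic.

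Everything therefore rests on Step~3, and there your plan targets a false statement. Take $\vec{x}=101$, $\vec{x'}=110$, so $h=1$, $k=0$, and the blocks $01$ and $10$ alternate; this is exactly your Step~3 situation. The common words are $1101$ (type~L) and $1010$ (type~R). However, $m(1101,M_1(\vec{x}))=2$ while $m(1101,M_1(\vec{x'}))=1$. So neither ``all relevant runs have length one'' nor ``the two multiplicities agree on every common word'' holds. The boundary runs next to $x_h$ and $x_{n-k+1}$ really can be long, and no check of boundary cases will rescue that claim.

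What is missing is the reason the two directions are nevertheless compatible. Let $\tau_e$ be the number of trailing $e$'s of $\vec{x}_{[1,h]}$, and $\lambda_e$ the number of leading $e$'s of $\vec{x}_{[n-k+1,n]}$. Your run description then gives
$m(\vec{y}_L,M_1(\vec{x}))=1+\tau_b$, $m(\vec{y}_L,M_1(\vec{x'}))=1+\lambda_c$, $m(\vec{y}_R,M_1(\vec{x'}))=1+\tau_a$ and $m(\vec{y}_R,M_1(\vec{x}))=1+\lambda_d$.
Since $a\neq b$, at most one of $\tau_a,\tau_b$ is positive. Since $c\neq d$, at most one of $\lambda_c,\lambda_d$ is positive.

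Suppose the comparisons at $\vec{y}_L$ and $\vec{y}_R$ went strictly in opposite directions. Then you would need either $\lambda_c>\tau_b\ge 0$ and $\lambda_d>\tau_a\ge 0$, or $\tau_b>\lambda_c\ge 0$ and $\tau_a>\lambda_d\ge 0$. Both are impossible.

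This is precisely the paper's Case~2B argument. The paper normalizes by $a\neq x_h$, i.e.\ $\tau_a=0$, so that $m(\vec{y}_R,M_1(\vec{x'}))=1$. The argument uses no alternating structure at all: the alternation you derive in Step~3 is correct but plays no role.
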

\begin{proof}
    Let $n \geq 1$, $t = 1$, and $\vec{x}, \vec{x'} \in \mathbb{Z}_q^n$ where $\vec{x}\neq \vec{x'}$. Denote $A = M_1(\vec{x}) \cap M_1(\vec{x'})$. Recall that $\mathrm{set}(M_1(\vec{x}))=I_1(\vec{x})$ and $N_q^+(n,1)=\max_{\vec{x},\vec{x'}\in\mathbb{Z}_q^n,\vec{x}\neq \vec{x'}}|I_1(\vec{x})\cap I_1(\vec{x'})|$. Since $N_q^+(n,1)=2$,  we have by~Theorem~\ref{OrigLevIns} that $|\mathrm{set}(A)| \leq 2$. Depending on the size of $|\mathrm{set}(A)|$, we divide the proof into the following two cases.

    Case 1: Assume that $|\mathrm{set}(A)| \leq 1$. If $|\mathrm{set}(A)| = 0$, then the multiset spheres have no common words yielding $N^m_{\vec{x}, \vec{x'}} = 0=N^{nm}_{\vec{x}, \vec{x'}}$. If $|\mathrm{set}(A)| = 1$, that is, $\mathrm{set}(A) = \{\vec{a}\}$ for some $\vec{a}\in \mathbb{Z}_q^{n+1}$, then by Equations~\eqref{Nmxx} and \eqref{Nnmxx}, $N^m_{\vec{x}, \vec{x'}}$ and $N^{nm}_{\vec{x}, \vec{x'}}$ are both equal to $\min (m(\vec{a}, M_1(\vec{x})), m(\vec{a}, M_1(\vec{x'})))$, and hence the result holds.

    Case 2: Assume that $|\mathrm{set}(A)| = 2$. Let $\mathrm{set}(A) = \{\alpha, \beta\}$ for some $\alpha,\beta\in \mathbb{Z}_q^{n+1}$ where $\alpha\neq\beta$. Depending on the multiplicities of $\alpha$ and $\beta$ in $M_1(\vec{x})$ and $M_1(\vec{x'})$, we consider the following subcases:

    Case 2A: Let $m(\vec{a}, M_1(\vec{x})) \leq m(\vec{a}, M_1(\vec{x'}))$ for all $\vec{a} \in \{\alpha, \beta\}$, or $m(\vec{a}, M_1(\vec{x})) \geq m(\vec{a}, M_1(\vec{x'}))$ for all $\vec{a} \in \{\alpha, \beta\}$. We can assume, without loss of generality, that $m(\vec{a}, M_1(\vec{x})) \leq m(\vec{a}, M_1(\vec{x'}))$ for all $\vec{a} \in \{\alpha, \beta\}$. In this case, by Equations~\eqref{Nmxx} and \eqref{Nnmxx}, we have
    \begin{align*}
        N^m_{\vec{x}, \vec{x'}} & = \min (m(\alpha, M_1(\vec{x})), m(\alpha, M_1(\vec{x'}))) + \min (m(\beta, M_1(\vec{x})), m(\beta, M_1(\vec{x'}))) \\
        & = m(\alpha, M_1(\vec{x})) + m(\beta, M_1(\vec{x})), \text{ and }\\
        N^{nm}_{\vec{x}, \vec{x'}} & = \min (m(\alpha, M_1(\vec{x})) + m(\beta, M_1(\vec{x})), m(\alpha, M_1(\vec{x'})) + m(\beta, M_1(\vec{x'})))\\
        & = m(\alpha, M_1(\vec{x})) + m(\beta, M_1(\vec{x})).
    \end{align*}
    Therefore, $N^m_{\vec{x}, \vec{x'}} = N^{nm}_{\vec{x}, \vec{x'}}$ as claimed. 


    Case 2B: Assume next that the words $\vec{x}$ and $\vec{x'}$ do not satisfy the conditions of 2A, that is, $m(\alpha, M_1(\vec{x})) \leq m(\alpha, M_1(\vec{x'}))$ and $m(\beta, M_1(\vec{x})) > m(\beta, M_1(\vec{x'}))$, or $m(\alpha, M_1(\vec{x})) > m(\alpha, M_1(\vec{x'}))$ and $m(\beta, M_1(\vec{x})) \leq m(\beta, M_1(\vec{x'}))$. Let $i = h + 1$ be the first coordinate where $\vec{x}$ and $\vec{x'}$ differ, that is, $x_i \neq x_i'$. Similarly, let $n - k$ be the last coordinate where $\vec{x}$ and $\vec{x'}$ differ. Hence, the first $h$ symbols and the last $k$ are equal in both $\vec{x}$ and $\vec{x'}$. Notice that if $h + k = n - 1$, then the words differ in exactly one coordinate position and the previously discussed first and last coordinate positions coincide. 
    Let $x_{h + 1} = a$, $x_{h + 1}' = b$, $x_{n - k} = c$ and $x_{n - k}' = d$, where $a \neq b$ and $c \neq d$. 
    For illustrative purposes, we can represent $\vec{x}$ and $\vec{x'}$ as follows:  

{\centering
    \begin{tabular}{c c c c c c c c c c c}
        $\vec{x}$ & $ =$ & $x_1$ & $\dots$ & $x_h$ & $a$ & $\dots$ & $c$ & $x_{n - k + 1}$ & \dots & $x_n$ \\
        $\vec{x'}$ & $ =$ & $x_1$ & $\dots$ & $x_h$ & $b$ & $\dots$ & $d$ & $x_{n - k + 1}$ & $\dots$ & $x_n$\\
    \end{tabular}
    \par
    }
    \noindent since $x_i = x_i'$ for $i \leq h$ and $i \geq n - k + 1$ (when such indices exist, that is, $h > 0$ or $k > 0$).

    Assume without loss of generality that $a\ne x_h$ whenever $h>0$. Indeed, since $a\ne b$, at least one of $a$ and $b$ is different from $x_h$, and we may choose $a$ to be that one. For a word to belong to $A$, a symbol has to be inserted either anywhere before $a$ in $\vec{x}$ or anywhere before $b$ in $\vec{x'}$ (but not to both) since $a \neq b$:
    \begin{itemize}
        \item If a symbol is inserted anywhere before $b$ in $\vec{x'}$, then the insertion has to be $a$ just before $b$ (as $x_h \neq a$ or $h=0$) and a symbol has to be inserted to $\vec{x}$ after $c$ (as $c \neq d$). Hence, the resulting word $\vec{w}_1 = x_1 \cdots x_h a b x'_{h+2} \cdots x'_{n-k-1} d x_{n - k + 1} \cdots x_n$.
        \item If a symbol is inserted anywhere before $a$ in $\vec{x}$, then it is $b$ and a symbol has to be inserted to $\vec{x'}$ after $d$ (as $c \neq d$). Hence, the resulting word $\vec{w}_2 = \vec{x'}_{[1,h+1]}\vec{x}_{[h+1,n]} = x_1 \cdots x_h b a x_{h+2} \cdots x_{n-k-1} c x_{n - k + 1} \cdots x_n$.
    \end{itemize}
    Clearly, $\vec{w}_1 \neq \vec{w}_2$ as their ($h+1$)th symbols $a$ and $b$ disagree. By the previous discussion, they are the only possibilities for the words in $\mathrm{set}(A)$ and both of them have to belong to $A$, as $|\mathrm{set}(A)| = 2$. Hence, we may choose $\alpha = \vec{w}_1$ and $\beta = \vec{w}_2$. Since $\alpha$ can be obtained from $\vec{x'}$ only by inserting the symbol $a$ right before $b$, we have $m(\alpha, M_1(\vec{x'})) = 1$. If $m(\alpha, M_1(\vec{x})) = 1$, then $m(\alpha, M_1(\vec{x})) = m(\alpha, M_1(\vec{x'}))$, and the conditions of Case~2A are met which contradicts the assumptions of Case~2B. Hence, we can assume that $m(\alpha, M_1(\vec{x})) > 1$. However, for $m(\alpha, M_1(\vec{x})) > 1$ to hold, it is required that $k > 0$ and $d = x_{n - k + 1}$. Hence, since $c\neq d$, we have $c \neq x_{n - k + 1}$. Thus, to obtain $\beta$ from $\vec{x'}$, the symbol $c$ has to be inserted just after $d$. Therefore, we have $m(\beta, M_1(\vec{x'})) = 1$ implying $m(\beta, M_1(\vec{x})) > 1$ (as otherwise a contradiction again follows with the assumptions of Case~2B). Hence, together with the observations $m(\alpha, M_1(\vec{x'})) = 1$ and $m(\alpha, M_1(\vec{x})) > 1$, the conditions of Case~2A are satisfied, which again contradicts the assumptions of Case~2B. Thus, $N^m_{\vec{x}, \vec{x'}} = N^{nm}_{\vec{x}, \vec{x'}}$ as claimed. Notice also that the previous arguments apply even if the words $\vec{x}$ and $\vec{x'}$ differ in exactly one coordinate position, that is, $a$ and $c$ as well as $b$ and $d$ coincide; in that case, we simply replace each occurrence of $c$ and $d$ (in the proof) by $a$ and $b$, respectively.

    Thus, in conclusion, $N^m_{\vec{x}, \vec{x'}} = N^{nm}_{\vec{x}, \vec{x'}}$ for all $\vec{x}\neq \vec{x'}$.
   \end{proof}

\begin{remark}\label{samemodel}
    The previous result reveals the surprising fact that the multiset model and the non-multiset model are closely connected when $t=1$ in the following sense: A transmitted word $\vec{x}\in C \subseteq \mathbb{Z}_q^n $ can be determined using a multiset $Y^m$ in the multiset model if and only if it can be determined by the corresponding  $\mathrm{set}(Y^m)$   in the non-multiset model with $N = |Y^m|$.
\end{remark}

Next we determine all possible extremal pairs in the multiset model, when $t = 1$. 
\begin{theorem} 
\label{thm:extremalmultodd}
    Let $n \geq 1$ be an odd integer and let $t = 1$. The only extremal word pairs in the multiset model are of the form $\vec{x} = a^{\frac{n + 1}{2}} b^{\frac{n - 1}{2}}$ and $\vec{x'} = a^{\frac{n - 1}{2}} b^{\frac{n + 1}{2}}$, where $a, b \in \mathbb{Z}_q$ and  $a\neq b$.
\end{theorem}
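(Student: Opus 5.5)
We prove the statement by strong induction on odd $n$, in steps of two, using the fact that for $n$ odd $N_q^m(n,1)=\frac{n+3}{2}$ by Corollary~\ref{cor:exactboundt=1}. The base case $n=1$ is immediate: any two distinct words $a,b$ of length one satisfy $|M_1(a)\cap M_1(b)|=2$, and they have the required form. For $n=3$ we can either check directly or run the general step below with the length-one case as input. That the given pairs are extremal follows from Theorem~\ref{thm:lowerboundt=1odd}, since for odd $n$ the words there are exactly $a^{\frac{n+1}{2}}b^{\frac{n-1}{2}}$ and $a^{\frac{n-1}{2}}b^{\frac{n+1}{2}}$. So the work is uniqueness.

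Let $n\ge 3$ be odd and let $\vec{x}\neq\vec{x'}$ satisfy $|M_1(\vec{x})\cap M_1(\vec{x'})|=\frac{n+3}{2}\ge 3$. By Proposition~\ref{prop:x1=x1xn!=xn} and Corollary~\ref{cor:x1!=x1xn=xn}, we must have $x_1=x_1'$ and $x_n=x_n'$. Theorem~\ref{thm:doublerecupperbound} then gives
\[
|M_1(\vec{x}_{[2,n-1]})\cap M_1(\vec{x'}_{[2,n-1]})|\ge \frac{n+1}{2}=N_q^m(n-2,1).
\]
Hence the inner pair $(\vec{x}_{[2,n-1]},\vec{x'}_{[2,n-1]})$ is distinct and extremal for length $n-2$. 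By the induction hypothesis, up to swapping the two words, it equals $\bigl(a^{\frac{n-1}{2}}b^{\frac{n-3}{2}},\ a^{\frac{n-3}{2}}b^{\frac{n-1}{2}}\bigr)$ for some $a\neq b$. Moreover, equality must hold in \eqref{+1Bound}.

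It remains to determine the outer letters $u=x_1$ and $v=x_n$ so that, via Proposition~\ref{prop:x1=x1'xn=xn'}, the increase over the inner intersection is exactly one. We will compute the intersection
\[
\bigl[uM_1(\vec{y})v\oplus \vec{x}v\oplus u\vec{x}\bigr]\cap\bigl[uM_1(\vec{y'})v\oplus \vec{x'}v\oplus u\vec{x'}\bigr],
\]
where $\vec{y},\vec{y'}$ are the inner words. We aim to show the count is $\frac{n+1}{2}+1$ only when $u=a$ and $v=b$.

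The common inner words are known explicitly, from the proof of Theorem~\ref{thm:nonmultt=1upperbd} (the words $\vec{w}_1,\vec{w}_2$) or from the induction. One is $a^{\frac{n-1}{2}}b^{\frac{n-1}{2}}$, with multiplicities $\frac{n-1}{2}$ and $\frac{n-1}{2}$ on the two sides (or close to this). The other is $a^{\frac{n-3}{2}}bab^{\frac{n-3}{2}}$-type, with multiplicity one on each side. We then check each case of $u,v$:
\begin{itemize}
\item If $u\neq a$ or $v\neq b$, the boundary words $\vec{x}v$, $u\vec{x}$, $\vec{x'}v$, $u\vec{x'}$ either fail to lie in the other sphere or fail to raise the minimum of the multiplicities. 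In that case the total equals the inner size, i.e.\ equality in \eqref{+0Bound} rather than \eqref{+1Bound}.
\item If $u=a$ and $v=b$, Theorem~\ref{thm:lowerboundt=1odd} already shows the increase is $+1$.
\end{itemize}
The characterization in Remark~\ref{RmkCharacterization}, of when \eqref{+1Bound} is attained, could be invoked directly if it is stated in a usable form.

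The main obstacle is this last case analysis over $u,v\in\mathbb{Z}_q$. The difficulty is tracking how multiplicities change once a letter equal to $a$ or $b$ is prepended or appended, since this lengthens a run and increases the multiplicity of some word. In particular the cases $u=b$ or $v=a$, and $u$ or $v$ a third letter when $q\ge3$, must each be shown to give no extra contribution. I would first try to handle the symmetric cases together using the reversal map (Corollary~\ref{cor:reverse}) combined with swapping $a\leftrightarrow b$. This map sends the inner pair to itself up to order, which should halve the casework.
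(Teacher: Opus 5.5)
\textbf{Gap: the uniqueness step is asserted, not proved.} Your framework matches the paper's: induction in steps of two, forcing $x_1=x_1'$ and $x_n=x_n'$, using Theorem~\ref{thm:doublerecupperbound} to make the inner pair extremal, applying the induction hypothesis, citing Theorem~\ref{thm:lowerboundt=1odd} for existence, and using reversal with $a\leftrightarrow b$ to halve the casework. But the step that carries all the content, ruling out outer letters $(u,v)\neq(a,b)$, is only asserted (``the boundary words either fail to lie in the other sphere or fail to raise the minimum''). You yourself leave it as the main obstacle, so nothing is proved beyond the reduction to the pair $u\vec{y}v$, $u\vec{y'}v$.

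Moreover, the claim you aim for is false at $n=3$. Take $(\vec{y},\vec{y'})=(a,b)$, $u=b$, $v=a$. Then $u\vec{y}=ba$ and $\vec{y'}v=ba$ both coincide with the common inner word $ba$, raising its multiplicity to $2$ on both sides, and indeed $|M_1(baa)\cap M_1(bba)|=3$. This pair is still of the required form after interchanging $a,b$ and the two words, so the theorem survives. However, $n=3$ must be handled as a separate case rather than by ``running the general step''. The paper does this by computing the multiplicities of $fbag$ and $fabg$, obtaining equality iff $f=b$, $g=a$.

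\textbf{The missing idea for $n\ge 5$} is much lighter than tracking boundary words: count insertion vectors on $\vec{x}$ directly.
\begin{itemize}
\item Write $\vec{x}=fa^{r}b^{r-1}g$ and $\vec{x'}=fa^{r-1}b^{r}g$ with $r=\frac{n-1}{2}\ge 2$. By reversal and the swap you already suggest, you may assume $g\neq b$.
\item The words differ only in coordinate $i=r+1$, and $\vec{x}$ has one more $a$ and one fewer $b$. Hence every common word arises by inserting $b$ into $\vec{x}$. The two insertions (into $\vec{x}$ and into $\vec{x'}$) must lie on opposite sides of coordinate $i$, since $x_i\neq x_i'$.
\item If the insertion into $\vec{x}$ is before $x_i$, the first $i$ letters of the result must equal $\vec{x'}_{[1,i]}$, which ends in $b$. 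Since $x_{i-1}=a$, only the slot immediately before $x_i$ works.
\item If the insertion is after $x_i$, there are $\frac{n+1}{2}$ slots. The last one is excluded because the word must end in $g\neq b$.
\end{itemize}
Hence $N^m_{\vec{x},\vec{x'}}\le S_{1,\vec{x'}}(\vec{x})\le 1+\frac{n-1}{2}<\frac{n+3}{2}$, a contradiction.

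Your boundary-word route via Proposition~\ref{prop:x1=x1'xn=xn'} and the appendix lemmas (Remark~\ref{RmkCharacterization}) could also be completed. It would, however, require redoing the run decomposition $u^k\phi v^\ell$ for each $u,v\in\{a,b,\text{other}\}$ and tracking coincidences such as the one above; the direct count avoids all of that.
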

\begin{proof}
    
    The proof is by induction on the odd length $n$ of a word. When $n = 1$, there is only the pair $\vec{x} = a$ and $\vec{x'} = b$ for each distinct symbols $a$ and $b$ in $\mathbb{Z}_q$, and the statement is trivially true.

    Let us now assume that the statement is true for words of length $n - 2$. That is, suppose that the only extremal pair of words of length $n - 2$ up to permutation of symbols is $a^{\frac{n - 1}{2}} b^{\frac{n - 3}{2}}$ and $a^{\frac{n - 3}{2}} b^{\frac{n - 1}{2}}$. We wish to determine the extremal pairs of length $n\ge 3$. To begin with, recall from Proposition~\ref{prop:x1=x1xn!=xn} and Corollary~\ref{cor:x1!=x1xn=xn} that if the initial or final letters of $\vec{x}$ and $\vec{x'}$ differ, then $|M_1(\vec{x}) \cap M_1(\vec{x'})| \leq 2 < \frac{n + 3}{2}$ when $n \geq 3$. Hence, $\vec{x}$ and $\vec{x'}$ must have the same initial letters and the same final letters. From Corollary~\ref{cor:exactboundt=1}, we know that the size of the intersection of multiset spheres of any extremal pair of length $n$ is equal to $\frac{n + 3}{2}$. But if $\vec{x}$ and $\vec{x'}$ form an extremal pair, then from Theorem~\ref{thm:doublerecupperbound} we have that $\frac{n + 3}{2} = |M_1(\vec{x}) \cap M_1(\vec{x'})| \leq |M_1(\vec{x}_{[2, n - 1]}) \cap M_1(\vec{x'}_{[2, n - 1]})| + 1 \leq \frac{n + 1}{2} + 1$, which implies that $|M_1(\vec{x}_{[2, n - 1]}) \cap M_1(\vec{x'}_{[2, n - 1]})| = \frac{n + 1}{2}$, that is, $\vec{x}_{[2, n - 1]}$ and $\vec{x'}_{[2, n - 1]}$ form an extremal word pair of length $n - 2$. The induction hypothesis gives us that $\vec{x}_{[2, n - 1]} = a^{\frac{n - 1}{2}} b^{\frac{n - 3}{2}}$ and $\vec{x'}_{[2, n - 1]} = a^{\frac{n - 3}{2}} b^{\frac{n - 1}{2}}$. Hence, the extremal word pairs of length $n$ must be pairs of the form $\vec{x} = fa^{\frac{n - 1}{2}} b^{\frac{n - 3}{2}}g, \vec{x'} = fa^{\frac{n - 3}{2}} b^{\frac{n - 1}{2}}g$.
    Moreover, we have from Theorem~\ref{thm:lowerboundt=1odd} that when $\vec{x} = a^{\frac{n + 1}{2}} b^{\frac{n - 1}{2}}$ and $\vec{x'} = a^{\frac{n - 1}{2}} b^{\frac{n + 1}{2}}$, $|M_1(\vec{x}) \cap M_1(\vec{x'})| = \frac{n + 3}{2}$. Hence, to prove that there are no other extremal pairs than the ones in Theorem~\ref{thm:lowerboundt=1odd}, we can assume that $f\neq a$ or $g\neq b.$ In addition,  if we show that the claim holds for $g \neq b$,
then, due to Corollary~\ref{cor:reverse}, the case $f\neq a$ follows. Hence, it is enough to assume from now on that $g\neq b$.

    We first observe that for a word to be in $M_1(\vec{x}) \cap M_1(\vec{x'})$, it must be obtained by inserting the symbol $b$ in $\vec{x}$ or the symbol $a$ in $\vec{x'}$ in order to have the same number of symbols. In what follows, we first consider the case $n = 3$ separately, before proceeding with the case $n \geq 5$.

    If $n = 3$, then the words that can be obtained by inserting the symbol $b$ in $\vec{x}$ are of the form $bfag, fbag, fabg$ or $fagb$.
    As $\vec{x'} = fbg$,  the words of the form $fbag$ or $fabg$ are clearly always in the intersection. Words of the form $fagb$ are never in the intersection since $g \neq b$, and words of the form $bfag$ are in the intersection if and only if $f = b$, in which case $bfag = fbag$.
    Hence, we need to focus on counting words only of the forms $fabg$ and $fbag$ within the intersection. Observe that $m(fbag, M_1(\vec{x})) = 1$ for $f \neq b$, $m(fbag, M_1(\vec{x})) = 2$ for $f = b$, $m(fbag, M_1(\vec{x'})) = 1$ for $g \neq a$, and $m(fbag, M_1(\vec{x'})) = 2$ for $g = a$. Similarly, $m(fabg, M_1(\vec{x})) = 1$ for $g \neq b$, $m(fabg, M_1(\vec{x'})) = 1$ for $f \neq a$, and $m(fabg, M_1(\vec{x'})) = 2$ for $f = a$. Therefore, in this case $|M_1(\vec{x}) \cap M_1(\vec{x'})| = 3 = \frac{n + 3}{2}$ if and only if $f = b$ and $g = a$. 
    Hence, by taking into account the permutation of symbols $a$ and $b$, we get that when $n = 3$, an extremal pair $\vec{x}$ and $\vec{x'}$ is always of the form $a^{\frac{n + 1}{2}} b^{\frac{n - 1}{2}}$ and $a^{\frac{n - 1}{2}} b^{\frac{n + 1}{2}}$. 
    
    Let $n \geq 5$ and consider the words in the intersection obtained by inserting $b$ to $\vec{x}$ and $a$ to $\vec{x'}$. Notice that the words $\vec{x}$ and $\vec{x'}$ differ only in the $(n+1)/2$-th coordinate. Clearly, we cannot have insertions in both $\vec{x}$ and $\vec{x'}$ before this coordinate or after it. If we insert $b$ before the $(n+1)/2$-th coordinate in $\vec{x}$, then it can only be added just after the $(n-1)/2$-th coordinate in $\vec{x}$ due to the number of symbols $b$ (which is zero or one depending on if $f\neq b$ or not) before the $(n+1)/2$-th coordinate in $\vec{x'}$. 
    We can insert $b$ in any position after the $(n+1)/2$-th coordinate, except at the end of $\vec{x}$, since $g \neq b$. Hence, the number of words from $M_1(\vec{x})$ in $M_1(\vec{x}) \cap M_1(\vec{x'})$ is at most $1 + \frac{n - 1}{2} = \frac{n + 1}{2}$, which is a contradiction. This completes the proof.
\end{proof}

\begin{theorem}
\label{thm:extremalmulteven}
    Let $n \geq 2$ be an even integer and let $t = 1$. The only extremal word pairs in the multiset model are of the forms $(i)\ \vec{x} = a^{\frac{n}{2}} b a^{\frac{n - 2}{2}}$ and $\vec{x'} = a^{\frac{n - 2}{2}} b a^{\frac{n}{2}}$, $(ii)\ \vec{x} = a^{\frac{n}{2}} b^{\frac{n - 2}{2}} i$ and $\vec{x'} = a^{\frac{n - 2}{2}} b^{\frac{n}{2}}i$, or $(iii)\ \vec{x} = i a^{\frac{n}{2}} b^{\frac{n - 2}{2}}$ and $\vec{x'} = i a^{\frac{n - 2}{2}} b^{\frac{n}{2}}$ where $a, b, i \in \mathbb{Z}_q$ and $a \neq b$.
\end{theorem}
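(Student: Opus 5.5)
The plan is to argue by induction on the even length $n$, following the proof of Theorem~\ref{thm:extremalmultodd}, with Corollary~\ref{cor:exactboundt=1} fixing the extremal value at $\frac{n+2}{2}$.

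\textbf{Base case $n=2$.} Here the extremal value is $2$, and Observation~\ref{obs:t=1upperbound} lists all pairs attaining it. I will check that this list coincides with (i)--(iii) for $n=2$.
\begin{itemize}
\item Form (i) gives $\{ab,ba\}$.
\item Form (ii) gives $\{ai,bi\}$. With $i=a$ this is $\{aa,ba\}$, with $i=b$ it is $\{ab,bb\}$, and with $i\notin\{a,b\}$ it is $\{ac,bc\}$, i.e.\ the shape $\{ab,cb\}$.
\item Form (iii) gives $\{ia,ib\}$, which covers $\{aa,ab\}$ and $\{ab,ac\}$ in the same way.
\end{itemize}
Up to renaming of symbols, this is exactly the list in the Observation.

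\textbf{Reduction for $n\ge 4$.} Since $\frac{n+2}{2}\ge 3>2$, Proposition~\ref{prop:x1=x1xn!=xn} and Corollary~\ref{cor:x1!=x1xn=xn} force $x_1=x_1'=f$ and $x_n=x_n'=g$. Theorem~\ref{thm:doublerecupperbound} then gives
\[
\tfrac{n+2}{2}\le |M_1(\vec{x}_{[2,n-1]})\cap M_1(\vec{x'}_{[2,n-1]})|+1\le \tfrac{n}{2}+1 .
\]
So the middle pair is extremal of length $n-2$, and by the induction hypothesis it has one of the forms (i)--(iii) with $m=\frac{n-2}{2}$. This leaves a finite list of families $f\,\vec{u}\,g$, $f\,\vec{u'}\,g$, where $\{\vec{u},\vec{u'}\}$ is of type (i), (ii) or (iii) of length $n-2$, possibly with the roles of $\vec{x}$ and $\vec{x'}$ swapped.

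\textbf{Deciding which $(f,g)$ work.} For each family I need exactly when the $+1$ in Theorem~\ref{thm:doublerecupperbound} is attained. By Proposition~\ref{prop:x1=x1'xn=xn'}, the intersection equals
\[
\bigl[fM_1(\vec{u})g\oplus \vec{x}g\oplus f\vec{x}\bigr]\cap\bigl[fM_1(\vec{u'})g\oplus \vec{x'}g\oplus f\vec{x'}\bigr].
\]
So the gain over the inner intersection comes only from the four extra words $\vec{x}g$, $f\vec{x}$, $\vec{x'}g$ and $f\vec{x'}$. The criterion is: the count increases by one precisely when some extra word on one side either lies in $fM_1(\vec{u'})g$ (or the other side's extras) with spare multiplicity there, or raises a multiplicity that was the binding minimum. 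I will use the characterization of Remark~\ref{RmkCharacterization} where convenient, and otherwise compute multiplicities directly. Multiplicities of a word in $M_1(\cdot)$ are lengths of runs where the insertion can be placed, which keeps the computation short.

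Carrying this out, I expect the following. Type (i) survives only with $f=g=a$, reproducing (i). Type (ii) survives only with $f=a$, $i=b$, reproducing (ii) with new last letter $g$. Type (iii) survives only with $g=b$, $i=a$, reproducing (iii) with new first letter $f$. By Corollary~\ref{cor:reverse}, reversal maps family (ii) to family (iii) up to renaming $a\leftrightarrow b$. Hence it suffices to treat (i) and (ii) and transfer the result for (iii).

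\textbf{Main obstacle.} The difficult step is this last verification. I must show that every other choice of $(f,g)$ gives intersection size exactly the inner value, so that the pair is not extremal. I also have to watch for coincidences between the families: for small $m$, for instance $m=1$, a type-(ii) middle with $i=a$ also looks like a type-(i) word. For each family I plan to follow the argument used at the end of Theorem~\ref{thm:extremalmultodd}. Any common word is obtained by inserting the missing symbol ($b$ into $\vec{x}$ or $a$ into $\vec{x'}$) at a limited set of positions. Counting those positions under the assumption on $f,g$ bounds the count from $M_1(\vec{x})$ by $\frac{n}{2}$, which gives the contradiction. I will also do the case $n=4$ by hand, as $n=3$ was done there.
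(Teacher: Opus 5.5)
Your proposal is correct and follows the paper's proof essentially step for step. The paper also inducts on even $n$ and reads the base case off Observation~\ref{obs:t=1upperbound}. It forces equal end letters via Proposition~\ref{prop:x1=x1xn!=xn} and Corollary~\ref{cor:x1!=x1xn=xn}, and uses Theorem~\ref{thm:doublerecupperbound} to make the middle pair extremal of length $n-2$. It then computes multiplicities directly for each family $f\vec{u}g,\ f\vec{u'}g$, doing $n=4$ by hand and getting (iii) from (ii) by reversal, and your predicted survivors match the paper's.

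One detail for when you carry out $n=4$ with a fixed labeling $a=0$, $b=1$: the computation will also return $f=g=b$ for type (i), and $f=b$, $i=a$ for type (ii). These are not new pairs, because the length-$2$ middles $\{ab,ba\}$ and $\{ai,bi\}$ with $i\in\{a,b\}$ also fit the same form with $a$ and $b$ (and the roles of $\vec{x}$ and $\vec{x'}$) swapped, so the extra solutions are the same families after renaming.
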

\begin{proof}

    The proof is by induction on the even length $n$ of a word. When $n = 2$, we have from Corollary~\ref{cor:exactboundt=1}, that the size of the intersection of multiset spheres of any extremal pair of length $n=2$ is equal to $\frac{2 + 2}{2} = 2$. Using Observation~\ref{obs:t=1upperbound} we can see that every extremal pair will be of the form $(i)\ ab, ba$, $(ii)\ ai, bi$ or $(iii)\ ia, ib$, where $i \in \mathbb{Z}_q$. This proves the base case.

    Let us now assume that the statement is true for words of length $n - 2$. That is, suppose that the only extremal pairs of words of length $n - 2$ are of the form $(i)\ \vec{x} = a^{\frac{n - 2}{2}} b a^{\frac{n - 4}{2}}$ and $\vec{x'} = a^{\frac{n - 4}{2}} b a^{\frac{n - 2}{2}}$, $(ii)\ \vec{x} = a^{\frac{n - 2}{2}} b^{\frac{n - 4}{2}} i$ and $\vec{x'} = a^{\frac{n - 4}{2}} b^{\frac{n - 2}{2}}i$, or $(iii)\ \vec{x} = i a^{\frac{n - 2}{2}} b^{\frac{n - 4}{2}}$ and $\vec{x'} = i a^{\frac{n - 4}{2}} b^{\frac{n - 2}{2}}$ where $a, b, i \in \mathbb{Z}_q, a \neq b$, where $i \in \mathbb{Z}_q$. We wish to determine the extremal pairs of length $n\ge 4$. For notational simplicity, we choose $a = 0$ and $b = 1$ as the other cases go exactly the same way due to the permutations of symbols. To begin with, recall from Proposition~\ref{prop:x1=x1xn!=xn} and Corollary~\ref{cor:x1!=x1xn=xn} that if the initial or final letters of $\vec{x}$ and $\vec{x'}$ differ, then $|M_1(\vec{x}) \cap M_1(\vec{x'})| \leq 2 < \frac{n + 2}{2}$ when $n \geq 4$. Hence, $\vec{x}$ and $\vec{x'}$ must have the same initial letters and the same final letters. From Corollary~\ref{cor:exactboundt=1}, we know that the size of the intersection of multiset spheres of any extremal pair of length $n$ is equal to $\frac{n + 2}{2}$. But if $\vec{x}$ and $\vec{x'}$ form an extremal pair, then 
    from Theorem~\ref{thm:doublerecupperbound} we have that $\frac{n + 2}{2} = |M_1(\vec{x}) \cap M_1(\vec{x'})| \leq |M_1(\vec{x}_{[2, n - 1]}) \cap M_1(\vec{x'}_{[2, n - 1]})| + 1 \leq \frac{n}{2} + 1$, which implies that $|M_1(\vec{x}_{[2, n - 1]}) \cap M_1(\vec{x'}_{[2, n - 1]})| = \frac{n}{2}$, that is, $\vec{x}_{[2, n - 1]}$ and $\vec{x'}_{[2, n - 1]}$ form an extremal word pair of length $n - 2$. The induction hypothesis gives us that the pair $\vec{x}_{[2, n - 1]}$ and $\vec{x'}_{[2, n - 1]}$ must be one of the aforementioned word pairs $(i), (ii)$ or $(iii)$. Hence, the extremal word pair must be of the form $f\vec{x}_{[2, n - 1]}g$ and $f\vec{x'}_{[2, n - 1]}g$.

    First, suppose that $\vec{x}_{[2, n - 1]} = 0^{\frac{n - 2}{2}} 1 0^{\frac{n - 4}{2}}, \vec{x'}_{[2, n - 1]} = 0^{\frac{n - 4}{2}} 1 0^{\frac{n - 2}{2}}$. Hence, the extremal word pairs of length $n$ must be pairs of the form $\vec{x} = f0^{\frac{n - 2}{2}} 1 0^{\frac{n - 4}{2}}g, \vec{x'} = f0^{\frac{n - 4}{2}} 1 0^{\frac{n - 2}{2}}g$. We first observe that for a word to be in $M_1(\vec{x}) \cap M_1(\vec{x'})$, the same symbol must be inserted in both $\vec{x}$ and $\vec{x'}$. Next, we study first the case with $n=4$ and then the case with $n \ge 6.$
     
    Let $n = 4$, and suppose that we insert a symbol $h\in\mathbb{Z}_q$ in both words. Since $\vec{x}=f01g$ and $\vec{x'}=f10g$, the words that can be obtained from $\vec{x}$ will be of the forms $hf01g, fh01g, f0h1g, f01hg$ or $f01gh$. Observe that a word of the form $fh01g$ (resp. $f01hg$) is in the intersection if and only if $h = 1$ (resp. $h = 0$). Moreover, a word of the form $hf01g$ (resp. $f01gh$) is in the intersection if and only if $h = f = 1$ (resp. $g = h = 0$). But $h = f$ implies $hf01g = fh01g$ and $g = h$ implies $f01gh = f01hg$. Words of the form $f0h1g$ are never in the intersection, and hence, we need to focus on counting words only of the forms $fh01g$ and $f01hg$ within the intersection. Moreover, if a word $fh01g$ (resp. $f01hg$) is in the intersection, then necessarily $h=1$ (resp. $h=0$). Consequently, we need to focus on counting words only of the forms $f101g$ and $f010g$ within the intersection.  Observe that $m(f101g, M_1(\vec{x})) = 1$ for $f \neq 1$, $m(f101g, M_1(\vec{x})) = 2$ for $f = 1$, $m(f101g, M_1(\vec{x'})) = 1$ for $g \neq 1$, and $m(f101g, M_1(\vec{x'})) = 2$ for $g = 1$. Similarly, $m(f010g, M_1(\vec{x})) = 1$ for $g \neq 0$, $m(f010g, M_1(\vec{x})) = 2$ for $g = 0$, $m(f010g, M_1(\vec{x'})) = 1$ for $f \neq 0$, and $m(f010g, M_1(\vec{x'})) = 2$ for $f = 0$. Therefore, in this case, $|M_1(\vec{x}) \cap M_1(\vec{x'})| = \frac{n + 2}{2} = 3$ if and only if $f = g = 0$ or $f = g = 1$. In either case, by taking the permutation of symbols $0$ and $1$ into account, we get that $\vec{x}$ and $\vec{x'}$ are of the form $0^{\frac{n}{2}} 1 0^{\frac{n - 2}{2}}$ and $0^{\frac{n - 2}{2}} 1 0^{\frac{n}{2}}$. 
     
     Let $n \ge 6$, and suppose first that we insert a symbol $h \neq 0$ in both words. Observe that the inserted symbol must be before the symbol $1$ in $\vec{x}$ and after the symbol $1$ in $\vec{x'}$ so that the number of $0$s before and after the middle symbol $1$ is balanced. Clearly, we must have that $h = 1$ and the word is of the form $f0^{\frac{n - 4}{2}} 1 0 1 0^{\frac{n - 4}{2}}g$. Furthermore, there is only one way of obtaining this word from both $\vec{x}$ and $\vec{x'}$. Suppose next that we insert the symbol $h = 0$ in both words. Observe that the inserted symbol must be after the symbol $1$ in $\vec{x}$ and before the symbol $1$ in $\vec{x'}$ so that the number of $0$s before and after the middle symbol $1$ is balanced. Obtained words are hence of the form $f0^{\frac{n - 2}{2}} 1 0^{\frac{n - 2}{2}} g$. Observe that $m(f0^{\frac{n - 2}{2}} 1 0^{\frac{n - 2}{2}} g, M_1(\vec{x})) = \frac{n - 2}{2}$ if $g \neq 0$ and $m(f0^{\frac{n - 2}{2}} 1 0^{\frac{n - 2}{2}} g, M_1(\vec{x})) = \frac{n}{2}$, if $g = 0$. Similarly, $m(f0^{\frac{n - 2}{2}} 1 0^{\frac{n - 2}{2}} g, M_1(\vec{x'})) = \frac{n - 2}{2}$ if $f \neq 0$ and $m(f0^{\frac{n - 2}{2}} 1 0^{\frac{n - 2}{2}} g, M_1(\vec{x'})) = \frac{n}{2}$, if $f = 0$. Therefore, in this case, $|M_1(\vec{x}) \cap M_1(\vec{x'})| = 1 + \frac{n}{2}$ if and only if $f = g = 0$ and hence, $\vec{x}$ and $\vec{x'}$ are of the forms $0^{\frac{n}{2}} 1 0^{\frac{n - 2}{2}}$ and $0^{\frac{n - 2}{2}} 1 0^{\frac{n}{2}}$ as required.

    Next, suppose that $\vec{x}_{[2, n - 1]} = 0^{\frac{n - 2}{2}} 1^{\frac{n - 4}{2}} i$ and $\vec{x'}_{[2, n - 1]} = 0^{\frac{n - 4}{2}} 1^{\frac{n - 2}{2}}i$. Hence, the extremal word pairs of length $n$ must be pairs of the form $\vec{x} = f0^{\frac{n - 2}{2}} 1^{\frac{n - 4}{2}} ig, \vec{x'} = f0^{\frac{n - 4}{2}} 1^{\frac{n - 2}{2}}ig$. We first observe that for a word to be in $M_1(\vec{x}) \cap M_1(\vec{x'})$, it must be obtained by inserting $1$ in $\vec{x}$ and the symbol $0$ in $\vec{x'}$. 

    If $n = 4$, then $\vec{x}=f0ig$ and $\vec{x'}=f1ig$ and the words that can be obtained by inserting the symbol $1$ in $\vec{x}$ are of the form $1f0ig, f10ig, f01ig, f0i1g$, or $f0ig1$. Clearly, words of the form $f10ig$ or $f01ig$ are always in the intersection. On the other hand, a word of the form $1f0ig$ (resp. $f0i1g$) is in the intersection if and only if $f = 1$ (resp.  $i = 1$), and a word of the form $f0ig1$ is in the intersection if and only if $i = g = 1$. But $f = 1$ implies $1f0ig = f10ig$, $i = 1$ implies $f0i1g = f01ig$, and $i = g = 1$ implies $f0ig1 = f01ig$. Hence, we need to focus on counting words only of the forms $f10ig$ and $f01ig$ within the intersection. Observe that $m(f10ig, M_1(\vec{x})) = 1$ for $f \neq 1$, $m(f10ig, M_1(\vec{x})) = 2$ for $f = 1$, $m(f10ig, M_1(\vec{x'})) = 1$ for $i \neq 0$, and $m(f10ig, M_1(\vec{x'})) \geq 2$ for $i = 0$. Similarly, $m(f01ig, M_1(\vec{x})) = 1$ for $i \neq 1$, $m(f01ig, M_1(\vec{x})) \geq 2$ for $i = 1$, $m(f01ig, M_1(\vec{x'})) = 1$ for $f \neq 0$, and $m(f01ig, M_1(\vec{x'})) = 2$ for $f = 0$. Therefore, in this case, $|M_1(\vec{x}) \cap M_1(\vec{x'})| = \frac{n + 2}{2} = 3$ if $f = 1$ and $ i = 0$, or if $f = 0$ and $i = 1$. In either case, by taking the permutation of symbols $0$ and $1$ into account, we get that $\vec{x}$ and $\vec{x'}$ are of the forms $0^{\frac{n}{2}} 1^{\frac{n - 2}{2}} g$ and $0^{\frac{n - 2}{2}} 1^{\frac{n}{2}}g$.

    Let $n \ge 6 $. Notice that $\vec{x}$ and $\vec{x'}$ differ only in the $\frac{n}{2}$-th coordinate and we must insert the symbol $1$ to $\vec{x}$ and $0$ to $\vec{x'}$. Clearly, the insertion must be done before  the differing coordinate in $\vec{x}$ and after  it in $\vec{x'}$ or the other way around. The resulting word in the first case is of the form $f0^{\frac{n - 4}{2}}1 0 1^{\frac{n - 4}{2}} ig$ and clearly there is only one way to obtain the word, that is, $m(f0^{\frac{n - 4}{2}}1 0 1^{\frac{n - 4}{2}} ig, M_1(\vec{x})) = m(f0^{\frac{n - 4}{2}}1 0 1^{\frac{n - 4}{2}} ig, M_1(\vec{x'})) = 1$. The resulting word of the second case is of  the form $f0^{\frac{n - 2}{2}} 1^{\frac{n - 2}{2}} ig$.
     Observe that $m(f0^{\frac{n - 2}{2}}1^{\frac{n - 2}{2}} ig, M_1(\vec{x})) = \frac{n - 2}{2}$ if $i \neq 1$ and $m(f0^{\frac{n - 2}{2}}1^{\frac{n - 2}{2}} ig, M_1(\vec{x})) \geq \frac{n}{2}$ if $i = 1$. Similarly, $m(f0^{\frac{n - 2}{2}}1^{\frac{n - 2}{2}} ig, M_1(\vec{x'})) = \frac{n - 2}{2}$ if $f \neq 0$ and $m(f0^{\frac{n - 2}{2}}1^{\frac{n - 2}{2}} ig, M_1(\vec{x'})) = \frac{n}{2}$ if $f = 0$. Therefore, in this case, $|M_1(\vec{x}) \cap M_1(\vec{x'})| = \frac{n}{2} + 1$ only if $f = 0$ and $i = 1$, and hence $\vec{x}$ and $\vec{x'}$ are of the forms $0^{\frac{n}{2}} 1^{\frac{n - 2}{2}} g$ and $0^{\frac{n - 2}{2}} 1^{\frac{n}{2}}g$ as required.

    Finally, suppose that $\vec{x}_{[2, n - 1]} = i 0^{\frac{n - 2}{2}} 1^{\frac{n - 4}{2}}$ and $\vec{x'}_{[2, n - 1]} = i 0^{\frac{n - 4}{2}} 1^{\frac{n - 2}{2}}$. Observe that $\overleftarrow{i 0^{\frac{n - 2}{2}} 1^{\frac{n - 4}{2}}} = 1^{\frac{n - 4}{2}} 0^{\frac{n - 2}{2}}i$ and $\overleftarrow{i 0^{\frac{n - 4}{2}} 1^{\frac{n - 2}{2}}} = 1^{\frac{n - 2}{2}} 0^{\frac{n - 4}{2}} i$. Hence, by Corollary~\ref{cor:reverse}, this case can be proved similarly to the previous one. Hence, we have determined all the extremal word pairs of length $n$. This completes the induction step and proves the theorem.
\end{proof}

As a direct consequence of Theorem~\ref{sameproblem}, we can determine the extremal pairs in the non-multiset case from the characterization obtained in Theorems~\ref{thm:extremalmultodd} and \ref{thm:extremalmulteven} for the multiset model.
\begin{corollary}
    Let $t = 1$.
    \begin{enumerate}
        \item If $n \geq 1$ is an odd integer, then the only extremal word pairs in the non-multiset model are of the form $\vec{x} = a^{\frac{n + 1}{2}} b^{\frac{n - 1}{2}}$ and $\vec{x'} = a^{\frac{n - 1}{2}} b^{\frac{n + 1}{2}}$, where $a, b \in \mathbb{Z}_q$ and  $a\neq b$.
        \item If $n \geq 2$ is an even integer, then the only extremal word pairs in the non-multiset model are $(i)\ \vec{x} = a^{\frac{n}{2}} b a^{\frac{n - 2}{2}}, \vec{x'} = a^{\frac{n - 2}{2}} b a^{\frac{n}{2}}$, $(ii)\ \vec{x} = a^{\frac{n}{2}} b^{\frac{n - 2}{2}} i, \vec{x'} = a^{\frac{n - 2}{2}} b^{\frac{n}{2}}i$, and $(iii)\ \vec{x} = i a^{\frac{n}{2}} b^{\frac{n - 2}{2}}, \vec{x'} = i a^{\frac{n - 2}{2}} b^{\frac{n}{2}}$ where $a, b, i \in \mathbb{Z}_q$ and $a \neq b$.
    \end{enumerate}
\end{corollary}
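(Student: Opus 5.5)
The plan is to reduce the corollary to the multiset classification via Theorem~\ref{sameproblem}. For $t=1$ and every pair $\vec{x}\neq\vec{x'}$, that theorem gives $N^m_{\vec{x},\vec{x'}}=N^{nm}_{\vec{x},\vec{x'}}$. Maximizing both sides over all pairs of distinct words in $\mathbb{Z}_q^n$ then yields $N_q^m(n,1)=N_q^{nm}(n,1)$, which also agrees with Corollary~\ref{cor:exactboundt=1}: both values equal $\lceil \frac{n+2}{2}\rceil$.

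The key step is the equivalence of the two extremality conditions. A pair $\{\vec{x},\vec{x'}\}$ is extremal in the non-multiset model precisely when $N^{nm}_{\vec{x},\vec{x'}}=N_q^{nm}(n,1)$. By the two equalities above, this holds if and only if $N^m_{\vec{x},\vec{x'}}=N_q^m(n,1)$, i.e., if and only if the pair is extremal in the multiset model. Hence the two families of extremal pairs coincide.

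It then remains only to read off the multiset classification. For odd $n$, Theorem~\ref{thm:extremalmultodd} gives exactly the pairs $a^{\frac{n+1}{2}}b^{\frac{n-1}{2}}$ and $a^{\frac{n-1}{2}}b^{\frac{n+1}{2}}$, which is item~1. For even $n$, Theorem~\ref{thm:extremalmulteven} gives exactly the forms (i), (ii) and (iii), which is item~2.

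I do not expect a genuine obstacle, since all the combinatorial work is already contained in Theorems~\ref{sameproblem}, \ref{thm:extremalmultodd} and \ref{thm:extremalmulteven}. The one point to state carefully is that Theorem~\ref{sameproblem} is a pairwise equality and not merely an equality of the maxima. This pairwise equality is what makes the sets of extremal pairs identical, rather than only the extremal values.
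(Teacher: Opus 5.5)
Your proposal is correct and matches the paper's argument: the paper also obtains this corollary directly from the pairwise equality $N^m_{\vec{x},\vec{x'}}=N^{nm}_{\vec{x},\vec{x'}}$ of Theorem~\ref{sameproblem}, which makes the extremal pairs of the two models coincide, and then reads off the classification from Theorems~\ref{thm:extremalmultodd} and~\ref{thm:extremalmulteven}. You are right that the equality must hold pair by pair, not only for the maxima, for this reduction to work.
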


\begin{remark}\label{nmextremal}
    We could also characterize the extremal pairs by first determining them in the non-multiset case using the ideas in the proof of Theorem~\ref{thm:nonmultt=1upperbd} and then obtaining the result for the multiset model due to Theorem~\ref{sameproblem}. However, the proof would be somewhat longer.
\end{remark}

\section{A Bound on \texorpdfstring{$N_q^m(C;n,1)$}{Nq{m}(C;n,1)}}
\label{sec:N_q^m(C;n,1)}
In this section, we consider the parameter $N_q^m(C;n,1)$, which is equal to $N_q^{nm}(C;n,1)$ by Theorem~\ref{sameproblem}, for some codes that are \emph{proper subsets} of $\mathbb{Z}_q^n$.
 With the aid of the bound \eqref{+1Bound} in Theorem~\ref{thm:doublerecupperbound}, we can extend a code $C\subseteq \mathbb{Z}_q^n$ with small $N_q^m(C;n,1)$ to a code $C'\subseteq \mathbb{Z}_q^{n+2}$ with small $N_q^m(C';n+2,1)$.

 For a code $C\subseteq \mathbb{Z}_q^n$ and $a,b\in\mathbb{Z}_q$, we denote $aCb=\{a\vec{c}b\mid \vec{c}\in C\}\subseteq \mathbb{Z}_q^{n+2}$.
\begin{theorem}\label{constr}
Let $C\subseteq \mathbb{Z}_q^n$ be a code. There exists a code  $C'\subseteq \mathbb{Z}_q^{n+2}$ 
with
\begin{equation}\label{constN}
    N_q^m(C';n+2,1)\le \max (N_q^m(C;n,1) +1,2)
\end{equation}
and \begin{equation}\label{constsize}
    |C'|\ge q^2|C|.
\end{equation}
\end{theorem}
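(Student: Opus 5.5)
The natural choice is $C' = \bigcup_{a,b\in\mathbb{Z}_q} aCb \subseteq \mathbb{Z}_q^{n+2}$, that is, every codeword of $C$ framed by an arbitrary first and last symbol. The map $(a,\vec{c},b)\mapsto a\vec{c}b$ is injective, since the first symbol, the middle block of length $n$ and the last symbol are read off directly. Hence $|C'| = q^2|C|$, which gives \eqref{constsize}. It remains to bound $|M_1(\vec{x})\cap M_1(\vec{x'})|$ for all distinct $\vec{x},\vec{x'}\in C'$. Recall that $N_q^m(C';n+2,1)$ is the maximum of this quantity over such pairs, by the same reasoning as \eqref{Nm} restricted to $C'$.

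Write $\vec{x} = a\vec{c}b$ and $\vec{x'} = a'\vec{c'}b'$ with $\vec{c},\vec{c'}\in C$. We split into two cases according to the end symbols.

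\emph{Case 1: $a\neq a'$ or $b\neq b'$.} If $b\neq b'$, Proposition~\ref{prop:x1=x1xn!=xn} gives $N^m_{\vec{x},\vec{x'}}\le 2$. If $a\neq a'$, Corollary~\ref{cor:x1!=x1xn=xn} gives the same bound. This case is where the term $2$ in the maximum of \eqref{constN} comes from.

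\emph{Case 2: $a=a'$ and $b=b'$.} Then $\vec{c}\neq\vec{c'}$ are distinct codewords of $C$. Since $n+2\ge 3$, Theorem~\ref{thm:doublerecupperbound} applies with $\vec{x}_{[2,n+1]}=\vec{c}$ and $\vec{x'}_{[2,n+1]}=\vec{c'}$, giving
\[
|M_1(\vec{x})\cap M_1(\vec{x'})|\le |M_1(\vec{c})\cap M_1(\vec{c'})|+1\le N_q^m(C;n,1)+1.
\]
The second inequality holds because $\vec{c},\vec{c'}$ is a pair of distinct words of $C$. Combining the two cases yields \eqref{constN}.

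There is no real obstacle here, since the heavy lifting is done by Theorem~\ref{thm:doublerecupperbound}, which is proved in the Appendix. The only points to check are that its hypotheses hold: the length is $n+2\ge 3$ and the inner words are distinct. One should also note that the definition of $N_q^m(C;n,1)$ requires $|C|\ge 2$. If $|C|=1$, Case 2 never occurs and the bound $2$ from Case 1 alone suffices, so the maximum in \eqref{constN} covers this situation as well.
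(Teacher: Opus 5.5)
Your proof is correct and follows the paper's argument essentially verbatim. You use the same code $C'=\bigcup_{a,b\in\mathbb{Z}_q} aCb$, the bound $2$ from Proposition~\ref{prop:x1=x1xn!=xn} and Corollary~\ref{cor:x1!=x1xn=xn} when the end symbols differ, and \eqref{+1Bound} from Theorem~\ref{thm:doublerecupperbound} when they agree, with your checks of its hypotheses and of the degenerate case $|C|=1$ being harmless additions the paper leaves implicit.
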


\begin{proof}
    Let  $C\subseteq \mathbb{Z}_q^n$ be a code 
    and
    \begin{equation}\label{subsetconstr}
      C'=\bigcup_{ab\in \mathbb{Z}_q^2} aCb \subseteq \mathbb{Z}_q^{n+2}.  
    \end{equation}
       Clearly, $|C'|=q^2|C|.$
    Let $\vec{x}\in a_1Cb_1$ and $\vec{x'}\in a_2Cb_2$ be two distinct codewords of $C'$, where $a_1b_1,a_2b_2\in \mathbb{Z}_q^2 $. If  $a_1\neq a_2$ or $b_1\neq b_2$ (or both), then, by Proposition~\ref{prop:x1=x1xn!=xn} and Corollary~\ref{cor:x1!=x1xn=xn}, we know that $|M_1(\vec{x})\cap M_1(\vec{x'})|\le 2$. If $a_1=a_2$ and $b_1=b_2$, then, by \eqref{+1Bound}, we obtain
    $|M_1(\vec{x})\cap M_1(\vec{x'})|\le N_q^m(C;n,1)+1$. The assertion now follows.
\end{proof}

In general, neither of the bounds \eqref{constN} and \eqref{constsize} in the previous result can be improved. Namely, if $C= \mathbb{Z}_q^1$, then we know  by Corollary~\ref{cor:exactboundt=1}, that $N_q^m(C;1,1)=2$. The code $C'$ in \eqref{subsetconstr} is a code of size $q^2\cdot q=q^3$ in $\mathbb{Z}_q^3$ with $N_q^m(C';3,1)\le 2+1=3$. Moreover, by Corollary~\ref{cor:exactboundt=1}, we know that the bound is attained, that is, $N_q^m(C';3,1)=3$ and thus \eqref{constN} is tight. Furthermore, there cannot exist a larger code in $\mathbb{Z}_q^3$ than $C'$ implying that the bound \eqref{constsize} is also tight.

If $N_q^m(C;n,1)\ge 2$, then \eqref{constN} gives  $N_q^m(C';n+2,1)\le N_q^m(C;n,1) +1$. Notice that it is possible to have $N_q^m(C';n+2,1)= N_q^m(C;n,1)$ in the previous theorem, that is, the number of channels can even remain the same for some well-chosen $C$ and $C'$. This happens if  \eqref{+0Bound} holds with equality (so we can use it in the proof of Theorem~\ref{constr} instead of \eqref{+1Bound}). Indeed, we can deduce the word pairs such that \eqref{+0Bound} holds with equality as explained in Remark~\ref{RmkCharacterization} in Appendix. Using the approach of the remark, it is straightforward to check, for example, that the code $C=\{0010,0110\}\subseteq \mathbb{Z}_2^4$ with $N_2^m(C;4,1)=3$ gives the code $C'\subseteq \mathbb{Z}_2^6$ with $N_2^m(C';6,1)=3$ as shown in Example~\ref{+0Conditions}.

Let us next consider codes with $N_q^m(C;n,1)\le 1$.

\begin{corollary}\label{smallNcodes}
    Suppose that $C\subseteq \mathbb{Z}_q^n$ is a code such that
    \begin{equation}\label{codecond}
        |M_1(\vec{x})\cap M_1(\vec{x'})|\le 1
    \end{equation}
     for all $\vec{x},\vec{x'}\in C$ and $\vec{x}\neq \vec{x'}$. There exists a code $C'\subseteq \mathbb{Z}_q^{n+2}$ with $N_q^m(C';n+2,1)\le 2$ of cardinality $q^2|C|.$ 
\end{corollary}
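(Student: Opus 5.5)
The plan is to apply the construction of Theorem~\ref{constr} directly. We set
\[
C'=\bigcup_{ab\in\mathbb{Z}_q^2} aCb\subseteq \mathbb{Z}_q^{n+2}.
\]
As in the proof of that theorem, the sets $aCb$ are pairwise disjoint, since they are distinguished by their first and last letters. Hence $|C'|=q^2|C|$.

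The hypothesis \eqref{codecond} says exactly that every pair of distinct codewords of $C$ has $N^m_{\vec{x},\vec{x'}}\le 1$. Since $N_q^m(C;n,1)$ is the maximum of $N^m_{\vec{x},\vec{x'}}$ over distinct codeword pairs in $C$, this gives $N_q^m(C;n,1)\le 1$. Substituting into \eqref{constN} yields
\[
N_q^m(C';n+2,1)\le \max(N_q^m(C;n,1)+1,2)\le \max(2,2)=2,
\]
which is the claim.

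For a self-contained check, we can also verify the bound pair by pair. Take distinct $\vec{x}\in a_1Cb_1$ and $\vec{x'}\in a_2Cb_2$.

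\textbf{Case $a_1\neq a_2$ or $b_1\neq b_2$.} Proposition~\ref{prop:x1=x1xn!=xn} and Corollary~\ref{cor:x1!=x1xn=xn} give $|M_1(\vec{x})\cap M_1(\vec{x'})|\le 2$.

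\textbf{Case $a_1=a_2$ and $b_1=b_2$.} Here $n+2\ge 3$, so Theorem~\ref{thm:doublerecupperbound} applies. Together with \eqref{codecond} for the inner words $\vec{x}_{[2,n+1]}\neq\vec{x'}_{[2,n+1]}$, it gives a bound of $1+1=2$.

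The only point needing any care is the case $|C|=1$, where the codeword hypothesis is vacuous. There the second case never arises, and the first case still gives the bound $2$; the code $C'$ has $q^2\ge 4$ elements, so it is a genuine code. No real obstacle is expected.
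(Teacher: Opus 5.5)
Your proposal is correct and follows the paper's proof: deduce $N_q^m(C;n,1)\le 1$ from \eqref{codecond} and substitute into \eqref{constN} for the code $C'$ of \eqref{subsetconstr}, which gives $\max(1+1,2)=2$. Your pair-by-pair check just reruns the proof of Theorem~\ref{constr}, and the remark on $|C|=1$ is a harmless extra. (It matters only because the paper's definitions assume $|C|\ge 2$.)
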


\begin{proof}
    From \eqref{codecond}, we get $N_q^m(C;n,1)\le 1$ and, hence, by \eqref{constN}, we have $|M_1(\vec{z})\cap M_1(\vec{z'})|\le \max(1+1,2)=2$ for all distinct $\vec{z},\vec{z'}\in C'\subseteq \mathbb{Z}_q^{n+2}$, where $C'$ is defined as in \eqref{subsetconstr}.
\end{proof}

It is easy to verify that the code $C=\{000,011,110\}\subseteq \mathbb{Z}_2^3$ satisfies the condition \eqref{codecond}. Hence, we obtain a code $C'$ of 12 codewords in $\mathbb{Z}_2^5$ with $N_q^m(C';5,1)\le 2$. In fact, the upper bound in Corollary~\ref{smallNcodes} is tight, that is,  $N_q^m(C';5,1)=2$, since $M_1(00000)\cap M_1(00001)=\{000010,000001\}$ where $00000,00001\in C'$. 

Recall that $\mathrm{set}(M_1(\vec{x}))=I_1(\vec{x}).$ In the literature (see, for example, Tenengolts \cite{Tenengolts}), there are codes $C \subseteq  \mathbb{Z}_q^n$  such that $I_1(\vec{x})\cap I_1(\vec{x'})=\emptyset$ for all $\vec{x},\vec{x'}\in C$, $\vec{x}\neq \vec{x'}$. Hence, such codes satisfy the condition \eqref{codecond}. Consequently, these codes $C \subseteq \mathbb{Z}_q^n$ imply codes of cardinality $q^2|C|$ in $\mathbb{Z}_q^{n+2}$ such that it is enough to have three channels with different insertion errors to unambiguously deduce the transmitted codeword in both the multiset and non-multiset models.

\section{Some General Bounds for \texorpdfstring{$t\ge 1$}{t geq 1}}
\label{sec:generalbds}
In this section, we give some bounds on $N_q^m(n, t)$ 
 for arbitrary $n, q$ and $t$. To begin with, the following result gives a lower bound on the value $N_q^{m}(n, t)$ by estimating
the size of the intersection of the insertion spheres of a particular set of words.

\begin{theorem}\label{GeneralLBMultiset}
For $n\geq 3$,
we have 
$$N_q^m(n, t) \geq q^{t-1}\left(\left(\sum_{i=0}^{\lceil\frac{n}{2}\rceil-1}\binom{n+t-2-2i}{t-1}\right)-\binom{\lceil\frac{n}{2}\rceil+t-3}{t-1}\right) \geq q^{t-1}\left(\frac{n+t-1}{2n+t-1}\binom{n+t-1}{t}-\binom{\lceil\frac{n}{2}\rceil+t-3}{t-1}\right).$$
\end{theorem}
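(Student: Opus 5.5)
The plan is to exhibit one explicit pair $\vec{x},\vec{x'}\in\mathbb{Z}_q^n$ and count, with multiplicities, a large sub-multiset of $M_t(\vec{x})\cap M_t(\vec{x'})$. The extremal pairs for $t=1$ in Theorems~\ref{thm:extremalmultodd} and~\ref{thm:extremalmulteven} suggest the natural choice
$$\vec{x}=0^{\lceil n/2\rceil}1^{\lfloor n/2\rfloor},\qquad \vec{x'}=0^{\lceil n/2\rceil-1}1^{\lfloor n/2\rfloor+1}.$$
This pair differs only at position $\lceil n/2\rceil$. Inserting a $1$ into $\vec{x}$ among its $0$-block, or a $0$ into $\vec{x'}$ among its $1$-block, produces a common word $0^{\lceil n/2\rceil-1}\,1\,0\,1^{\lfloor n/2\rfloor}$-type structure, or the word $0^{\lceil n/2\rceil}1^{\lfloor n/2\rfloor+1}$. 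The latter is obtainable from $\vec{x}$ and from $\vec{x'}$ in many ways.

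The key step is to build injective families of insertion vectors. For $\vec{x}$, I will take insertion vectors of weight $t$ that contain one ``essential'' insertion turning $\vec{x}$ into a supersequence of $\vec{x'}$ (and its output into a word of $M_t(\vec{x'})$), together with $t-1$ free insertions. For the free insertions, both the symbols ($q^{t-1}$ choices) and the placements are arbitrary, subject to a normalisation that makes the map injective. The natural normalisation is to fix the essential insertion to be the \emph{leftmost} inserted symbol and to index the families by its gap.

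I expect the family indexed by $i$, for $0\le i\le \lceil n/2\rceil-1$, to leave $n-2i$ gaps effectively available for the remaining $t-1$ insertions. The count of such placements is $\binom{n-2i+t-2}{t-1}$, which explains the decrement by $2$: moving the essential insertion past one symbol of $\vec{x}$ also removes a corresponding gap on the $\vec{x'}$ side. For each output word $\vec{y}$ I then need $m(\vec{y},M_t(\vec{x'}))\ge m(\vec{y},M_t(\vec{x}))$ restricted to the family. I will obtain this by exhibiting, for each vector in the family, a partner insertion vector of $\vec{x'}$ with the same output, injectively, using the symmetric rule on $\vec{x'}$. The term $\binom{\lceil n/2\rceil+t-3}{t-1}$ corrects for double counting where two families (or a vector and its partner) collide. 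I expect this collision to be the configuration in which the essential insertion sits right at the boundary of the two blocks, so that the $t-1$ free insertions live in only about $\lceil n/2\rceil-1$ gaps; I will subtract it once.

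The main obstacle is proving injectivity and the multiplicity domination simultaneously, since different insertion vectors with the same output word are exactly what the multiset intersection counts through $\min$. I will handle this by a canonical-embedding argument: for each $\vec{y}$, fix the leftmost (greedy) embedding of $\vec{x}$ into $\vec{y}$ and read off the family index $i$ from it. I will check by a short exchange argument that the pairing of vectors of $\vec{x}$ with vectors of $\vec{x'}$ is well defined.

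For the second inequality, I will use the identity $\sum_{j=0}^{n-1}\binom{n+t-2-j}{t-1}=\binom{n+t-1}{t}$, which is the hockey stick. The sum over even $j=2i$ dominates the sum over odd $j$ termwise, because $\binom{m}{t-1}\ge\binom{m-1}{t-1}$. More precisely, the ratio $\binom{m}{t-1}/\binom{m-1}{t-1}=m/(m-t+1)$ lets me bound the odd part by the even part scaled appropriately. Solving for the even part, and using the worst ratio over the range, yields the factor $\frac{n+t-1}{2n+t-1}$.
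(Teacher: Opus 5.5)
Your pair is the paper's (up to swapping the two words, or reversing and complementing). Your argument for the second inequality is sound and is essentially the paper's: pair $\binom{n+t-2-2i}{t-1}$ with $\binom{n+t-3-2i}{t-1}$, use $\frac{n-1-2i}{n+t-2-2i}\le\frac{n}{n+t-1}$, and apply the hockey stick.

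\textbf{The gap is in the first inequality.} Its whole content is an explicit injection $w\mapsto w'$ from insertion vectors on $\vec{x}$ to insertion vectors on $\vec{x'}$ with $w(\vec{x})=w'(\vec{x'})$. You never define the families or the partner map, and what you do commit to does not work.
\begin{itemize}
\item \emph{Wrong side for the normalization.} In your orientation $\vec{x'}=\vec{x}_{[2,n]}1$, and the only single insertion making $\vec{x}$ a supersequence of $\vec{x'}$ is a $1$ at a gap $\ge\lceil n/2\rceil-1$. If that is the leftmost inserted symbol, the other $t-1$ symbols all lie in the last $\lfloor n/2\rfloor+2$ gaps. So all your families together contain at most $q^{t-1}\binom{\lfloor n/2\rfloor+t+1}{t}$ vectors, e.g.\ $84q^2$ for $n=10$, $t=3$, against the target $115q^2$. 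Hence for $n\ge 5$ your ``$n$ gaps'' for $i=0$ cannot occur; the normalization must be mirrored (essential symbol rightmost).
\item \emph{The subtracted binomial is not a collision correction.} In the paper the families are disjoint, and the term has a different origin (see below).
\item \emph{The canonical-embedding device does not help.} Fixing one embedding per output word does nothing toward comparing how many vectors produce a given word from $\vec{x}$ and from $\vec{x'}$. If it is used to pick one vector per word, it only counts distinct common words. There are at most $N_q^+(n,t)$ of those, e.g.\ $128<184$ for $q=2$, $n=t=4$.
\end{itemize}

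\textbf{The missing idea is the shift structure of the pair.} In the paper's orientation, $\vec{x}=0^{\lfloor n/2\rfloor}1^{\lceil n/2\rceil}$ and $\vec{x'}=0\,\vec{x}_{[1,n-1]}$.

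First case: the first non-empty word of $w$ is $\vec{w_j}=0\vec{v}$ with $j\le\lfloor n/2\rfloor$, $\vec{v}\neq\varepsilon$, and $\vec{w_n}=\varepsilon$. Set $\vec{w'_{j+1}}=\vec{v}$, $\vec{w'_{i+1}}=\vec{w_i}$ for $j<i<n-1$, $\vec{w'_n}=\vec{w_{n-1}}1$, and all other words of $w'$ empty. Then $w(\vec{x})=w'(\vec{x'})$, and $j$ can be read off from $w'$ precisely because $\vec{v}\neq\varepsilon$. Summing over $j$ telescopes to $q^{t-1}\bigl(\binom{n+t-2}{t-1}-\binom{\lceil n/2\rceil+t-3}{t-1}\bigr)$. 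Here $q^{t-1}\binom{n+t-2}{t-1}$ counts all placements, over gaps $0,\dots,n-1$, of the $t-1$ symbols other than the leading $0$. The subtracted term counts the excluded placements whose first symbol falls beyond gap $\lfloor n/2\rfloor$.

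Second case: $\vec{v}=\varepsilon$. Now the shift would forget $j$, so these vectors are paired differently. The inserted $0$ is dropped, the words of $w$ at gaps $j+1,\dots,n-j-2$ move one gap to the right, and the extra $1$ is appended at gap $n-j-1$. This forces gaps $n-j-1,\dots,n$ of $w$ to be empty and requires $j\le\lceil n/2\rceil-2$. The forced empty gaps at both ends leave $n-2j-2$ free gaps, giving $q^{t-1}\binom{n+t-4-2j}{t-1}$. This is your ``$n-2i$ gaps'' with $i=j+1\ge 1$.

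The two image families are disjoint (last word non-empty versus empty), so the combined map is injective. That injection is what yields the bound with multiplicities.
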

\begin{proof}
    Let $\vec{x} = 0^{\lfloor \frac{n}{2} \rfloor} 1^{\lceil \frac{n}{2} \rceil}$ and $\vec{x'} = 0^{\lfloor \frac{n}{2} \rfloor + 1} 1^{\lceil \frac{n}{2} \rceil - 1}$. We will show  that $ |M_t(\vec{x}) \cap M_t(\vec{x'})| \geq q^{t-1}\left(\left(\sum_{i=0}^{\lceil\frac{n}{2}\rceil-1}\binom{n+t-2-2i}{t-1}\right)-\binom{\lceil\frac{n}{2}\rceil+t-3}{t-1}\right)$
    which will prove the first inequality of the theorem. Every insertion vector which we consider in this proof, has weight exactly $t$. In the following, we first construct two sets of insertion vectors $W_1$ and $W_2$ which will be applied to $\vec{x}$, and count the sizes of the sets. Based on $W_1$ and $W_2$, we construct sets $W_1'$ and $W_2'$ such that for each $w\in W_i$ $(i=1,2)$ we obtain a unique insertion vector $w'\in W_i'$ $(i=1,2)$. Moreover,  applying $w$ to $\vec{x}$ leads to the same word as applying $w'$ to $\vec{x'}$, that is, the obtained word is in the multiset $M_t(\vec{x}) \cap M_t(\vec{x'})$, and thus, it contributes to the lower bound.
    
    We begin by constructing a set $W_1$ of insertion vectors acting on $\vec{x}$. For all $w=(\vec{w_0}, \dots, \vec{w_n})\in W_1$
    we require that $\vec{w_n}=\varepsilon$. Additionally, there is no insertion vector $w\in W_1$ with $\vec{w_i}=\varepsilon$ for each $i\in[0,\lfloor\frac{n}{2}\rfloor]$. A vector $w$ with $\vec{w_0}\neq\varepsilon$ belongs to $W_1$ if and only if $\vec{w_0}=0\vec{v}$ for some $\vec{v}\neq\varepsilon$ where the weight of $\vec{v}$ is at most $t-1$ and the words $\vec{w_i}$ with $0<i<n$ are freely chosen as long as $w$ has weight $t$. This choice gives $M_q(n-1,t-1)-M_q(n-2,t-1)$ insertion vectors in $W_1$, since $M_q(n-1,t-1)$ corresponds to placing $t-1$ symbols to positions other than the last one and from this we remove the number $M_q(n-2,t-1)$ corresponding to insertion vectors which also have $\vec{v}=\varepsilon$. 
    Similarly, for each $j \in [1, \lfloor n/2 \rfloor]$, a vector $w$ with $\vec{w_i}=\varepsilon$ for every $i\in[0,j-1]$ and $\vec{w_{j}}\neq\varepsilon$ belongs to $W_1$ if and only if $\vec{w_{j}}=0\vec{v}$ for some $\vec{v}\neq\varepsilon$ of weight at most $t-1$ (and the words $\vec{w_i}$ with $j<i<n$ are freely chosen as long as $w$ has weight $t$). In other words, we have \[
     W_1 = \left\{(\varepsilon, \ldots, \varepsilon, 0\vec{v}, \vec{w_{j+1}}, \ldots, \vec{w_{n-1}}, \varepsilon) \in W \mid \vec{v} \neq \varepsilon \text{ and } j \in \left[0, \left\lfloor \frac{n}{2} \right\rfloor \right] \right\} \text,
    \]
     where $W$ is the set of all insertion vectors of weight $t$. This leads to $M_q(n-j-1,t-1)-M_q(n-j-2,t-1)$ insertion vectors in $W_1$ for each $j$, where $M_q(n-j-1,t-1)$ corresponds to insertion vectors with $j$ empty words at beginning and one at the end of the insertion vector from which we remove $M_q(n-j-2,t-1)$ such insertion vectors which also have $\vec{v}=\varepsilon$.   
    In total, with these restrictions, we have by Equation~\eqref{eq:multisetspheresize} 
    \begin{align*}
        |W_1|=&\sum_{j=0}^{\lfloor\frac{n}{2}\rfloor}\left(M_q(n-j-1,t-1)-M_q(n-j-2,t-1)\right)\\
        =&q^{t-1}\sum_{j=0}^{\lfloor\frac{n}{2}\rfloor}\left(\binom{n+t-2-j}{t-1}-\binom{n+t-3-j}{t-1}\right)\\
        =&q^{t-1}\left(\binom{n+t-2}{t-1}-\binom{\lceil\frac{n}{2}\rceil+t-3}{t-1}\right) \text,
    \end{align*}
    where the last equality follows due to the telescopic nature of the sum. Note that for $t=1$ there are no insertion vectors in $W_1$. However, the set $W_2$ discussed in the following paragraph is non-empty even in this case.

    Let us then consider the set $W_2$ of insertion vectors. The set $W_2$ is constructed somewhat similarly to the set $W_1$; namely, the first non-empty insertion $\vec{w_j}=0\vec{v}$ has $\vec{v} = \varepsilon$ instead of $\vec{v} \neq \varepsilon$ (as in $W_1$) and we also require (unlike in $W_1$) that $\vec{w_{n-j-1}} = \vec{w_{n-j}} = \cdots = \vec{w_{n}} = \varepsilon$. 
    For $w = (\vec{w_0},\dots,\vec{w_n}) \in W_2$, we have the following restrictions on it: we do not have $\vec{w_i}=\varepsilon$ for all $i\in[0,\lceil\frac{n}{2}\rceil-2]$ for any $w$ in $W_2$. Notice that since $n\geq3$, we have $\lceil\frac{n}{2}\rceil-2\geq0$. The second restriction is that if $\vec{w_i}=\varepsilon$ for each $i<j$ with some $j\leq \lceil\frac{n}{2}\rceil-2$ and $\vec{w_j}\neq\varepsilon$ for $w\in W_2$, then $\vec{w_j}=0$, and $\vec{w_n}=\vec{w_{n-1}}=\varepsilon$ together with $\vec{w_{n-i-2}}=\varepsilon$ for each $i<j$. The other words $\vec{w_i}$ of $w$ are chosen freely as long as the weight of $w$ is $t$. 
    Notice that $n-i-2>n-(\lceil\frac{n}{2}\rceil-2)-2=\lfloor\frac{n}{2}\rfloor>\lceil\frac{n}{2}\rceil-2$ and thus, we do not define any word $\vec{w_i}$ in $w\in W_2$ simultaneously as an empty word and also as a non-empty word. In other words, we have \[
     W_2 = \left\{(\varepsilon, \ldots, \varepsilon, 0, \vec{w_{j+1}}, \ldots, \vec{w_{n-j-2}}, \varepsilon, \ldots, \varepsilon) \in W \mid  j \in \left[0, \left\lceil \frac{n}{2} \right\rceil - 2 \right] \right\} \text,
     \]
     where $W$ is the set of all insertion vectors of weight $t$. In particular, for each $j$, we add $M_q(n-2j-3,t-1)$ insertion vectors to $W_2$. Indeed, as the first non-empty $\vec{w_j}$ in the insertion vector is fixed as $0$, we may consider $t-1$ insertions. Moreover, there are $(j+1)+(j+2)=2j+3$ positions in which we cannot insert other symbols. 
    Hence,  
      \begin{align*}
        |W_2|=&\sum_{j=0}^{\lceil\frac{n}{2}\rceil-2}M_q(n-2j-3,t-1)\\
        =&\sum_{j=0}^{\lceil\frac{n}{2}\rceil-2}q^{t-1}\binom{n+t-4-2j}{t-1}.
        \end{align*}
 We have $W_1\cap W_2=\emptyset$ since for the first non-empty insertion $0\vec{v}$, we have $\vec{v}\neq\varepsilon$ in $W_1$ and $\vec{v}=\varepsilon$ in $W_2$. 
 Hence, the combined size of $W_1$ and $W_2$ is 
 \begin{align}\label{w1+w2}
        |W_1|+|W_2| =&  q^{t-1}\left(\binom{n+t-2}{t-1}-\binom{\lceil\frac{n}{2}\rceil+t-3}{t-1}\right) +    \sum_{i=0}^{\lceil\frac{n}{2}\rceil-2}q^{t-1}\binom{n+t-4-2i}{t-1}\nonumber \\
        =&q^{t-1}\left(\left(\sum_{i=0}^{\lceil\frac{n}{2}\rceil-1}\binom{n+t-2-2i}{t-1}\right)-\binom{\lceil\frac{n}{2}\rceil+t-3}{t-1}\right).
    \end{align}

    Let us next construct the sets of insertion vectors $W_1'$ and $W_2'$ which are applied to $\vec{x'}$. Consider first the set $W_1'$. We construct the vectors $w'\in W_1'$ using $W_1$ in the following way: For any $w\in W_1$ with $w=(\vec{w_0},\dots,\vec{w_n})$ where each $\vec{w_i}=\varepsilon$ for some $i<j$ and $\vec{w_j}=0\vec{v}$ for some $\vec{v}\neq\varepsilon$, we construct $w'$ where $\vec{w'_i}=\vec{w_i}=\varepsilon$ for every $i<j$, $\vec{w'_j}=\varepsilon$, $\vec{w'_{j+1}}=\vec{v}$ and $\vec{w_{i'}'}=\vec{w_{i'-1}}$ for $j+1<i'<n$, and $\vec{w_n'}=\vec{w_{n-1}}1$.  In other words, we have 
    \[W'_1 = \left\{(\varepsilon, \ldots, \varepsilon, \vec{w'_{j+1}} = \vec{v}, \vec{w'_{j+2}} = \vec{w_{j+1}}, \ldots, \vec{w'_{n-1}} = \vec{w_{n-2}}, \vec{w'_{n}} = \vec{w_{n-1}}1) \mid \vec{v} \neq \varepsilon \text{ and } j \in \left[0, \left\lfloor \frac{n}{2} \right\rfloor \right] \right\} \text.
     \]
     We observe that the insertion vector $w \in W_1$ applied to $\vec{x}$ and the corresponding $w' \in W'_1$ applied to $\vec{x'}$ lead to the same word 
     \[
     0^{j+1}\vec{v}0\vec{w_{j+1}}0 \cdots 0\vec{w_{\lfloor n/2 \rfloor}}1\vec{w_{\lfloor n/2 \rfloor+1}}1 \cdots 1 \vec{w_{n-1}}1    
     \]
     in the multiset $M_t(\vec{x}) \cap M_t(\vec{x'})$.   
    Furthermore, each insertion vector in $W_1'$ is unique and has weight $t$. Hence, $|W_1|=|W_1'|$ and we obtain each word with same multiplicity with insertion vectors of $W_1$ and $W_1'$.

    Let us then construct set $W_2'$. Consider again $w\in W_2$ with $w=(\vec{w_0},\dots,\vec{w_n})$. Suppose that $\vec{w_i}=\varepsilon$ for each $i<j $ for some $j\leq \lceil\frac{n}{2}\rceil-2$ and $\vec{w_j}=0$. By the definition of $W_2$, we have $\vec{w_n}=\vec{w_{n-1}}=\vec{w_{n-i-2}}=\varepsilon$ for each $i<j$. We construct for such $w$ the insertion vector $w'\in W_2'$ with $w'=(\vec{w_0'},\dots,\vec{w_n'})$ where $ \vec{w_i'}=\varepsilon$ for each $i\leq j+1$ and $i\geq n-j$, and $\vec{w_i'}=\vec{w_{i-1}}$ for each $j+2\leq i\leq n-2-j$, and $\vec{w'_{n-1-j}}=\vec{w_{n-2-j}}1$. Note that we have $j+1<n-1-j$ since  $j\leq\lceil\frac{n}{2}\rceil-2$. Hence, these conditions do not overlap. In other words, we have \[
     W'_2 = \left\{ \left(\varepsilon, \ldots, \varepsilon, \vec{w'_{j+2}} = \vec{w_{j+1}}, \ldots, \vec{w'_{n-j-2}} = \vec{w_{n-j-3}}, \vec{w'_{n-j-1}} = \vec{w_{n-j-2}}1, \varepsilon, \ldots, \varepsilon \right)  
     \mid  j \in \left[0, \left\lceil \frac{n}{2} \right\rceil - 2 \right] \right\} \text.
     \] 
     We observe that the insertion vector $w \in W_2$ applied to $\vec{x}$ and the corresponding $w' \in W'_2$ applied to $\vec{x'}$ lead to the same word 
     \[
     0^{j+2}\vec{w_{j+1}}0\vec{w_{j+2}}0 \cdots 0\vec{w_{\lfloor n/2 \rfloor}}1\vec{w_{\lfloor n/2 \rfloor+1}}1 \cdots 1 \vec{w_{n-j-2}}1^{j+2}    
     \]
     in the multiset $M_t(\vec{x}) \cap M_t(\vec{x'})$. Moreover, each insertion vector in $W_2'$  has weight $t$ and is unique, that is, no two distinct insertion vectors of $W_2$ imply the same vector of $W'_2$ (in the definition above). Indeed, if $w,w'\in W_2'$ and $w=w'$, then both begin with at least $j'+2$ `forced' empty words and 
    end with exactly $j'+1$ empty words $\varepsilon$ for some $j'\geq0$ while each other word of the insertion vectors is determined by $\vec{w_i}$ and $\vec{w_i'}$ from the corresponding insertion vectors $w_2$ and $w_2'$ of $W_2$ for $i\in[j'+1,n-j'-2]$. 
    Hence, $w_2=w_2'$, a contradiction.    
    Therefore, also $|W_2|=|W_2'|$ and we again obtain each word with same multiplicity with insertion vectors of $W_2$ and $W_2'$. Moreover, $W_1'\cap W_2'=\emptyset$ as every insertion vector $w$ of $W_1'$ has $\vec{w_n}\neq\varepsilon$ while it is always an empty word for any insertion vector of $W_2'$.
    
    Therefore, the first inequality of the theorem follows from our computation of $|W_1|+|W_2|$.

\smallskip

Let us now consider the latter inequality in the claim of the theorem. For this, we study the lower bound of~\eqref{w1+w2}. First,  for $0\leq i\leq\lfloor\frac{n}{2}\rfloor-1$ (the case $n$ is odd and $i=\lceil\frac{n}{2}\rceil-1$ of \eqref{w1+w2} is considered separately below), we notice  (using $x/y\ge (x+1)/(y+1)$, when $x\ge y>0$, obtained from the well-known mediant inequality)  that
 \begin{align*}
        \binom{n+t-2-2i}{t-1}=&\frac{n+t-1}{2n+t-1}\left(\binom{n+t-2-2i}{t-1}+\frac{n}{n+t-1}\binom{n+t-2-2i}{t-1}\right)\\
        =&\frac{n+t-1}{2n+t-1}\left(\binom{n+t-2-2i}{t-1}+\frac{n}{n+t-1}\frac{n+t-2-2i}{n-1-2i}\binom{n+t-3-2i}{t-1}\right)\\
        \geq& \frac{n+t-1}{2n+t-1}\left(\binom{n+t-2-2i}{t-1}+\frac{n}{n+t-1}\frac{n+t-1}{n}\binom{n+t-3-2i}{t-1}\right)\\
        =&\frac{n+t-1}{2n+t-1}\left(\binom{n+t-2-2i}{t-1}+\binom{n+t-3-2i}{t-1}\right) \text,
    \end{align*} 
    where we especially note that $n-1-2i > 0$ as $i\leq\lfloor\frac{n}{2}\rfloor-1$. Furthermore, the inequality holds also when $n$ is odd and $i=\lceil\frac{n}{2}\rceil-1$. Indeed, then the left hand side equals to 1 and the right hand side to $\frac{n+t-1}{2n+t-1}<1$. 
With the aid of the above inequality, we get the second claim of the theorem from  \eqref{w1+w2} by using the well-known hockey-stick identity (in the last equality below),
\begin{align*}
        |W_1|+|W_2| 
        \geq& q^{t-1}\left(\frac{n+t-1}{2n+t-1}\left(\sum_{i=0}^{\lceil\frac{n}{2}\rceil-1}\binom{n+t-2-2i}{t-1}+\binom{n+t-3-2i}{t-1}\right)-\binom{\lceil\frac{n}{2}\rceil+t-3}{t-1}\right)\\
        =&q^{t-1}\left(\frac{n+t-1}{2n+t-1}\left(\sum_{i=t-1}^{n+t-2}\binom{i}{t-1}\right)-\binom{\lceil\frac{n}{2}\rceil+t-3}{t-1}\right)\\
        = &q^{t-1}\left(\frac{n+t-1}{2n+t-1}\binom{n+t-1}{t}-\binom{\lceil\frac{n}{2}\rceil+t-3}{t-1}\right).\qedhere
    \end{align*}
   \end{proof}

   Notice that for $t=1$ we have  $N_q^m(n, 1) = \lceil \frac{n + 2}{2} \rceil$ by  Corollary~\ref{cor:exactboundt=1}. However, the first general bound above in Theorem~\ref{GeneralLBMultiset} gives $N_q^m(n, 1) \geq \lceil \frac{n}{2} \rceil-1$ for $t=1$, which is only two less than the exact value. 

   Regarding  the bounds on $N_q^{nm}(n, t)$ for the non-multiset case, see the discussion in Conclusion, but, of course, we trivially have $N_q^{nm}(n, t) \geq N_q^m(n, t)$. 
  \medskip

Next, we determine an \emph{upper} bound on the value of $N_q^{m}(n, t)$. To obtain this upper bound, we draw inspiration from Levenshtein's work on insertion errors~\cite{levenshtein2001efficient} and use recursion to determine the elements of an insertion sphere. Unlike in \eqref{eq:doubrec}, we shall focus only on the first position of an output word $\vec{y}$. There are two possibilities to insert symbols to $\vec{x} = x_1x_2 \cdots x_n$: either no insertions occurred before $x_1$ or an insertion took place before $x_1$. If no insertions have taken place before $x_1$, then a total of $t$ insertions take place in $\vec{x}_{[2, n]}$. If an insertion $\alpha$ has taken place before $x_1$, then a total of $t - 1$ insertions take place in $\alpha\vec{x}$ after $\alpha$. We can write this symbolically as follows for $n\ge 2$: 
\begin{equation}
    \label{eq:singrec}
    M_t(\vec{x}) = x_1 M_t(\vec{x}_{[2, n]}) \underset{a \in \mathbb{Z}_q}\bigoplus a M_{t - 1}(\vec{x}).
\end{equation}
Using this, we can give a recursive formula for the intersection of the insertion spheres of two words. Let $\vec{x} = \vec{x}_{[1, n]}$ and $\vec{x'} = \vec{x'}_{[1, n]}$ be distinct words in $\mathbb{Z}_q^n$. Applying (\ref{eq:singrec}) to both words and grouping the resulting words according to their first symbol, we obtain the following two cases:

\noindent \textbf{Case 1:} When $x_1 = x_1'$, we have:
\begin{equation}
\label{eq:singlerecursionx1=x1'}
    \begin{split}
        M_t(\vec{x}) \cap M_t(\vec{x'}) = & \big[(x_1 M_t(\vec{x}_{[2, n]}) \oplus x_1 M_{t - 1}(\vec{x})) \cap (x_1 M_t(\vec{x'}_{[2, n]}) \oplus x_1 M_{t - 1}(\vec{x'}))\big] \\  & \underset{\substack{a \in \mathbb{Z}_q \\a \neq x_1}}\bigoplus \big[a M_{t - 1}(\vec{x}) \cap a M_{t - 1}(\vec{x'})\big].
    \end{split}
\end{equation}
\noindent\textbf{Case 2:} When $x_1 \neq x_1'$, we have:
\begin{equation}
\label{eq:singlerecursionx1!=x1'}
    \begin{split}
        M_t(\vec{x}) \cap M_t(\vec{x'}) = & \big[(x_1 M_t(\vec{x}_{[2, n]}) \oplus x_1 M_{t - 1}(\vec{x})) \cap (x_1 M_{t - 1}(\vec{x'}))\big] \oplus \big[x_1' M_{t - 1}(\vec{x}) \\ & \cap (x_1' M_t(\vec{x'}_{[2, n]}) \oplus x_1' M_{t - 1}(\vec{x'}))\big] \underset{\substack{a \in \mathbb{Z}_q \\a \neq x_1, x_1'}}\bigoplus \big[a M_{t - 1}(\vec{x}) \cap a M_{t - 1}(\vec{x'})\big].
    \end{split}
\end{equation}

The recursive relations (\ref{eq:singlerecursionx1=x1'}) and (\ref{eq:singlerecursionx1!=x1'}) can be used to give us some upper bounds on the value of $N^m_q(n, t)$ depending on the approximations made. The following bound is an improvement of a bound in \cite[Theorem 3.1]{pavanITWInsertion}.

\begin{theorem}
\label{thm:singrecubB}
    If $n\ge 2$ and $t \geq 2$, then $N^m_q(n, t) \leq N^m_q(n - 1, t) + (q - 1) N^m_q(n, t - 1) + 2M_q(n, t - 1)$.
\end{theorem}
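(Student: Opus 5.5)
We fix distinct $\vec{x},\vec{x'}\in\mathbb{Z}_q^n$ and bound $|M_t(\vec{x})\cap M_t(\vec{x'})|$ using the recursions \eqref{eq:singlerecursionx1=x1'} and \eqref{eq:singlerecursionx1!=x1'}. Two elementary facts about the multiset operations do the work. First, for multisets $A_1,A_2,B_1,B_2$ we have $|(A_1\oplus A_2)\cap(B_1\oplus B_2)|\le |A_1\cap B_1|+|A_2|+|B_2|$, since $\min(a_1+a_2,b_1+b_2)\le \min(a_1,b_1)+a_2+b_2$ pointwise. Second, $|A\cap B|\le\min(|A|,|B|)$. We also use that prefixing all words by a common symbol $a$ does not change the size of an intersection, so $|aA\cap aB|=|A\cap B|$. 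Finally, the parts of the union in the recursions indexed by different first symbols are disjoint, so their sizes simply add.

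\textbf{Case $x_1=x_1'$.} By \eqref{eq:singlerecursionx1=x1'}, the first bracket is at most
\[
|M_t(\vec{x}_{[2,n]})\cap M_t(\vec{x'}_{[2,n]})|+|M_{t-1}(\vec{x})|+|M_{t-1}(\vec{x'})|.
\]
Since $\vec{x}\ne\vec{x'}$ and $x_1=x_1'$, the suffixes $\vec{x}_{[2,n]}$ and $\vec{x'}_{[2,n]}$ are distinct words of length $n-1\ge 1$. Hence the first term is at most $N_q^m(n-1,t)$, and the last two terms together equal $2M_q(n,t-1)$ by \eqref{eq:multisetspheresize}. Each of the $q-1$ brackets with $a\ne x_1$ has size $|M_{t-1}(\vec{x})\cap M_{t-1}(\vec{x'})|\le N_q^m(n,t-1)$. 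Summing gives exactly the claimed bound.

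\textbf{Case $x_1\ne x_1'$.} By \eqref{eq:singlerecursionx1!=x1'}, the bracket with first symbol $x_1$ has size at most $|x_1M_{t-1}(\vec{x'})|=M_q(n,t-1)$, and the bracket with first symbol $x_1'$ has size at most $|x_1'M_{t-1}(\vec{x})|=M_q(n,t-1)$. The remaining $q-2$ brackets each contribute at most $N_q^m(n,t-1)$. The total is therefore at most $(q-2)N_q^m(n,t-1)+2M_q(n,t-1)$, which is below the claimed bound since $N_q^m(n-1,t)\ge 0$ and $N_q^m(n,t-1)\ge 0$.

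Taking the maximum over all pairs then yields the theorem. The hypotheses are used as follows: $t\ge 2$ ensures that $M_{t-1}$ refers to a genuine sphere with $t-1\ge 1$, so that $N_q^m(n,t-1)$ is defined as in \eqref{Nm}, and $n\ge 2$ is needed for \eqref{eq:singrec}. The only step requiring care is the pointwise inequality for $\oplus$ and $\cap$, together with checking that the suffixes in Case 1 are distinct, so that $N_q^m(n-1,t)$ applies. Neither appears to be a real obstacle.
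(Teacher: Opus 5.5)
Your proof is correct and follows the paper's argument. You split according to whether $x_1=x_1'$ using \eqref{eq:singlerecursionx1=x1'} and \eqref{eq:singlerecursionx1!=x1'}, bound the cross terms by $|M_{t-1}(\vec{x})|=|M_{t-1}(\vec{x'})|=M_q(n,t-1)$, and bound the remaining intersections by $N_q^m(n-1,t)$ and $N_q^m(n,t-1)$. The differences are minor: your pointwise inequality $\min(a_1+a_2,b_1+b_2)\le\min(a_1,b_1)+a_2+b_2$ compresses the paper's three-term split into one step, and you explicitly check that $\vec{x}_{[2,n]}\neq\vec{x'}_{[2,n]}$, which the paper leaves implicit.
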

\begin{proof}
    Let $\vec{x}, \vec{x'} \in \mathbb{Z}_q^n, \vec{x} \neq \vec{x'}$ be such that $|M_t(\vec{x}) \cap M_t(\vec{x'})| = N^m_q(n, t)$. If $x_1 = x_1'$, then, from Equation (\ref{eq:singlerecursionx1=x1'}), we can take the cardinality of both sides and simplify as follows
    \begin{align*}
        |M_t(\vec{x}) \cap M_t(\vec{x'})| = \ & |(x_1 M_t(\vec{x}_{[2, n]}) \oplus x_1 M_{t - 1}(\vec{x})) \cap (x_1 M_t(\vec{x'}_{[2, n]}) \oplus x_1 M_{t - 1}(\vec{x'}))| \\  & +|\underset{\substack{a \in \mathbb{Z}_q \\a \neq x_1}}\bigoplus a M_{t - 1}(\vec{x}) \cap a M_{t - 1}(\vec{x'})| \\ =\ & |(M_t(\vec{x}_{[2, n]}) \oplus M_{t - 1}(\vec{x})) \cap (M_t(\vec{x'}_{[2, n]}) \oplus M_{t - 1}(\vec{x'}))| \\  & + (q - 1)|M_{t - 1}(\vec{x}) \cap M_{t - 1}(\vec{x'})| \\ \leq\ & |M_t(\vec{x}_{[2, n]}) \cap M_t(\vec{x'}_{[2, n]})| + |M_{t - 1}(\vec{x}) \cap (M_t(\vec{x'}_{[2, n]}) \oplus M_{t - 1}(\vec{x'}))| \\  & + |(M_t(\vec{x}_{[2, n]}) \oplus M_{t - 1}(\vec{x})) \cap M_{t - 1}(\vec{x'})| + (q - 1)|M_{t - 1}(\vec{x}) \cap M_{t - 1}(\vec{x'})|.
\end{align*}
Since $|M_t(\vec{x}_{[2, n]}) \cap M_t(\vec{x'}_{[2, n]})| \leq N^m_q(n - 1, t)$ and $|M_{t - 1}(\vec{x}) \cap M_{t - 1}(\vec{x'})| \leq N^m_q(n, t - 1)$, the above equation can be simplified to get
\begin{equation*}
\begin{split}
     N^m_q(n, t) \leq\ & N^m_q(n - 1, t) + |M_{t - 1}(\vec{x}) \cap (M_t(\vec{x'}_{[2, n]}) \oplus M_{t - 1}(\vec{x'}))| \\& + |(M_t(\vec{x}_{[2, n]}) \oplus M_{t - 1}(\vec{x})) \cap M_{t - 1}(\vec{x'})| + (q - 1) N^m_q(n, t - 1).
\end{split}
\end{equation*}
But $|M_{t - 1}(\vec{x}) \cap (M_t(\vec{x'}_{[2, n]}) \oplus M_{t - 1}(\vec{x'}))| \leq |M_{t - 1}(\vec{x})| = M_q(n, t - 1)$ and $|(M_t(\vec{x}_{[2, n]}) \oplus M_{t - 1}(\vec{x})) \cap M_{t - 1}(\vec{x'})| \leq |M_{t - 1}(\vec{x'})| = M_q(n, t - 1)$. Hence, we get
\begin{equation}
\label{eq:singlerecursionubB1}
    N^m_q(n, t) \leq N^m_q(n - 1, t) + (q - 1) N^m_q(n, t - 1) + 2M_q(n, t - 1).
\end{equation}

On the other hand, if $x_1 \neq x_1'$, then, from Equation (\ref{eq:singlerecursionx1!=x1'}), we can take the cardinality of both sides and simplify as follows
\begin{align*}
        |M_t(\vec{x}) \cap M_t(\vec{x'})| =\ & |(x_1 M_t(\vec{x}_{[2, n]}) \oplus x_1 M_{t - 1}(\vec{x})) \cap x_1 M_{t - 1}(\vec{x'})| + |x_1' M_{t - 1}(\vec{x}) \\ & \cap (x_1' M_t(\vec{x'}_{[2, n]}) \oplus x_1' M_{t - 1}(\vec{x'}))| + |\underset{\substack{a \in \mathbb{Z}_q \\a \neq x_1, x_1'}}\bigoplus [a M_{t - 1}(\vec{x}) \cap a M_{t - 1}(\vec{x'})]| \\ =\ & |(M_t(\vec{x}_{[2, n]}) \oplus M_{t - 1}(\vec{x})) \cap M_{t - 1}(\vec{x'})| + |M_{t - 1}(\vec{x})\\ & \cap (M_t(\vec{x'}_{[2, n]}) \oplus M_{t - 1}(\vec{x'}))| + (q - 2)|M_{t - 1}(\vec{x}) \cap M_{t - 1}(\vec{x'})|.
\end{align*}
Since $|M_{t - 1}(\vec{x}) \cap M_{t - 1}(\vec{x'})| \leq N^m_q(n, t - 1)$, the above equation can be simplified to get
\begin{equation*}
\begin{split}
    N^m_q(n, t) \leq\ & |(M_t(\vec{x}_{[2, n]}) \oplus M_{t - 1}(\vec{x})) \cap M_{t - 1}(\vec{x'})| + |M_{t - 1}(\vec{x}) \cap (M_t(\vec{x'}_{[2, n]}) \oplus M_{t - 1}(\vec{x'}))| \\ & + (q - 2) N^m_q(n, t - 1).
\end{split}
\end{equation*}
But just as before, $|(M_t(\vec{x}_{[2, n]}) \oplus M_{t - 1}(\vec{x})) \cap M_{t - 1}(\vec{x'})| \leq |M_{t - 1}(\vec{x'})| = M_q(n, t - 1)$ and $|M_{t - 1}(\vec{x}) \cap (M_t(\vec{x'}_{[2, n]}) \oplus M_{t - 1}(\vec{x'})) \leq |M_{t - 1}(\vec{x})| = M_q(n, t - 1)$. Hence, we get
\begin{equation}
    \label{eq:singlerecursionubB2}
        N^m_q(n, t) \leq (q - 2) N^m_q(n, t - 1) + 2M_q(n, t - 1).
\end{equation}
Equations (\ref{eq:singlerecursionubB1}) and (\ref{eq:singlerecursionubB2}) together complete the proof.
\end{proof}

\begin{remark}
    There is probably a lot of room for improvement in the bounds provided by Theorems~\ref{thm:singrecubB} and \ref{GeneralLBMultiset}.
    For example, when $q = 2$, $n = 4$ and $t = 4$, we obtain from Equation~\eqref{w1+w2} that $N_2^m(4, 4) \geq 184$, and by substituting the values $N_2^m(3, 4) = 324$ and $N_2^m(4, 3) = 142$ (which can be determined using a computer program) in Equation~\eqref{eq:singlerecursionubB1}, we obtain that $N_2^m(4, 4) \leq 1026$, while the actual value of $N_2^m(4, 4)$ is 640. 
\end{remark}

\section{Conclusion}
\label{sec:conc}
In this paper, we have explored how the multiset and the non-multiset models address some of the deficiencies in the influential Levenshtein's sequence reconstruction problem. We have completely determined the minimum number of channels needed in these models for determining the transmitted word unambiguously in the case of one insertion error for any $q\ge 2$ and $n\ge 1$. 
These results are also obtained when the words are of length one for any $q\ge 2 $ and $t\ge 1$. We have also completely classified all the extremal word pairs for the models when $t=1$. The techniques developed for the extremal pairs are shown to be useful also to determine codes $C\subseteq \mathbb{Z}_q^n$ which need only a small number of channels to distinguish the transmitted words.  
We have also determined some general bounds on the number of channels in the multiset model for $t\ge 1$. The treatment of the non-multiset case for $t\ge 2$ is more involving (for example, see the conference article \cite{pavanMaxLev} for $t=2$) and is postponed to our later articles. 

There is a wealth of future work in this area. To begin with, the general upper and lower bounds from Section~\ref{sec:generalbds} have some room for improvement. 
Moreover, it would be interesting to know if the techniques in Section~\ref{BoundAndExtremalPairs},  like those used to obtain \eqref{+1Bound} and \eqref{+0Bound}, could be generalized to larger $t$. At least, there seems to be a predictable pattern in the extremal word pairs for larger $t$ as seen in Table~\ref{tab:extremalwords}. Regarding Section~\ref{sec:N_q^m(C;n,1)}, further results would be welcomed that involve different subsets of $\mathbb{Z}_q^n$ and the number of channels required to determine the transmitted word with $t\ge 1$.  

\section*{Appendix
}

In this section, we prove Theorem~\ref{thm:doublerecupperbound}. For this purpose, we first need some lemmas.
 To improve the bound of Corollary~\ref{cor:x1=x1'xn=xn'}, we will study in more detail the result of Proposition~\ref{prop:x1=x1'xn=xn'}. Recalling the assumptions $x_1 = x'_1$ and $x_n = x'_n$, the multiset $M_1(\vec{x}) \cap M_1(\vec{x'})$ therein can be viewed as being equal to $x_1(\{M_1(\vec{x}_{[2, n - 1]}) \oplus \vec{x}_{[2, n]} \oplus \vec{x}_{[1, n - 1]}\} \cap \{M_1(\vec{x'}_{[2, n - 1]}) \oplus \vec{x'}_{[2, n]} \oplus \vec{x'}_{[1, n - 1]}\})x_n$. In the following lemmas, we discuss in detail how the words $\vec{x}_{[2, n]}$, $\vec{x}_{[1, n - 1]}$, $\vec{x'}_{[2, n]}$ and $\vec{x'}_{[1, n - 1]}$ can contribute to the intersection $M_1(\vec{x}) \cap M_1(\vec{x'})$. 

\begin{lemma}
\label{lem:lem1a2a3a4a1}
    Let $\vec{x} = a^k \phi b^\ell$ and $\vec{x'} = a^m \psi b^p$ with $\vec{x} \neq \vec{x'}$, where $k, \ell, m, p \geq 1$, $\phi$ and $\psi$ are words such that neither is their first letter $a$ nor is their last letter $b$, $|\phi| \geq 0$ and $|\psi| \geq 0$. Then $\vec{x}_{[2, n]} \in M_1(\vec{x'}_{[2, n - 1]}) \ominus M_1(\vec{x}_{[2, n - 1]})$ if and only if one of the following conditions is true:
    \begin{enumerate}
        \item $m = k - 1$, $\ell = p - 1$, $m > \ell$ and $\phi = \psi$, or
        \item $m = k$, $\ell = p - 1$ and $m(\phi, M_1(\psi)) > \ell$.
    \end{enumerate}
\end{lemma}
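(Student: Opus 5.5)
The plan is to reduce the lemma to a statement about multiplicities in single-deletion terms and then compare run structures. Write $\vec{y}=\vec{x}_{[2,n]}=a^{k-1}\phi b^{\ell}$, $\vec{z}=\vec{x}_{[2,n-1]}=a^{k-1}\phi b^{\ell-1}$ and $\vec{z'}=\vec{x'}_{[2,n-1]}=a^{m-1}\psi b^{p-1}$. I will take $a\neq b$ as intended by the hypotheses; if $a=b$ were allowed, the decomposition in the third paragraph would need separate care. By definition of $\ominus$, the condition $\vec{y}\in M_1(\vec{z'})\ominus M_1(\vec{z})$ means $m(\vec{y},M_1(\vec{z'}))>m(\vec{y},M_1(\vec{z}))$. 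So the whole task is to compute these two multiplicities.

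The first step is a general fact about one insertion. For words $\vec{u}$ and $\vec{v}$ with $|\vec{u}|=|\vec{v}|+1$, the multiplicity $m(\vec{u},M_1(\vec{v}))$ equals the number of positions $i$ such that deleting the $i$th symbol of $\vec{u}$ gives $\vec{v}$. Two deletions give the same word if and only if they lie in the same run of $\vec{u}$. Hence this multiplicity is either $0$ or the length of one particular run of $\vec{u}$. Applied to $\vec{y}$ and $\vec{z}$: $\vec{z}$ arises from $\vec{y}$ by deleting in the final $b$-run, which has length exactly $\ell$ because $\phi$ does not end in $b$, and $a\ne b$ covers the case $\phi=\varepsilon$. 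So $m(\vec{y},M_1(\vec{z}))=\ell$.

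The lemma therefore becomes: $\vec{z'}$ is obtained from $\vec{y}$ by deleting a symbol inside a run of length greater than $\ell$. The final run cannot be used, since deleting there gives $\vec{z}$, and $\vec{z}\neq\vec{z'}$ because $\vec{x}\ne\vec{x'}$ while they share their first and last letters. That leaves two cases.

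\emph{Deletion in the leading $a$-run.} This run has length $k-1>\ell\ge 1$, so it stays nonempty, and the result is $a^{k-2}\phi b^{\ell}$.

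\emph{Deletion inside $\phi$.} The run used has length at least $\ell+1\ge 2$, so the result is $a^{k-1}\phi' b^{\ell}$, where $\phi'$ is $\phi$ with one symbol removed. Because the run stays nonempty, $\phi'$ still does not start with $a$ nor end with $b$.

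In either case the result must equal $a^{m-1}\psi b^{p-1}$. Here I use that a word has a unique decomposition $a^{r}\chi b^{s}$ with $\chi$ neither starting with $a$ nor ending with $b$, where $r,s$ are the maximal lengths of the leading $a$-run and trailing $b$-run. This uniqueness, including the degenerate situations where $\phi$, $\psi$ or $\phi'$ is empty, is where I expect the only real care to be needed. It gives:

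\emph{Leading-run case:} $m=k-1$, $p=\ell+1$ and $\psi=\phi$. The run-length condition $k-1>\ell$ reads $m>\ell$, which is condition 1.

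\emph{Inside-$\phi$ case:} $m=k$, $p=\ell+1$ and $\psi=\phi'$. Here the number of deletion positions in $\phi$ producing $\psi$ equals the length of the run used, which must exceed $\ell$. This is exactly $m(\phi,M_1(\psi))>\ell$, condition 2.

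For the converse, each condition exhibits the required deletion from a run of length greater than $\ell$. Then $m(\vec{y},M_1(\vec{z'}))$ equals that run's length, which exceeds $\ell=m(\vec{y},M_1(\vec{z}))$, and so $\vec{y}\in M_1(\vec{z'})\ominus M_1(\vec{z})$.
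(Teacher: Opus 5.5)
Your argument is correct when $a\neq b$. Apart from being phrased through deletions and runs instead of the paper's list (i)--(v) of insertion types, it is the paper's proof: both compute $m(\vec{y},M_1(\vec{z}))=\ell$ and then keep only the ways of producing $\vec{y}$ from $\vec{z'}$ that occur more than $\ell$ times.

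The gap is the case $a=b$, which you set aside as not ``intended by the hypotheses''. It is intended. The statement only requires that $\phi,\psi$ neither begin with $a$ nor end with $b$, and the paper's proof says explicitly that $a$ and $b$ may be equal. The case is also needed: the lemma is applied with $a=x_1$, $b=x_n$ arbitrary in the proof of Theorem~\ref{thm:doublerecupperbound}, and with $a=b=0$ in Example~\ref{+0Conditions}. Your argument genuinely breaks there in two places. If $a=b$ and $\phi=\varepsilon$, then $\vec{x}=a^n$ and $\vec{y}=a^{n-1}$ is a single run, so $m(\vec{y},M_1(\vec{z}))=n-1=k+\ell-1$ rather than $\ell$. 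If $a=b$ and $\psi=\varepsilon$, then $\vec{z'}=a^{n-2}$ is a power of $a$, and for such words the decomposition $a^r\chi b^s$ you rely on is not unique.

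The repair is short and matches the paper's separate treatment. There are three sub-cases.

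\emph{$a=b$ with $\phi,\psi\neq\varepsilon$.} Your proof goes through unchanged. A nonempty middle word that neither begins nor ends with $a$ keeps the leading and trailing runs apart and determines $r$ and $s$.

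\emph{$\vec{x}=a^n$.} Then $\psi\neq\varepsilon$, since otherwise $\vec{x'}=\vec{x}$. So $\vec{z'}$ contains a letter other than $a$, and $\vec{y}=a^{n-1}\notin M_1(\vec{z'})$. Neither condition holds either: condition 1 forces $\psi=\phi=\varepsilon$, and condition 2 would need $\varepsilon\in M_1(\psi)$.

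\emph{$\vec{x'}=a^n$ and $\phi\neq\varepsilon$.} Then $\vec{z'}=a^{n-2}$. Deleting a symbol from any run of $\vec{y}$ of length at least $2$ leaves a letter other than $a$, so $m(\vec{y},M_1(\vec{z'}))\le 1\le\ell$. Again neither condition holds: condition 1 forces $\phi=\psi=\varepsilon$, and condition 2 fails because $m(\phi,M_1(\varepsilon))\le 1\le\ell$.

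With these cases added, your proof is complete.
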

\begin{proof}
    Let $\vec{x} = a^k \phi b^\ell$ and $\vec{x'} = a^m \psi b^p$ with $\vec{x} \neq \vec{x'}$, where $k, \ell, m, p \geq 1$, $\phi$ and $\psi$ are such that neither is their first letter $a$ nor is their last letter $b$, $|\phi| \geq 0$ and $|\psi| \geq 0$. The letters $a$ and $b$ can be equal. Let us assume first that neither $\vec{x}$ nor $\vec{x'}$ is of the form $a^n$. We know that $\vec{x}_{[2, n]} \in M_1(\vec{x'}_{[2, n - 1]}) \ominus M_1(\vec{x}_{[2, n - 1]})$ if and only if with a single insertion, the number of ways of obtaining $\vec{x}_{[2, n]} = a^{k - 1} \phi b^\ell$ from $\vec{x'}_{[2, n - 1]} = a^{m - 1} \psi b^{p - 1}$ is greater than the number of ways of obtaining $a^{k - 1} \phi b^\ell$ from $\vec{x}_{[2, n - 1]} = a^{k - 1} \phi b^{\ell - 1}$. But since the number of ways of obtaining $a^{k - 1} \phi b^\ell$ from $a^{k - 1} \phi b^{\ell - 1}$ using a single insertion is $\ell$, we have the following:
    \begin{equation}
    \label{eq:lem1a2a3a4a1}
    \begin{split}
        \vec{x}_{[2, n]} \in M_1(\vec{x'}_{[2, n - 1]}) \ominus M_1(\vec{x}_{[2, n - 1]})\text{ if and only if } m(a^{k - 1} \phi b^\ell, M_1(a^{m - 1} \psi b^{p - 1})) > \ell.
    \end{split}
    \end{equation}
    We can obtain $a^{k - 1} \phi b^\ell$ from $a^{m - 1} \psi b^{p - 1}$ by a single insertion in the following possible ways:
\begin{enumerate}[(i)]
    \item By adding a letter $a$ before $\psi$: We then must have $m = k - 1$, $\ell = p - 1$ and $\phi = \psi$. There are $m$ ways to do this insertion.
    \item By adding a letter $b$ after $\psi$: We then must have $m = k$, $\ell = p$ and $\phi = \psi$. There are $\ell$ ways to do this insertion.
    \item By adding a letter $c \in \mathbb{Z}_q, c \neq a$ before $\psi$: We then must have $k \leq m $, $\ell = p - 1$ and $\phi = c a^i \psi$ for some $0 \leq i \leq m - 1$. There is only one way to do this insertion.
    \item By adding a letter $c \in \mathbb{Z}_q, c \neq b$ after $\psi$: We then must have $m = k$, $\ell \leq p - 1$ and $\phi = \psi b^i c$ for some $0 \leq i \leq p - 1$. There is only one way to do this insertion.
    \item By adding a letter in $\psi$ (but not as its first or last letter): We then must have $m = k$, $\ell = p - 1$ and $\phi \in M_1(\psi)$. There are $m(\phi, M_1(\psi))$ ways to do this insertion.
\end{enumerate}
Equation (\ref{eq:lem1a2a3a4a1}) requires that the number of ways in which the insertion can be done is greater than $\ell$. Hence, the only possible ways to obtain $a^{k - 1} \phi b^\ell$ from $a^{m - 1} \psi b^{p - 1}$ by a single insertion are represented by (i) and (v), that is, $\vec{x}_{[2, n]} \in M_1(\vec{x'}_{[2, n - 1]}) \ominus M_1(\vec{x}_{[2, n - 1]})$ if and only if either of (i) or (v) holds. This proves the lemma in this case.

Next, suppose that $\vec{x} = a^n$ and $\vec{x'} = a^m \psi a^p$, that is, the case when $\phi = \varepsilon$ and $a = b$. As before, $\vec{x}_{[2, n]} \in M_1(\vec{x'}_{[2, n - 1]}) \ominus M_1(\vec{x}_{[1, n - 1]})$ if and only if $m(a^{n - 1}, M_1(a^{m - 1} \psi a^{p - 1})) > m(a^{n - 1}, M_1(a^{n - 2})) = n - 1$. The word $a^{n - 1}$ can be obtained from $a^{m - 1}\psi a^{p - 1}$ using a single insertion if and only if $|\psi| = 0$. But this implies that $\vec{x} = \vec{x'}$, which is a contradiction. 

Similarly, if $\vec{x} = a^k \phi a^\ell$ and $\vec{x'} = a^n$, that is, if $\psi = \varepsilon$ and $a = b$, then $\vec{x}_{[2, n]} \in M_1(\vec{x'}_{[2, n - 1]}) \ominus M_1(\vec{x}_{[1, n - 1]})$ if and only if $m(a^{k - 1}\phi a^\ell, M_1(a^{n - 2})) > m(a^{k - 1}\phi a^\ell, M_1(a^{k - 1}\phi a^{\ell - 1})) = \ell$. The word $a^{k - 1}\phi a^\ell$ can be obtained from $a^{n - 2}$ using a single insertion if and only if $|\phi| \leq 1$. If $|\phi| = 0$, then as before $\vec{x} = \vec{x'}$, a contradiction. Let $\phi = c \neq a$. Then, $a^{k - 1} c a^\ell$ can be obtained from $a^{n - 2}$ only by inserting $c$ appropriately. Since this insertion can be done in only one way, and $\ell \geq 1$, this is again a contradiction. This completes the proof of the lemma.
\end{proof}

\begin{lemma}
\label{lem:lem1a2a3a4a2}
    Let $\vec{x} = a^k \phi b^\ell$ and $\vec{x'} = a^m \psi b^p$ with $\vec{x} \neq \vec{x'}$, where $k, \ell, m, p \geq 1$, $\phi$ and $\psi$ are such that neither is their first letter $a$ nor is their last letter $b$, $|\phi| \geq 0$ and $|\psi| \geq 0$. Then $\vec{x}_{[1, n - 1]} \in M_1(\vec{x'}_{[2, n - 1]}) \ominus M_1(\vec{x}_{[2, n - 1]})$ if and only if one of the following conditions is true:
    \begin{enumerate}
        \item $m - 1 = k$, $\ell - 1 = p$, $p > k$ and $\phi = \psi$, or
        \item $m - 1 = k$, $\ell = p$ and $m(\phi, M_1(\psi)) > k$.
    \end{enumerate}
\end{lemma}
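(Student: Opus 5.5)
The plan mirrors the proof of Lemma~\ref{lem:lem1a2a3a4a1}, either by direct counting or, more economically, by reduction to that lemma through reversal. For the reduction, put $\overleftarrow{\vec{x}} = b^\ell \overleftarrow{\phi} a^k$ and $\overleftarrow{\vec{x'}} = b^p \overleftarrow{\psi} a^m$. In these reversed words the roles of $a$ and $b$ are interchanged and the block lengths are swapped as $(k,\ell)\mapsto(\ell,k)$ and $(m,p)\mapsto(p,m)$. Also, $\overleftarrow{\phi}$ and $\overleftarrow{\psi}$ do not begin with $b$ and do not end with $a$, so the hypotheses of Lemma~\ref{lem:lem1a2a3a4a1} hold for the reversed pair with $a$ and $b$ swapped. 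The reverse of $\vec{x}_{[1,n-1]}$ is $(\overleftarrow{\vec{x}})_{[2,n]}$, and the reverse of $\vec{x}_{[2,n-1]}$ is $(\overleftarrow{\vec{x}})_{[2,n-1]}$. Corollary~\ref{cor:reverse} therefore gives
$$\vec{x}_{[1,n-1]} \in M_1(\vec{x'}_{[2,n-1]}) \ominus M_1(\vec{x}_{[2,n-1]}) \iff (\overleftarrow{\vec{x}})_{[2,n]} \in M_1((\overleftarrow{\vec{x'}})_{[2,n-1]}) \ominus M_1((\overleftarrow{\vec{x}})_{[2,n-1]}).$$

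Next I apply Lemma~\ref{lem:lem1a2a3a4a1} to the reversed pair, with parameters $(k,\ell,m,p)$ replaced by $(\ell,k,p,m)$. Its condition~1 becomes $p=\ell-1$, $k=m-1$, $p>k$ and $\overleftarrow{\phi}=\overleftarrow{\psi}$, which is exactly condition~1 here. Its condition~2 becomes $p=\ell$, $k=m-1$ and $m(\overleftarrow{\phi},M_1(\overleftarrow{\psi}))>k$. By Corollary~\ref{cor:reverse} this multiplicity equals $m(\phi,M_1(\psi))$, which yields condition~2 here.

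As a cross-check I would also redo the counting directly, in the style of the previous proof. The number of ways to obtain $a^k\phi b^{\ell-1}$ from $\vec{x}_{[2,n-1]}=a^{k-1}\phi b^{\ell-1}$ is $k$. The target multiplicity $m(a^k\phi b^{\ell-1}, M_1(a^{m-1}\psi b^{p-1}))$ then splits into the same five insertion types (i)--(v) as before. Only the insertion of $b$ after $\psi$ (multiplicity $p-1$, requiring $m-1=k$, $\ell-1=p$ and $\phi=\psi$) and the insertion inside $\psi$ (requiring $m-1=k$, $\ell=p$) can exceed $k$. The remaining types contribute either a single way or exactly $k$ ways, which is not enough.

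The main obstacle is bookkeeping rather than any new idea. One must check that the block-length conventions transform correctly under reversal. One must also check that the degenerate cases $\vec{x}=a^n$ or $\vec{x'}=a^n$ (with $a=b$) are covered; they are, since the reversed pair falls into the degenerate cases already handled in the proof of Lemma~\ref{lem:lem1a2a3a4a1}.
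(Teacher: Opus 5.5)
Your proposal is correct and is essentially the paper's proof: reverse both words, invoke Corollary~\ref{cor:reverse}, and apply Lemma~\ref{lem:lem1a2a3a4a1} with $(k,\ell,m,p)\mapsto(\ell,k,p,m)$ and the roles of $a,b$ swapped. You also spell out the parameter substitution, which the paper leaves implicit. One slip in your optional direct-counting cross-check: inserting $b$ into the final run $b^{p-1}$ of $a^{m-1}\psi b^{p-1}$ can be done in $p$ ways, not $p-1$, and it is this count that gives the threshold $p>k$ in condition~1.
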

\begin{proof}
    Let $\vec{x} = a^k \phi b^\ell$ and $\vec{x'} = a^m \psi b^p$ with $\vec{x} \neq \vec{x'}$, where $k, \ell, m, p \geq 1$, $\phi$ and $\psi$ are such that neither is their first letter $a$ nor is their last letter $b$, $|\phi| \geq 0$ and $|\psi| \geq 0$. The letters $a$ and $b$ can be equal. 
    By Corollary~\ref{cor:reverse}, we know that $\vec{x}_{[1, n - 1]} \in M_1(\vec{x'}_{[2, n - 1]}) \ominus M_1(\vec{x}_{[2, n - 1]})$ if and only if $\overleftarrow{\vec{x}_{[1, n - 1]}} \in M_1(\overleftarrow{\vec{x'}_{[2, n - 1]}}) \ominus M_1(\overleftarrow{\vec{x}_{[2, n - 1]}})$. But  $\overleftarrow{\vec{x}_{[1, n - 1]}} =\overleftarrow{\vec{x}}_{[2,n]} = b^{\ell-1}\overleftarrow{\phi} a^k, \overleftarrow{\vec{x'}_{[2, n - 1]}} = \overleftarrow{\vec{x'}}_{[2, n - 1]} = b^{p-1}\overleftarrow{\psi} a^{m-1}$ and $\overleftarrow{\vec{x}_{[2, n - 1]}} = \overleftarrow{\vec{x}}_{[2, n - 1]} =b^{\ell-1}\overleftarrow{\phi} a^{k-1}$. Now, the proof is completed by applying Lemma~\ref{lem:lem1a2a3a4a1}.
\end{proof}

\begin{lemma}
\label{lem:lem1a2a3a4a3}
    Let $\vec{x} = a^k \phi b^\ell$ and $\vec{x'} = a^m \psi b^p$ with $\vec{x} \neq \vec{x'}$, where $k, \ell, m, p \geq 1$, $\phi$ and $\psi$ are such that neither is their first letter $a$ nor is their last letter $b$, $|\phi| \geq 0$ and $|\psi| \geq 0$. Then $\vec{x'}_{[2, n]} \in M_1(\vec{x}_{[2, n - 1]}) \ominus M_1(\vec{x'}_{[2, n - 1]})$ if and only if one of the following conditions is true:
    \begin{enumerate}
        \item $k = m - 1$, $p = \ell - 1$, $k > p$ and $\psi = \phi$, or
        \item $k = m$, $p = \ell - 1$ and $m(\psi, M_1(\phi)) > p$.
    \end{enumerate}
\end{lemma}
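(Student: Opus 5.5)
This is Lemma~\ref{lem:lem1a2a3a4a1} with the roles of $\vec{x}$ and $\vec{x'}$ interchanged, so I would derive it directly from that lemma by swapping the two words. Set $\vec{z} = \vec{x'} = a^m \psi b^p$ and $\vec{z'} = \vec{x} = a^k \phi b^\ell$. These satisfy every hypothesis of Lemma~\ref{lem:lem1a2a3a4a1}: $\vec{z} \neq \vec{z'}$, all four exponents are at least one, and neither $\psi$ nor $\phi$ begins with $a$ or ends with $b$. The lemma allows $a=b$ and empty $\phi,\psi$, so no extra cases arise.

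Now apply Lemma~\ref{lem:lem1a2a3a4a1} to the pair $(\vec{z},\vec{z'})$, substituting $k\mapsto m$, $\ell\mapsto p$, $m\mapsto k$, $p\mapsto \ell$, $\phi\mapsto\psi$, $\psi\mapsto\phi$. It says that $\vec{z}_{[2,n]} \in M_1(\vec{z'}_{[2,n-1]}) \ominus M_1(\vec{z}_{[2,n-1]})$ if and only if one of the following holds:
\begin{enumerate}
\item $k = m-1$, $p = \ell-1$, $k > p$ and $\psi=\phi$;
\item $k = m$, $p = \ell - 1$ and $m(\psi, M_1(\phi)) > p$.
\end{enumerate}
Since $\vec{z}_{[2,n]} = \vec{x'}_{[2,n]}$, $\vec{z'}_{[2,n-1]} = \vec{x}_{[2,n-1]}$ and $\vec{z}_{[2,n-1]} = \vec{x'}_{[2,n-1]}$, this membership is exactly the one in the statement, and the two conditions are exactly the listed ones.

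The only step needing care is the substitution of exponents. The first condition of Lemma~\ref{lem:lem1a2a3a4a1}, ``$m=k-1$, $\ell=p-1$, $m>\ell$'', becomes ``$k=m-1$, $p=\ell-1$, $k>p$''. Its second condition becomes ``$k=m$, $p=\ell-1$''. Both agree with the statement. As a sanity check, one could instead rerun the case analysis (i)--(v) from the proof of Lemma~\ref{lem:lem1a2a3a4a1}, counting insertions of $a^{m-1}\psi b^{p}$ from $a^{k-1}\phi b^{\ell-1}$ against the $p$ ways from $a^{m-1}\psi b^{p-1}$. Only the ``add $a$ in front'' case (multiplicity $k$) and the ``insert inside $\phi$'' case can exceed $p$, which gives the same two conditions.
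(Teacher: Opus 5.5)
Your proposal is correct and follows the paper's own proof, which simply observes that this lemma is Lemma~\ref{lem:lem1a2a3a4a1} with the roles of $\vec{x}$ and $\vec{x'}$ interchanged; your explicit substitution $k\leftrightarrow m$, $\ell\leftrightarrow p$, $\phi\leftrightarrow\psi$ is carried out correctly and gives exactly the two stated conditions. One small caveat concerns only your optional sanity check: when $a=b$ and $\psi=\varepsilon$, the multiplicity of $a^{m-1}\psi b^{p}$ in $M_1(a^{m-1}\psi b^{p-1})$ is $n-1$ rather than $p$. Your main argument is unaffected, because Lemma~\ref{lem:lem1a2a3a4a1} already treats that degenerate case separately.
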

\begin{proof}
    This Lemma is symmetric to Lemma~\ref{lem:lem1a2a3a4a1} and can be proved by appropriately replacing the indices related to $\vec{x}$ with the ones corresponding to $\vec{x'}$.
\end{proof}

\begin{lemma}
\label{lem:lem1a2a3a4a4}
    Let $\vec{x} = a^k \phi b^\ell$ and $\vec{x'} = a^m \psi b^p$ with $\vec{x} \neq \vec{x'}$, where $k, \ell, m, p \geq 1$, $\phi$ and $\psi$ are such that neither is their first letter $a$ nor is their last letter $b$, $|\phi| \geq 0$ and $|\psi| \geq 0$. Then $\vec{x'}_{[1, n - 1]} \in M_1(\vec{x}_{[2, n - 1]}) \ominus M_1(\vec{x'}_{[2, n - 1]})$ if and only if one of the following conditions is true:
    \begin{enumerate}
        \item $k - 1 = m$, $p - 1 = \ell$, $\ell > m$ and $\psi = \phi$, or
        \item $k - 1 = m$, $p = \ell$ and $m(\psi, M_1(\phi)) > m$.
    \end{enumerate}
\end{lemma}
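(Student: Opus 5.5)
The plan is to deduce Lemma~\ref{lem:lem1a2a3a4a4} from Lemma~\ref{lem:lem1a2a3a4a3} by reversal, in the same way that Lemma~\ref{lem:lem1a2a3a4a2} was deduced from Lemma~\ref{lem:lem1a2a3a4a1}. (One could instead swap the roles of $\vec{x}$ and $\vec{x'}$ in Lemma~\ref{lem:lem1a2a3a4a2}; the reversal route parallels the paper's own argument.)

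By Corollary~\ref{cor:reverse}, multiplicities are preserved under reversal. Hence $\vec{x'}_{[1,n-1]} \in M_1(\vec{x}_{[2,n-1]}) \ominus M_1(\vec{x'}_{[2,n-1]})$ holds if and only if
\[
\overleftarrow{\vec{x'}_{[1,n-1]}} \in M_1\bigl(\overleftarrow{\vec{x}_{[2,n-1]}}\bigr) \ominus M_1\bigl(\overleftarrow{\vec{x'}_{[2,n-1]}}\bigr).
\]

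Next I would write out the reversed words explicitly. We have $\overleftarrow{\vec{x}} = b^\ell \overleftarrow{\phi} a^k$ and $\overleftarrow{\vec{x'}} = b^p \overleftarrow{\psi} a^m$. The words $\overleftarrow{\phi}$ and $\overleftarrow{\psi}$ neither begin with $b$ nor end with $a$, so the pair $(\overleftarrow{\vec{x}}, \overleftarrow{\vec{x'}})$ satisfies the hypotheses of Lemma~\ref{lem:lem1a2a3a4a3}, with the following renaming:
\begin{itemize}
\item the roles of $a$ and $b$ are swapped;
\item the exponents become $(k,\phi,\ell) \mapsto (\ell,\overleftarrow{\phi},k)$ and $(m,\psi,p) \mapsto (p,\overleftarrow{\psi},m)$.
\end{itemize}
The relevant words are then $\overleftarrow{\vec{x'}_{[1,n-1]}} = \overleftarrow{\vec{x'}}_{[2,n]}$, $\overleftarrow{\vec{x}_{[2,n-1]}} = \overleftarrow{\vec{x}}_{[2,n-1]}$ and $\overleftarrow{\vec{x'}_{[2,n-1]}} = \overleftarrow{\vec{x'}}_{[2,n-1]}$. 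This is exactly the membership statement of Lemma~\ref{lem:lem1a2a3a4a3} applied to the reversed pair.

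Applying Lemma~\ref{lem:lem1a2a3a4a3} with the renaming $k\to\ell$, $m\to p$, $\ell\to k$, $p\to m$, its first condition becomes
\[
\ell = p-1,\quad m = k-1,\quad \ell > m,\quad \overleftarrow{\psi} = \overleftarrow{\phi}.
\]
This is condition~1 of Lemma~\ref{lem:lem1a2a3a4a4}. Its second condition becomes
\[
\ell = p,\quad m = k-1,\quad m\bigl(\overleftarrow{\psi}, M_1(\overleftarrow{\phi})\bigr) > m.
\]
By Corollary~\ref{cor:reverse}, $m(\overleftarrow{\psi}, M_1(\overleftarrow{\phi})) = m(\psi, M_1(\phi))$, which gives condition~2.

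The only real care needed is the bookkeeping of the renaming, in particular checking that the roles of ``first letter not $a$'' and ``last letter not $b$'' swap correctly under reversal. The degenerate cases $a=b$ with $\phi$ or $\psi$ empty are already covered inside Lemma~\ref{lem:lem1a2a3a4a3}, so nothing new arises.
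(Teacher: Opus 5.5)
Your proof is correct. Reversal preserves the hypotheses, with $a$ and $b$ interchanged, $(k,\phi,\ell)\mapsto(\ell,\overleftarrow{\phi},k)$ and $(m,\psi,p)\mapsto(p,\overleftarrow{\psi},m)$. The three reversed words are exactly the ones in Lemma~\ref{lem:lem1a2a3a4a3}, and its two conditions translate into conditions 1) and 2) of Lemma~\ref{lem:lem1a2a3a4a4}. Here you correctly use $m(\overleftarrow{\psi},M_1(\overleftarrow{\phi}))=m(\psi,M_1(\phi))$ from Corollary~\ref{cor:reverse}.

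The paper, however, takes the alternative you mention only in passing. It obtains Lemma~\ref{lem:lem1a2a3a4a4} from Lemma~\ref{lem:lem1a2a3a4a2} by interchanging the roles of $\vec{x}$ and $\vec{x'}$, which is the pure relabeling $k\leftrightarrow m$, $\ell\leftrightarrow p$, $\phi\leftrightarrow\psi$. That route is slightly more economical. The lemma's hypotheses are symmetric in the two words, so it needs neither Corollary~\ref{cor:reverse} nor a check of the boundary letters of $\overleftarrow{\phi}$ and $\overleftarrow{\psi}$. Your route costs those two checks but writes out the bookkeeping that the paper leaves implicit.

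The two routes must agree, because reversal and the swap $\vec{x}\leftrightarrow\vec{x'}$ commute. Lemma~\ref{lem:lem1a2a3a4a2} is the reversal of Lemma~\ref{lem:lem1a2a3a4a1}, and Lemma~\ref{lem:lem1a2a3a4a3} is its swap. Composing the remaining symmetry with either one gives Lemma~\ref{lem:lem1a2a3a4a4}, and both arguments rest only on Lemma~\ref{lem:lem1a2a3a4a1}.
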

\begin{proof}
    This Lemma is symmetric to Lemma~\ref{lem:lem1a2a3a4a2} and can be proved by appropriately replacing the indices related to $\vec{x}$ with the ones corresponding to $\vec{x'}$.
\end{proof}

Recall that when $x_1 = x_1'$ and $x_n = x_n'$, $\vec{x}x_n$ is in $M_1(\vec{x}) \cap M_1(\vec{x'})$ if and only if $\vec{x}_{[2, n]} \in M_1(\vec{x'}_{[2, n - 1]}) \ominus M_1(\vec{x}_{[2, n - 1]})$ or if $\vec{x}_{[2, n]} = \vec{x'}_{[1, n - 1]}$. Similarly, we know that $x_1\vec{x}$ is in $M_1(\vec{x}) \cap M_1(\vec{x'})$ if and only if $\vec{x}_{[1, n - 1]} \in M_1(\vec{x'}_{[2, n - 1]}) \ominus M_1(\vec{x}_{[2, n - 1]})$ or if $\vec{x}_{[1, n - 1]} = \vec{x'}_{[2, n]}$, $\vec{x'}x_n$ is in $M_1(\vec{x}) \cap M_1(\vec{x'})$ if and only if $\vec{x'}_{[2, n]} \in M_1(\vec{x}_{[2, n - 1]}) \ominus M_1(\vec{x'}_{[2, n - 1]})$ or if $\vec{x'}_{[2, n]} = \vec{x}_{[1, n - 1]}$ and $x_1\vec{x'}$ is in $M_1(\vec{x}) \cap M_1(\vec{x'})$ if and only if $\vec{x'}_{[1, n - 1]} \in M_1(\vec{x}_{[2, n - 1]}) \ominus M_1(\vec{x'}_{[2, n - 1]})$ or if $\vec{x'}_{[1, n - 1]} = \vec{x}_{[2, n]}$. The next couple of results determine the properties of $\vec{x}$ and $\vec{x'}$ if $\vec{x}x_n$, $x_1\vec{x}$, $\vec{x'}x_n$ or $x_1\vec{x'}$ are in $M_1(\vec{x}) \cap M_1(\vec{x'})$.

\begin{lemma}
\label{lem:onlyoneoffirstconditions}
Let $\vec{x}, \vec{x'} \in \mathbb{Z}_q^n$ be such that $\vec{x} \neq \vec{x'}$ with $x_1 = x_1', x_n = x_n'$. At most one of the following conditions can be true simultaneously
\begin{enumerate}
    \item $\vec{x}_{[2, n]} \in M_1(\vec{x'}_{[2, n - 1]}) \ominus M_1(\vec{x}_{[2, n - 1]})$,
    \item $\vec{x}_{[1, n - 1]} \in M_1(\vec{x'}_{[2, n - 1]}) \ominus M_1(\vec{x}_{[2, n - 1]})$,
    \item $\vec{x'}_{[2, n]} \in M_1(\vec{x}_{[2, n - 1]}) \ominus M_1(\vec{x'}_{[2, n - 1]})$,
    \item $\vec{x'}_{[1, n - 1]} \in M_1(\vec{x}_{[2, n - 1]}) \ominus M_1(\vec{x'}_{[2, n - 1]})$.
\end{enumerate}
\end{lemma}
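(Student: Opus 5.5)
We will use the parametrization of Lemmas~\ref{lem:lem1a2a3a4a1}--\ref{lem:lem1a2a3a4a4}. Since $x_1=x_1'$ and $x_n=x_n'$, set $a=x_1$ and $b=x_n$. We write $\vec{x}=a^k\phi b^\ell$ and $\vec{x'}=a^m\psi b^p$ with $k,\ell,m,p\ge 1$, where $\phi,\psi$ neither begin with $a$ nor end with $b$. The degenerate cases $\vec{x}=a^n$ or $\vec{x'}=a^n$ (with $a=b$) were already shown inside the proof of Lemma~\ref{lem:lem1a2a3a4a1} to make the relevant conditions impossible. The analogous statements for the other three lemmas follow by symmetry and reversal (Corollary~\ref{cor:reverse}). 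So we may assume this block decomposition is well defined, and the four lemmas translate conditions 1)--4) into explicit alternatives:
\begin{itemize}
\item condition 1) holds iff (1a) $m=k-1$, $\ell=p-1$, $m>\ell$, $\phi=\psi$, or (1b) $m=k$, $\ell=p-1$, $m(\phi,M_1(\psi))>\ell$;
\item condition 2) holds iff (2a) $m=k+1$, $\ell=p+1$, $p>k$, $\phi=\psi$, or (2b) $m=k+1$, $\ell=p$, $m(\phi,M_1(\psi))>k$;
\item condition 3) holds iff (3a) $k=m-1$, $p=\ell-1$, $k>p$, $\psi=\phi$, or (3b) $k=m$, $p=\ell-1$, $m(\psi,M_1(\phi))>p$;
\item condition 4) holds iff (4a) $k=m+1$, $p=\ell+1$, $\ell>m$, $\psi=\phi$, or (4b) $k=m+1$, $p=\ell$, $m(\psi,M_1(\phi))>m$.
\end{itemize}
The plan is then to check all $\binom{4}{2}=6$ pairs and show no two conditions can hold together.

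For most pairs the exponent relations alone give a contradiction. Conditions 1) and 2) need $m\in\{k-1,k\}$ and $m=k+1$ respectively, which is impossible. The same holds for 3) and 4), with the roles of $k$ and $m$ swapped. Conditions 1) and 4) need $\ell=p-1$ versus $p\in\{\ell+1,\ell\}$, and $m\in\{k-1,k\}$ versus $k=m+1$. The only survivor is (1a) with (4a), whose requirements $m>\ell$ and $\ell>m$ contradict each other. The pair 2) and 3) is symmetric: the only survivor is (2a) with (3a), killed by $p>k$ versus $k>p$.

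Two pairs need more than exponent bookkeeping. For 1) and 3), the relations $m=k-1$ (from 1a) and $k=m-1$ (from 3a) are incompatible, so we reduce to (1b) with (3b). This gives $m=k$ and $\ell=p-1$, while (3b) demands $p=\ell-1$, a contradiction. Cases mixing (1a) with (3b), or (1b) with (3a), fail on $m=k\pm1$ versus $m=k$. The pair 2) and 4) works the same way via $m=k+1$ versus $k=m+1$.

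The main thing to watch, I expect, is not the arithmetic but the hidden length constraint. When $m=k$ and $\ell=p$ we get $|\phi|=|\psi|$, so $\phi\in M_1(\psi)$ is impossible by length. In each (b) case the length of $\phi$ versus $\psi$ is forced by the exponent differences, because $|\vec{x}|=|\vec{x'}|$. This gives an additional consistency check, ensuring the case list above is exhaustive and that no combination slips through.
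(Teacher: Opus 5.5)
Your proposal is correct and follows the paper's route. Like the paper, you translate each condition through Lemmas~\ref{lem:lem1a2a3a4a1}--\ref{lem:lem1a2a3a4a4} and observe that no two sets of alternatives are compatible; you check the six pairs explicitly where the paper only asserts it. The pairs 1)--3) and 2)--4) are in fact immediate from the exponents alone ($\ell=p-1$ versus $p=\ell-1$, and $m=k+1$ versus $k=m+1$), so your closing remark about lengths is not needed.
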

\begin{proof}
    Let $\vec{x} = a^k \phi b^\ell$ and $\vec{x'} = a^m \psi b^p$, $\vec{x} \neq \vec{x'}$, where $k, \ell, m, p \geq 1$, $\phi$ and $\psi$ are such that neither is their first letter $a$ nor is their last letter $b$, $|\phi| \geq 0$ and $|\psi| \geq 0$. The letters $a$ and $b$ can be equal. Then, since the conditions of any two lemmas among Lemmas~\ref{lem:lem1a2a3a4a1}, \ref{lem:lem1a2a3a4a2}, \ref{lem:lem1a2a3a4a3} and \ref{lem:lem1a2a3a4a4} contradict each other, it is clear that only at most one of the conditions of this lemma can be true simultaneously.
\end{proof} 

\begin{lemma}
\label{lem:onlytwoofsecondconditions}
    Let $\vec{x}, \vec{x'} \in \mathbb{Z}_q^n$ be such that $\vec{x} \neq \vec{x'}$ with $x_1 = x_1', x_n = x_n'$. Only one of the following conditions can be true simultaneously 
    \begin{itemize}
        \item $\vec{x}_{[2, n]} = \vec{x'}_{[1, n - 1]}$ or 
        \item $\vec{x}_{[1, n - 1]} = \vec{x'}_{[2, n]}.$
    \end{itemize}
\end{lemma}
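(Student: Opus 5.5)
Suppose, for a contradiction, that both conditions hold, that is, $\vec{x}_{[2, n]} = \vec{x'}_{[1, n - 1]}$ and $\vec{x}_{[1, n - 1]} = \vec{x'}_{[2, n]}$. The plan is to turn these two word identities into coordinatewise relations and show that they force $\vec{x}$ to be periodic with period two. Together with the boundary assumptions $x_1 = x_1'$ and $x_n = x_n'$, this will give $\vec{x} = \vec{x'}$, contradicting $\vec{x} \neq \vec{x'}$. (We read ``only one'' as ``at most one'', which is what the subsequent counting needs.)

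Comparing the words coordinate by coordinate gives two families of relations. The first identity gives $x_{i+1} = x_i'$ for all $i \in [1, n-1]$. The second identity gives $x_i = x_{i+1}'$ for all $i \in [1, n-1]$.

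Combining the two families, for $i \in [1, n-2]$ we get $x_{i+2} = x_{i+1}' = x_i$. Hence $\vec{x}$ alternates between the two symbols $x_1$ and $x_2$. Similarly, $x'_{i+2} = x_{i+1} = x'_i$, so $\vec{x'}$ alternates between $x_1'$ and $x_2'$.

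Now use the boundary assumptions. From $x_1' = x_1$ and $x_1' = x_2$ (the case $i=1$ of the first family) we get $x_1 = x_2$. So $\vec{x}$ is constant, $\vec{x} = x_1^n$. Then $x_i' = x_{i+1} = x_1$ for every $i \le n-1$, and $x_n' = x_n = x_1$. Thus $\vec{x'} = x_1^n = \vec{x}$, which is the desired contradiction.

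The only point requiring care is the small case $n = 2$, where the index range $[1, n-2]$ used in the periodicity step is empty. Here the argument needs no periodicity: $x_2 = x_1' = x_1$ and $x_2' = x_2 = x_1$ directly give $\vec{x} = \vec{x'}$. For $n \ge 3$, as assumed in Theorem~\ref{thm:doublerecupperbound}, the argument above applies as written. I do not expect any real obstacle beyond keeping the indices consistent.
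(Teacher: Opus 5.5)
Your proof is correct and follows the paper's route. You assume both identities, read them coordinatewise as $x_{i+1}=x_i'$ and $x_i=x_{i+1}'$, deduce period-two behaviour, and use $x_1=x_1'$ to force both words to equal $x_1^n$. The only cosmetic difference is that you recover $\vec{x'}$ from the first family together with $x_n=x_n'$, while the paper runs the symmetric periodicity argument for $\vec{x'}$ directly.
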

\begin{proof}
    Suppose that $\vec{x}_{[2, n]} = \vec{x'}_{[1, n - 1]}$ and $\vec{x}_{[1, n - 1]} = \vec{x'}_{[2, n]}$. This means that $x_i = x_{i - 1}'$, for $2 \leq i \leq n$ and $x_j = x_{j + 1}'$, for $1 \leq j \leq n - 1$. Combining these two sets of equalities, we have $x_i = x_{i - 1}' = x_{i - 2}$, for $3 \leq i \leq n$ and $x_{j + 1}' = x_j = x_{j - 1}'$, for $2 \leq j \leq n - 1$. Moreover, since $x_2 = x_1' = x_1$, we get $x_i = x_1$, for $1 \leq i \leq n$ and $\vec{x} = x_1^n$. Similarly, since $x_2' = x_1 = x_1'$, we get $x_j' = x_1'$, for $1 \leq j \leq n$ and $\vec{x'} = x_1'^n$. This gives us that $\vec{x} = \vec{x'}$, which is a contradiction.
\end{proof}

Now, we can finally improve the result in Corollary~\ref{cor:x1=x1'xn=xn'}. The bound in Theorem~\ref{thm:doublerecupperbound} is optimal in the sense that it can be attained (see  Theorem~\ref{thm:lowerboundt=1odd}).

\begin{proof} (Proof of Theorem~\ref{thm:doublerecupperbound}):
    Observe that from Proposition~\ref{prop:x1=x1'xn=xn'}, whether $\vec{x}_{[2, n]}, \vec{x}_{[1, n - 1]}, \vec{x'}_{[2, n]}$ and $\vec{x'}_{[1, n - 1]}$ are in $M_1(\vec{x}) \cap M_1(\vec{x'})$, respectively depend on whether the following conditions hold:

\begin{enumerate}
    \item \begin{enumerate}
            \item $\vec{x}_{[2, n]} \in M_1(\vec{x'}_{[2, n - 1]}) \ominus M_1(\vec{x}_{[2, n - 1]})$, or
            \item $\vec{x}_{[2, n]} = \vec{x'}_{[1, n - 1]}.$
        \end{enumerate}
    \item \begin{enumerate}
            \item $\vec{x}_{[1, n - 1]} \in M_1(\vec{x'}_{[2, n - 1]}) \ominus M_1(\vec{x}_{[2, n - 1]})$, or
            \item $\vec{x}_{[1, n - 1]} = \vec{x'}_{[2, n]}.$
        \end{enumerate}
    \item \begin{enumerate}
            \item $\vec{x'}_{[2, n]} \in M_1(\vec{x}_{[2, n - 1]}) \ominus M_1(\vec{x'}_{[2, n - 1]})$, or
            \item $\vec{x'}_{[2, n]} = \vec{x}_{[1, n - 1]}.$
        \end{enumerate}
    \item \begin{enumerate}
            \item $\vec{x'}_{[1, n - 1]} \in M_1(\vec{x}_{[2, n - 1]}) \ominus M_1(\vec{x'}_{[2, n - 1]})$, or
            \item $\vec{x'}_{[1, n - 1]} = \vec{x}_{[2, n]}.$
        \end{enumerate}
\end{enumerate}
From Lemmas~\ref{lem:onlyoneoffirstconditions} and~\ref{lem:onlytwoofsecondconditions}, the only possible case in which $|M_1(\vec{x}) \cap M_1(\vec{x'})| > |M_1(\vec{x}_{[2, n - 1]}) \cap M_1(\vec{x'}_{[2, n - 1]})| + 1$ is if without loss of generality, the conditions 1)a), 2)b) and 3)b) are true. That is, if $\vec{x}_{[2, n]} \in M_1(\vec{x'}_{[2, n - 1]}) \ominus M_1(\vec{x}_{[2, n - 1]})$ and $\vec{x}_{[1, n - 1]} = \vec{x'}_{[2, n]}$. 

Let $\vec{x} = a^k \phi b^\ell$,$\vec{x'} = a^m \psi b^p$, $\vec{x} \neq \vec{x'}$, where $k, \ell, m, p \geq 1$, $\phi$ and $\psi$ are such that neither is their first letter $a$ nor is their last letter $b$, $|\phi| \geq 0$ and $|\psi| \geq 0$. The letters $a$ and $b$ can be equal. Then we know from Lemma~\ref{lem:lem1a2a3a4a1} that $\vec{x}_{[2, n]} \in M_1(\vec{x'}_{[2, n - 1]}) \ominus M_1(\vec{x}_{[2, n - 1]})$ if and only if one of the following conditions holds
\begin{enumerate}[(i)]
    \item $m = k - 1$, $\ell = p - 1$, $m > \ell$ and $\phi = \psi$, or
    \item $m = k$, $\ell = p - 1$ and $\phi \in M_1(\psi)$.
\end{enumerate}
Moreover, since $\vec{x}_{[1, n - 1]} = \vec{x'}_{[2, n]}$, that is $a^k \phi b^{\ell - 1} = a^{m - 1} \psi b^p$, we have $ k = m - 1, p = \ell - 1$ and $\phi = \psi$. However, this contradicts both the conditions for $\vec{x}_{[2, n]} \in M_1(\vec{x'}_{[2, n - 1]}) \ominus M_1(\vec{x}_{[2, n - 1]})$. This completes the proof.
\end{proof}

After proving Theorem~\ref{thm:doublerecupperbound}, we consider in the following remark how Lemmas~\ref{lem:lem1a2a3a4a1}--\ref{lem:lem1a2a3a4a4} and~\ref{lem:onlytwoofsecondconditions} can be used for determining whether~\eqref{+1Bound} or \eqref{+0Bound} holds with equality, that is, whether $|M_1(\vec{x}) \cap M_1(\vec{x'})|$ is equal to $ |M_1(\vec{x}_{[2, n - 1]}) \cap M_1(\vec{x'}_{[2, n - 1]})| + 1$ or to $|M_1(\vec{x}_{[2, n - 1]}) \cap M_1(\vec{x'}_{[2, n - 1]})|$.
\begin{remark} \label{RmkCharacterization}
    Lemma~\ref{lem:lem1a2a3a4a1} gives a characterization for $\vec{x}_{[2, n]}$ to belong to $M_1(\vec{x'}_{[2, n - 1]}) \ominus M_1(\vec{x}_{[2, n - 1]})$, that is, $\vec{x}_{[2, n]}$ to contribute to the intersection $M_1(\vec{x}) \cap M_1(\vec{x'})$ due to it having greater multiplicity in $M_1(\vec{x'}_{[2, n - 1]})$ than in $M_1(\vec{x}_{[2, n - 1]})$. Similarly, Lemma~\ref{lem:lem1a2a3a4a2} considers the contribution of $\vec{x}_{[1, n-1]}$ to the intersection. Furthermore, Lemmas~\ref{lem:lem1a2a3a4a3} and~\ref{lem:lem1a2a3a4a4} study (respectively) when $\vec{x'}_{[2, n]}$ and $\vec{x'}_{[1, n-1]}$ belong to $M_1(\vec{x}_{[2, n - 1]}) \ominus M_1(\vec{x'}_{[2, n - 1]})$, that is, when they contribute to the intersection $M_1(\vec{x}) \cap M_1(\vec{x'})$. Finally, in Lemma~\ref{lem:onlytwoofsecondconditions}, we study how $\vec{x}_{[2, n]}$, $\vec{x'}_{[2, n]}$, $\vec{x}_{[1, n-1]}$ and $\vec{x'}_{[1, n-1]}$ may contribute to the intersection if the conditions of Lemmas~\ref{lem:lem1a2a3a4a1}--\ref{lem:lem1a2a3a4a4} are not satisfied. In conclusion, the previous lemmas provide a way to characterize whether $|M_1(\vec{x}) \cap M_1(\vec{x'})|$ is equal to $ |M_1(\vec{x}_{[2, n - 1]}) \cap M_1(\vec{x'}_{[2, n - 1]})| + 1$ or to $|M_1(\vec{x}_{[2, n - 1]}) \cap M_1(\vec{x'}_{[2, n - 1]})|$.
\end{remark}

In the following example, we demonstrate how the previous results can be used to show that the bound in \eqref{+0Bound} is tight. Notice that the code $C=\{0010,0110\}\subseteq \mathbb{Z}_2^4$ in the remark is the one mentioned in Section~\ref{sec:N_q^m(C;n,1)}.

\begin{example}\label{+0Conditions}
    Let  $C=\{0010,0110\}\subseteq \mathbb{Z}_2^4$. It is easy to check that $M_1(0010)\cap M_1(0110)=\{00110, 00110, 01010\}$ and thus, $N_2^m(C;4,1)=3$. Let $C'\subseteq\mathbb{Z}_2^4$ be the code with eight codewords described in \eqref{subsetconstr} of Theorem~\ref{constr}. In order to show that $N_2^m(C';6,1)\le 3$ it is enough to consider codeword pairs of $C'$ that have the same first symbols and the same last symbols (since, for all other pairs, $|M_1(\vec{x})\cap M_1(\vec{x'})|\le 2$). There are four such pairs $\{a0010b,a0110b\}$, where $ab\in \mathbb{Z}_2^2$. It suffices to show that for such codeword pairs the bound \eqref{+0Bound} holds with equality (instead of \eqref{+1Bound}). We consider here only the case $a=b=0$, i.e., the codeword pair $\vec{x}=000100$ and $\vec{x'}=001100$ (the other pairs work analogously). 
    For these words, we have (using the notation of Lemmas~\ref{lem:lem1a2a3a4a1}--\ref{lem:lem1a2a3a4a4})
    $k=3$, $\ell=2$, $m=2$, $p=2$, $\phi=1$ and $\psi=11$.
    The conditions 1) and 2) of Lemma~\ref{lem:lem1a2a3a4a1} do not hold since $\ell\neq p-1$. The conditions 1) and 2) of  Lemma~\ref{lem:lem1a2a3a4a2} (resp. Lemma~\ref{lem:lem1a2a3a4a3})  are not satisfied due to $m-1\neq k$ (resp. $p\neq \ell-1$). Let us then consider Lemma~\ref{lem:lem1a2a3a4a4}. The condition 1) does not hold since $p-1\neq \ell$. For the condition 2) we do have $k-1=m$ and $p=\ell$ but $m(\psi,M_1(\phi))=m(11,M_1(1))=2>m=2$ does not hold.
    Therefore, the only thing we need to check for the equality holding in  \eqref{+0Bound} is that neither of the conditions of Lemma~\ref{lem:onlytwoofsecondconditions} are satisfied. This is indeed the case as  $\vec{x}_{[2, n]}=00100 \neq 00110=\vec{x'}_{[1, n - 1]}$ and
         $\vec{x}_{[1, n - 1]}=00010 \neq 01100= \vec{x'}_{[2, n]}.$    
\end{example}

Notice that an alternative way to prove that $N_2^m(C';6,1)\le 3$ can be found using Theorem~\ref{thm:extremalmulteven}; indeed, by Corollary~\ref{cor:exactboundt=1}, we have $N_2^m(C';6,1)\le N_2^m(6,1) = 4$, and $C'$ do not contain any extremal pairs indicated by Theorem~\ref{thm:extremalmulteven}.

\bibliographystyle{IEEEtran}
\bibliography{main}

\end{document}